\documentclass[11pt]{article}

\usepackage[utf8]{inputenc}
\usepackage[T1]{fontenc}
\usepackage{newtxtext}
\usepackage[varbb]{newtxmath}
\usepackage[letterpaper,margin=1.5in]{geometry}
\usepackage{amsmath}

\usepackage{amssymb}

\usepackage{amsthm}
\usepackage{algorithm}
\usepackage{algpseudocode}
\usepackage{graphicx}
\usepackage[breaklinks=true,colorlinks=true,linkcolor=blue!60!black,citecolor=green!50!black,urlcolor=blue!60!black]{hyperref}
\usepackage{rotating}
\usepackage{booktabs}
\usepackage{multirow}
\usepackage{longtable}
\usepackage{tikz}
\usetikzlibrary{positioning,arrows.meta,calc,shapes.geometric}
\usepackage{xcolor}
\usepackage{listings}
\usepackage[numbers,sort&compress]{natbib}
\usepackage{caption}
\usepackage{subcaption}
\usepackage{float}
\usepackage{enumitem}
\usepackage{microtype}

\newtheorem{theorem}{Theorem}[section]
\newtheorem{definition}[theorem]{Definition}
\newtheorem{lemma}[theorem]{Lemma}
\newtheorem{proposition}[theorem]{Proposition}
\newtheorem{conjecture}[theorem]{Conjecture}
\newtheorem{remark}[theorem]{Remark}
\newtheorem{corollary}[theorem]{Corollary}

\newcommand{\statuscompiled}{\textcolor{green!55!black}{\small[\,compiled\,]}}
\newcommand{\statusspec}{\textcolor{blue!55!black}{\small[\,specified\,]}}

\newcommand{\statusconj}{\textcolor{red!55!black}{\small[\,conjectural\,]}}

\lstdefinestyle{lean}{
 language={},
 basicstyle=\ttfamily\small,
 keywordstyle=\bfseries\color{blue!70!black},
 commentstyle=\itshape\color{green!50!black},
 stringstyle=\color{red!60!black},
 morekeywords={def,theorem,structure,instance,where,let,if,then,else,fun,by,exact,inductive,deriving,import,open,return,do,for,end,match,with,axiom,noncomputable},
 morecomment=[l]{--},
 frame=single,
 framerule=0.4pt,
 rulecolor=\color{gray!50},
 backgroundcolor=\color{gray!5},
 breaklines=true,
 columns=fullflexible,
 keepspaces=true,
 showstringspaces=false,
 tabsize=2,
 xleftmargin=1em,
 framexleftmargin=0.5em,
 aboveskip=8pt,
 belowskip=8pt,
 literate={∈}{$\in$}1 {∀}{$\forall$}1 {∃}{$\exists$}1
 {≤}{$\leq$}1 {≥}{$\geq$}1 {→}{$\rightarrow$}1
 {⇒}{$\Rightarrow$}1 {∧}{$\wedge$}1 {∨}{$\vee$}1
 {¬}{$\neg$}1 {·}{$\cdot$}1 {ℝ}{$\mathbb{R}$}1
 {ℕ}{$\mathbb{N}$}1 {ℚ}{$\mathbb{Q}$}1
 {ε}{$\varepsilon$}1 {θ}{$\theta$}1 {κ}{$\kappa$}1
 {π}{$\pi$}1 {Π}{$\Pi$}1 {Σ}{$\Sigma$}1 {σ}{$\sigma$}1
 {δ}{$\delta$}1 {Δ}{$\Delta$}1 {β}{$\beta$}1
 {↔}{$\leftrightarrow$}1 {≠}{$\neq$}1 {∩}{$\cap$}1 {∅}{$\emptyset$}1
 {×}{$\times$}1
 {⟨}{$\langle$}1 {⟩}{$\rangle$}1 {α}{$\alpha$}1 {λ}{$\lambda$}1
 {μ}{$\mu$}1 {ν}{$\nu$}1 {ρ}{$\rho$}1 {τ}{$\tau$}1 {ω}{$\omega$}1
 {Ω}{$\Omega$}1 {Γ}{$\Gamma$}1 {Λ}{$\Lambda$}1 {γ}{$\gamma$}1
 {ζ}{$\zeta$}1 {η}{$\eta$}1 {ψ}{$\psi$}1 {φ}{$\varphi$}1
 {⊆}{$\subseteq$}1 {⊇}{$\supseteq$}1 {⊂}{$\subset$}1 {∪}{$\cup$}1
 {∑}{$\sum$}1 {∏}{$\prod$}1 {√}{$\sqrt{}$}1 {∞}{$\infty$}1
 {⊥}{$\bot$}1 {⊤}{$\top$}1 {⟦}{$\llbracket$}1 {⟧}{$\rrbracket$}1,
}
\algrenewcommand\algorithmicrequire{\textbf{Input:}}
\algrenewcommand\algorithmicensure{\textbf{Output:}}

\newcommand{\leanfile}[1]{%
  \par\noindent\fcolorbox{yellow!50!black}{yellow!25}{%
    \parbox{\dimexpr\linewidth-2\fboxsep-2\fboxrule}{%
      \raggedright\small\textsc{Lean 4: pedagogical sketch, full source in \texttt{#1}}%
    }}\par\nopagebreak%
}
\newcommand{\leansketch}[1]{%
  \par\noindent\fcolorbox{orange!50!black}{orange!20}{%
    \parbox{\dimexpr\linewidth-2\fboxsep-2\fboxrule}{%
      \raggedright\small\textsc{Lean 4: illustrative sketch, #1}%
    }}\par\nopagebreak%
}
\newcommand{\leanverbatim}[1]{%
  \par\noindent\fcolorbox{green!50!black}{green!15}{%
    \parbox{\dimexpr\linewidth-2\fboxsep-2\fboxrule}{%
      \raggedright\small\textsc{Lean 4: #1}%
    }}\par\nopagebreak%
}

\begin{document}

\begin{center}
{\LARGE\bfseries Proof-Carrying Certificates for LLM Pipelines:\\[6pt]
A Trust-Boundary Architecture\par}
\vspace{14pt}
{\large George Koomullil}\\[4pt]
{\itshape Ascendr, Inc.}\\[2pt]
{\small \texttt{george@ascendr.ai}}\\[14pt]
{\small \today}
\end{center}

\begin{center}\textbf{Abstract}\end{center}
\begin{quote}\small
The framework verifies the deterministic structured computations surrounding a large language model rather than the model itself. Building on a patent-analysis hybrid AI + Lean~4 architecture~\cite{koomullil2026patent} (henceforth the \emph{precursor}), we extend the trust-boundary discipline from one specialized domain to the generic interfaces of modern LLM pipelines. The precursor's architectural primitives are inherited unchanged: an explicit trust boundary separating a probabilistic ML layer from a deterministic Lean~4 layer, certificate validity defined as kernel type-check plus a \texttt{sorry}-free \texttt{\#print axioms} audit against the trusted set $\Omega = \{\texttt{propext}, \texttt{Classical.choice}, \texttt{Quot.sound}\}$, and a four-tier verification-status convention.

The technical contribution divides into three local certificate families for generic LLM pipelines and two operators that combine them. The three families are conflict-aware bilattice grounding (with an emission-gate soundness lemma), embedding sensitivity and paraphrase stability, and Hoare-style agent action. The two operators are a Maximal Certifiable Residue, which turns abstention into the maximum-weight certifiable residue of an answer (with canonical lexicographic tie-breaking and audit-logged dropped claims), and a Compositional Stability theorem, which takes the per-layer gains and margins from the three families and produces a closed-form pipeline-wide perturbation budget. The three families and their consolidated per-call deliverable, the \emph{Universal Assurance Card} (a Lean~4 record with a decidable consistency predicate, currently \emph{specified} in Lean form rather than compiled in the artifact, with promotion to a compiled \texttt{AssuranceCard.lean} module flagged as a focused next-step engineering exercise), are intended as the per-call deliverable of any high-stakes deployment, including patent / technical-intelligence pipelines, legal case-law retrieval, regulated-finance disclosure analysis, clinical decision support, and agentic systems with irreversible side effects. A compiled Lean~4 reference artifact (Lean~v4.30.0-rc2, Mathlib) covers all 22 certificate types of the framework, with 17 of 46 kernel-audited declarations axiom-free, the remaining 29 dependent only on $\Omega$ plus explicitly declared non-$\Omega$ assumptions partitioned by tier (mathematical placeholders, cryptographic assumptions, and ML/human oracles), and zero uses of \texttt{sorryAx} or \texttt{Lean.ofReduceBool}. All three certificate families are empirically tested through four registered pilots, with Pilot~A's v3/v4/v5 hypothesis revisions documented in §\ref{sec:pilot_a:hypotheses} and Pilot~C's direct-injection design deviation documented in §\ref{sec:agents:empirical}: grounding on HotpotQA in §\ref{sec:pilot_a}, embedding sensitivity in matched short-form and long-form settings in §\ref{sec:embedding:empirical}--§\ref{sec:embedding:longform}, and Hoare-style agent action on a controlled filesystem sandbox in §\ref{sec:agents:empirical}. The contribution is a paper-sized refinement of the precursor's decomposition principle: at every interface where an LLM output becomes structured, the downstream computation can be made machine-checkable, and the resulting certificates compose into a per-call deliverable shipped in an explicit assurance mode.
\end{quote}
\smallskip
\noindent\textbf{Keywords:} \textit{formal verification, dependent type theory, Lean 4, large language models, retrieval-augmented generation, agent verification, proof certificates, proof-carrying code, hallucination, prompt injection}

\smallskip
\noindent\textbf{How to read this paper.} The paper has three audiences. A reader focused on the formal contributions should read \S\ref{sec:architecture} (architecture), \S\S\ref{sec:grounding}--\ref{sec:composition} (the three certificate families and the two operators), and \S\ref{sec:artifact} (compiled Lean~4 artifact). A reader focused on the empirical claims should read \S\ref{sec:embedding:empirical}--\S\ref{sec:embedding:longform} (Pilots~B and B$'$), \S\ref{sec:agents:empirical} (Pilot~C), and \S\ref{sec:pilot_a} (Pilot~A); the consolidated comparison is in \S\ref{sec:limits:empirical}. A reader focused on deployment should read \S\ref{sec:card} (Universal Assurance Card), \S\ref{sec:cases} (case studies), \S\ref{sec:threat} (threat model and adoption order), and \S\ref{sec:limits} (limitations and staged research program). Appendix~\ref{app:notation} lists the mathematical symbols used in the paper, and Appendix~\ref{app:abbreviations} expands the abbreviations.

\section{Introduction}
\label{sec:intro}

Modern LLM pipelines~\cite{brown2020gpt3} produce point estimates without theorems. An embedding similarity of $0.84$ is mathematically precise yet materially meaningless if a paraphrase of the same query yields $0.71$~\cite{reimers2019sbert,casadio2024nlp}. A clinical decomposition can report ``95\% support'' for a recommendation while one of its atomic claims is \emph{contradicted} by the very document that grounded the rest~\cite{ji2023hallucination,manakul2023selfcheck}. A chain of thought whose final step reads as confidently correct may rest transitively on a step the model never actually used~\cite{turpin2023cot,lanham2023measuring}. An agent's tool call may be syntactically well-formed yet semantically unsafe~\cite{schick2023toolformer,yao2023react}. None of these failures is an implementation bug. Each is the structural consequence of probabilistic pipelines whose intermediate outputs carry no audit-grade guarantee.

The precursor paper~\cite{koomullil2026patent} proposed an architectural answer in a single specialized domain (patent analysis): identify the structured sub-computations between LLM outputs and downstream decisions, draw an explicit trust boundary, and emit a Lean~4 certificate for each interface. The present paper extends that decomposition to the generic interfaces of modern LLM pipelines and gives a small, deployable per-call deliverable, the \emph{Universal Assurance Card}, that an LLM provider serving high-stakes deployments can ship with each governed response under a declared assurance mode. The trust-boundary architecture, certificate-validity definition, and verification-status convention are inherited unchanged; the technical content develops what those primitives make possible at LLM-pipeline scale.

The thesis is the precursor's decomposition principle restated for LLMs in general: the trust boundary should be pushed as far toward the raw input as possible, with everything beyond it made machine-checkable. Sections~\ref{sec:relation} and~\ref{sec:architecture} make precise what is inherited and what is new. A deployment-side corollary, sharpest for agentic systems, is that the LLM is the proposer rather than the last writer to the world: it emits typed action proposals, and a small kernel-audited runtime is the final authority that permits or blocks side effects (\S\ref{sec:agents}).

\subsection{Two failures with a shared structure}
\label{sec:intro:motivation}

\paragraph{Embedding sensitivity.} A retrieval pipeline encodes a query $x$ as $E(x) \in \mathbb{R}^d$, encodes a corpus $\{d_i\}$, and returns the documents whose cosine similarity ranks in the top-$k$. The numerics are precise; nothing in them tells us whether $E(x)$ is stable under semantically equivalent rewordings of $x$. If a synonym substitution produces $E(x')$ at $L_2$ distance $0.35$ from $E(x)$ and the decision margin is $0.10$, the top-$k$ ranking has silently changed. \emph{How sensitive is this embedding to small variations of the text?} deserves a formal answer (\S\ref{sec:embedding}).

\paragraph{Silent contradiction.} A clinical decision-support tool decomposes a recommendation into atomic claims and reports high evidential support for each. One claim, ``the patient's CrCl of $22$~mL/min is within the drug's renal range,'' is directly contradicted by the FDA label the system itself retrieved. A support-only score reports $W_{\mathrm{ground}} = 95\%$ and the recommendation ships. The contradiction is invisible to any score that reports only support and never refutation. \emph{Given evidence that speaks both for and against a claim, what should the pipeline report?} deserves a formal answer (\S\ref{sec:grounding}).

These failures share a structural feature. The raw ML artifact (an embedding vector, a generated answer) is not a natural object of formal verification. But each \emph{induces} structured objects on which deterministic theorems are provable: sensitivity envelopes over finite declared perturbation families, per-claim signed-support maps over retrieved chunks, reasoning directed acyclic graphs (DAGs) with explicit premise sets, typed action proposals with preconditions. We verify these induced structured objects, not the raw outputs. The LLM stays below a trust boundary; every interface where its output becomes structured is replaced by a kernel-audited certificate.

\subsection{What ``formal verification'' means here}
\label{sec:intro:meaning}

We distinguish two senses throughout. \emph{Semantic verification} would assert correctness against external reality (``this brief is legally sound,'' ``this diagnosis is correct''); it requires a formalization of the world and we do not attempt it. \emph{Computational verification} asserts that a stated specification holds of an output given the pipeline's inputs (``the weighted coverage of this claim graph is exactly $82.3\%$''; ``the weighted \textsc{Contradicted}-plus-\textsc{Contested} mass of this claim graph is at most $\theta_r$''); it is the target of every certificate in this paper. The conditionality on ML-produced inputs is not a defect, it is the only honest characterization of any system that processes natural language. The improvement is structural: the conditionality is named, machine-auditable, and rejectable, rather than buried inside an opaque confidence score.

\subsection{Contributions}
\label{sec:intro:contributions}

The paper makes eight technical contributions, each strictly beyond the precursor:

\begin{enumerate}[leftmargin=2em,topsep=2pt,itemsep=2pt]
\item \textbf{Three local certificate families for LLM pipelines.} \emph{Conflict-aware bilattice grounding} (\S\ref{sec:grounding}) with the \emph{Emission Gate Soundness Lemma} (Lemma~\ref{thm:no_silent}; conditional on the NLI / decomposition oracles): a Belnap-style four-valued classification of atomic claims; the gate provably blocks emission whenever the weighted-refutation mass exceeds threshold, addressing a failure mode that support-only RAG pipelines structurally cannot detect. \emph{Embedding sensitivity and paraphrase stability} (\S\ref{sec:embedding}): a squared-form robust-similarity bound (Theorem~\ref{thm:robust_decision}) that uses only $R^2_{\mathrm{inv}}$ and certifies $L_2$-drift envelopes on the meaning-invariant family, paired with a separate decidable selective-sensitivity predicate $\Delta^2 = R^2_{\mathrm{sig}} - R^2_{\mathrm{inv}} > 0$ over the same certificate that the policy layer enforces (the certificate carries $\Delta^2$ as a per-query measurement, deployment-readiness signal in the Universal Card). \emph{Hoare-style agent action} (\S\ref{sec:agents}): per-step preconditions and postconditions discharged by the Lean kernel, chained into a trajectory certificate that carries no axioms at all in the compiled artifact. Each family is given as a compiled Lean~4 structure with decidable proof fields and an explicit list of named non-$\Omega$ assumptions partitioned by tier (only tier-4 are persistent ML/human oracles), and each is empirically tested in this paper (grounding via Pilot~A; embedding sensitivity via Pilots~B and B$'$; Hoare action via Pilot~C; see contributions 6--8 below).

\item \textbf{Two operators on certificates that turn the local catalogue into a pipeline-level guarantee.} \emph{Maximal Certifiable Residue} (Theorem~\ref{thm:mcr_maximality}, \S\ref{sec:mcr}): a formal abstention operator returning the maximum-weight certifiable subset of an answer (with canonical lexicographic tie-breaking), proved weight-optimal, idempotent, and monotone in constraints; the operator that pairs with the bilattice grounding certificate to convert a flagged contradiction into a residue that ships the certifiable claims and audit-logs the dropped ones. \emph{Compositional Stability} (Theorem~\ref{thm:composition}, \S\ref{sec:composition}, with two-layer base case Lemma~\ref{lem:composition_two_layer} compiled and the $n$-layer lift specified by structural induction over a \texttt{List.foldl} of certified layers): a closed-form pipeline-wide budget $B_\Pi = \min_i m_i / \prod_{j<i} g_j$ that takes the per-layer gains $g_j$ and margins $m_i$ supplied by the three certificate families above and produces a single inequality bounding what survives at the pipeline's final output.

\item \textbf{The Universal Assurance Card} (\S\ref{sec:card}): a Lean~4 record (\emph{specified}, with a decidable consistency predicate \texttt{VerdictConsistent}) carrying one of four verdicts (\textsc{Certified}, \textsc{Partial}, \textsc{Residue}, \textsc{Abstain}) plus a vector of stability, evidence, calibration, scope, freshness, safety, residue, and provenance fields. A separately-published \texttt{AssurancePolicy} maps the same card to accept/reject/downgrade for any deployment context. We exhibit two filled-out cards from the case studies (\S\ref{sec:cases}); promoting \texttt{AssuranceCard.lean} from specified to compiled is a focused engineering step (Mathlib-light, decidable-only) flagged in \S\ref{sec:limits}.

\item \textbf{Compiled reference artifact} covering all 22 certificate types of the broader catalogue. Of 46 kernel-audited declarations, 17 depend on no axioms at all and 29 depend only on $\Omega$ plus explicitly declared non-$\Omega$ assumptions partitioned into tier-2 mathematical placeholders, tier-3 cryptographic assumptions, and tier-4 ML/human oracles. Zero uses of \texttt{sorryAx} or \texttt{Lean.ofReduceBool}. We give a representative axiom audit (\S\ref{sec:artifact}, Appendix~\ref{app:axiom_audit}; the build-emitted full audit is in \texttt{EmbeddingSensitivity/AxiomAudit.lean}) and split the named axioms into a four-tier assumption hierarchy (kernel, mathematical placeholder, cryptographic, ML/human oracle), of which only tier~4 is structurally conditional.

\item \textbf{An impossibility \emph{conjecture} extending the precursor's three impossibility propositions.} A new \emph{Calibration Impossibility without Ground Truth} (Conjecture~\ref{prop:calibration_imposs}): no system that emits a confidence score without exogenous labeled data can prove its score calibrated. We state this as a conjecture rather than a theorem because it requires a careful Vovk-style exchangeability framing we have not yet formalized; the architectural consequence (calibration must be grounded in exchangeable held-out data) motivates the conformal / scope-of-validity discipline regardless of formal status (\S\ref{sec:impossibility}).

\item \textbf{Pilot~A: the conflict-aware grounding certificate works on a public benchmark.} The v5 frozen-protocol rerun on HotpotQA dev distractor with adversarial perturbations (\S\ref{sec:pilot_a}; the v3/v4/v5 evolution and what was rewritten is documented there) shows the certificate catches 100\% of injected contradictions at a false-block rate of 6.4\%, an operating point the cosine-similarity baseline cannot reach: B1's lowest FBR at catch=1.0 is 0.266 at $\theta_{B1}=0.90$. All six v5-registered hypotheses pass.

\item \textbf{Pilots~B and B$'$: the embedding-sensitivity certificate is sound everywhere, but its premise is input-length-dependent.} A matched two-domain study on HotpotQA (\S\ref{sec:embedding:empirical}, \S\ref{sec:embedding:longform}) finds the certificate's central inequality $\Delta^2 > 0$ holds in only 20--30\% of short adversarial queries but in 98--100\% of long-form paragraphs across every edit-type combination. The certificate is therefore deployment-supported on the long-form HotpotQA slice and a candidate for clinical, legal, and technical long-form retrieval pending domain-specific replication, while remaining scope-narrowed for short adversarial queries; this is the kind of empirical scoping result the framework's ``soundness machinery + per-call premise'' posture is designed to surface.

\item \textbf{Pilot~C: the Hoare-style agent-action certificate dominates a deny-list baseline on injected unsafe actions, and its FAIL on the strictest hypothesis surfaces a localizing diagnostic of exactly the kind a typed-guarantee pilot is designed to deliver.} A pre-registered controlled-sandbox benchmark (\S\ref{sec:agents:empirical}) shows the Lean-checked Treatment harness blocks 67\% of injected unsafe actions versus 28\% for a regex deny-list and 0\% unaudited, a 38.9-point dominance over the deny-list baseline, with zero false blocks on benign destructive proposals and 100\% audit-log informativeness on every block. Four of five hypotheses pass; on the fifth (HC1, the strictest 95\% block-rate threshold), the audit log of unblocked attacks names exactly which predicate clauses need to be extended to close the gap. This is the empirical signal the framework is designed to produce: typed guarantees that fail predict their own remediation, rather than degrading to opaque false-negatives.
\end{enumerate}

\paragraph{Roadmap.} The rest of the paper develops these in order. \S\ref{sec:relation} states what is inherited from the precursor and what is new; \S\ref{sec:architecture} recaps the trust-boundary architecture and the per-oracle audit procedure; \S\ref{sec:impossibility} positions the framework against three impossibility propositions and the calibration conjecture. The three local certificate families are \S\ref{sec:grounding} (conflict-aware bilattice grounding), \S\ref{sec:embedding} (embedding sensitivity, with Pilots~B and B$'$), and \S\ref{sec:agents} (Hoare-style agent action, with Pilot~C). The two operators are \S\ref{sec:mcr} (maximal certifiable residue, the abstention operator that pairs with grounding) and \S\ref{sec:composition} (compositional stability, the pipeline-level composition theorem that takes the three local certificates as inputs and produces a closed-form end-to-end perturbation budget). \S\ref{sec:card} introduces the Universal Assurance Card; \S\ref{sec:cases} works two filled-out cards on case studies; \S\ref{sec:pilot_a} reports Pilot~A on the grounding certificate. \S\ref{sec:artifact} is the compiled Lean~4 reference artifact. \S\ref{sec:threat} maps threats to defender certificates; \S\ref{sec:catalogue} sketches the extended catalogue; \S\ref{sec:limits} states limitations and the staged research program; the conclusion follows.

\subsection*{Artifact at a glance}

\begin{center}\fbox{\parbox{0.92\textwidth}{\small
\begin{center}\textbf{Compiled Lean~4 reference artifact} (\texttt{lean\_artifact/}, Lean v4.30.0-rc2 + Mathlib)\end{center}
\begin{itemize}[leftmargin=1.5em,topsep=2pt,itemsep=1pt]
\item \textbf{22 certificate types} compiled across \textbf{25 modules}.
\item \textbf{46 kernel-audited declarations}: \textbf{17 axiom-free}; remaining 29 use only $\Omega$ plus explicitly declared non-$\Omega$ assumptions, partitioned into tier-2 mathematical placeholders, tier-3 cryptographic assumptions, and tier-4 ML/human oracles.
\item \textbf{Zero \texttt{sorry}}; \textbf{zero \texttt{native\_decide}}; \textbf{zero \texttt{Lean.ofReduceBool}} anywhere in the transitive axiom set.
\item \textbf{Build}: $\approx 10$~s on commodity hardware with warm Mathlib cache (\texttt{lake build}).
\item \textbf{Audit wrapper}: $<100$~LOC of Python or Rust around \texttt{lean} and \texttt{\#print axioms}.
\item \textbf{Persistent oracles} (tier-4, conditional on natural-language correctness): exactly five (\texttt{ParaphraseOracle}, \texttt{SignedSupport\allowbreak Oracle}, \texttt{Decomposition\allowbreak Oracle}, \texttt{StepConfidence\allowbreak Oracle}, \texttt{IIDSamples}); see Table~\ref{tab:oracle_recipe}.
\end{itemize}
This is operational, not aspirational. A representative axiom audit is in Appendix~\ref{app:axiom_audit}; the full \texttt{\#print axioms} output is regenerated by \texttt{EmbeddingSensitivity/AxiomAudit.lean} on every build, and per-cert evaluation numbers are in Table~\ref{tab:artifact_eval}. The Universal Assurance Card is a \emph{consolidator} of the 22 certificate types (not a 23rd type); its schema (\S\ref{sec:card}) is currently \emph{specified} in Lean form, and adding \texttt{AssuranceCard.lean} to the artifact as one additional consolidator module is a focused engineering step (\S\ref{sec:limits:card}).
}}\end{center}

\subsection{Scope and limitations}
\label{sec:intro:nonclaims}

The trust boundary is structural: the LLM sits below it, and every certificate is conditional on named non-$\Omega$ assumptions partitioned by tier (only tier-4 are persistent ML/human oracles); their correctness is the condition under which the guarantee holds. The framework makes that dependency auditable rather than eliminating it. The grounding certificate catches hallucinations that the evidence contradicts and the self-consistency certificate catches sample-unstable hallucinations; hallucinations that are stable across samples and aligned with a corrupted retrieval corpus are outside the framework's reach.

The empirical scope of the paper is the four pre-registered pilots (each reported under its v5 frozen-protocol rerun; the v3/v4/v5 evolution is documented in \S\ref{sec:pilot_a:hypotheses}): Pilot~A (\S\ref{sec:pilot_a}) on the conflict-aware grounding certificate on HotpotQA dev distractor with adversarial perturbations, in which all six v5-registered hypotheses pass; Pilots~B and B$'$ (\S\ref{sec:embedding:empirical}, \S\ref{sec:embedding:longform}) on the embedding-sensitivity certificate in matched short-form and long-form settings, in which the central inequality $\Delta^2 > 0$ holds in 20--30\% of cases on adversarial short queries but in 98--100\% of cases on long-form paragraphs across every edit-type combination; and Pilot~C (\S\ref{sec:agents:empirical}) on the Hoare-style agent-action certificate, in which the Treatment harness blocks 67\% of injected unsafe actions versus 28\% for a deny-list baseline (HC3 PASS) with 100\% audit-log informativeness on every block. Each certificate's Lean-checked soundness is unaffected; deployment-readiness depends on the input class and on the completeness of the declared predicate / family set, which the deployer must measure on the target corpus before relying on the certificate.

The pilots cover HotpotQA dev distractor and a controlled filesystem sandbox; replication on patent / technical-intelligence, legal, regulated-finance, clinical, or other high-stakes corpora is forward work. \S\ref{sec:limits:pilot} outlines a Pilot~D family of Universal Card validations across these domains, any one of which is an independent next-step empirical contribution. The two case studies in \S\ref{sec:cases} use illustrative numerical values rather than measured ones; they are intended to exhibit Card behavior, not to make empirical claims. The framework is designed for deployments in which per-call error cost exceeds per-call verification cost by roughly an order of magnitude (patent / technical-intelligence pipelines, legal retrieval, clinical decision support, regulated finance, agentic systems with irreversible side effects) and is not intended for creative writing or open-ended generation.

\section{Relation to the Precursor and Related Work}
\label{sec:relation}

\subsection{Inherited from the precursor}
\label{sec:relation:inherit}

The precursor~\cite{koomullil2026patent} introduced a hybrid ML + Lean~4 pipeline for patent analysis. From it we inherit, without re-derivation, the following architectural primitives:

\begin{itemize}[leftmargin=1.5em,topsep=2pt,itemsep=2pt]
\item \emph{Trust-boundary architecture.} Raw inputs flow into an ML/NLP layer below a trust boundary; the formal layer (Lean~4) consumes ML outputs as typed inputs and emits a certificate whose validity is independent of model weights, prompts, or random seeds. The boundary is not metaphorical; it is encoded as Lean type parameters and named axioms.
\item \emph{Certificate validity = kernel type-check + sorry-free $\Omega$-audit.} A certificate $c$ is valid iff $c$ type-checks at its declared type and \texttt{\#print axioms} reports a transitive set $\subseteq \Omega = \{\texttt{propext}, \texttt{Classical.choice}, \texttt{Quot.sound}\}$ plus declared named oracles. The two-step check rejects the naïve sorry-generator counter-example that motivated the precursor's audit discipline.
\item \emph{Verification-status convention.} Every formal claim carries one of four statuses, \emph{compiled} (in the artifact, kernel-checked, $\Omega$-audited modulo declared oracles), \emph{specified} (rigorous obligation, not yet compiled), \emph{architecturally enforced} (encoded in a type signature), or \emph{conjectural} (informal). We use the same convention here, badged at every theorem and definition.
\item \emph{Conditional-guarantee discipline.} Every certificate's guarantee is conditional on named non-$\Omega$ assumptions partitioned by tier (only tier-4 are persistent ML/human oracles); the conditionality is auditable by inspecting axioms rather than re-running the pipeline.
\item \emph{Three structural impossibility results.} Certificate impossibility for stochastic samplers (Prop.~\ref{prop:certimposs}); compositional unsoundness of marginal flat scoring (Prop.~\ref{prop:dependency}); stochastic non-reproducibility (Prop.~\ref{prop:stochastic}). We restate them in LLM-pipeline form in \S\ref{sec:impossibility} and add a fourth, stated more cautiously as Conjecture~\ref{prop:calibration_imposs}.
\end{itemize}

We do not re-derive these here. Readers seeking the constructive justification (the naïve sorry generator counter-example, the kernel-vs-prover distinction, the de Bruijn criterion) are referred to~\cite{koomullil2026patent}.

\subsection{What is new in this paper}
\label{sec:relation:new}

The contribution of the present paper is to drop the patent-domain assumption. Generic LLM pipelines expose structured interfaces (embeddings, retrieval, reasoning DAGs, typed outputs, agent actions) that are ubiquitous but not pre-formalized. We give:

\begin{itemize}[leftmargin=1.5em,topsep=2pt,itemsep=2pt]
\item Three local certificate families for those interfaces (bilattice grounding, embedding sensitivity, Hoare-style agent action) plus two operators (Maximal Certifiable Residue, Compositional Stability), none of which has an analogue in the precursor.
\item A compiled Lean~4 reference artifact spanning 22 certificate types (the precursor compiled one closed path).
\item A per-call deliverable, the Universal Assurance Card, with a decidable consistency predicate and a separately-published application-policy layer; the precursor had no analogue.
\item One additional impossibility result (calibration without ground truth) and case-study walkthroughs in clinical RAG and agentic Hoare execution.
\end{itemize}

\subsection{Adjacent literatures}
\label{sec:relation:related}

\paragraph{Interactive theorem proving and proof-carrying code.} Lean~4~\cite{demoura2021lean4}, Mathlib~\cite{mathlib2025}, the self-verified kernel~\cite{carneiro2024lean4lean}, and PCC~\cite{necula1997pcc,debruijn1980survey} provide the trusted-checking infrastructure. AWS Cedar~\cite{aws2024cedar} is an industrial Lean adoption; refinement-type ecosystems~\cite{rondon2008liquid,vazou2014refinement,swamy2016fstar} provide an SMT-backed compilation target for routine certificate fields. None offer a certificate taxonomy for LLM-pipeline interfaces.

\paragraph{Neural-network verification.} Abstract interpretation~\cite{cousot1977abstract,gehr2018ai2}, randomized smoothing~\cite{cohen2019certified}, Lipschitz analysis~\cite{virmaux2018lipschitz}, and NLP-specific bounded-substitution certificates~\cite{jia2019certified,casadio2024nlp} verify the model. Our certificates are downstream and orthogonal: they consume the model's output as-is and certify the structured computation on top. A smoothed encoder feeds a certified retrieval stack without conflict; \S\ref{sec:embedding}'s sensitivity certificate is designed to compose with smoothing certificates as drop-in sensitivity witnesses.

\paragraph{LLM reliability and RAG.} Hallucination surveys~\cite{ji2023hallucination,huang2023hallucination_survey}, factuality detectors~\cite{manakul2023selfcheck,min2023factscore,bohnet2022attributed}, RAG~\cite{lewis2020rag,karpukhin2020dpr,gao2023rag_survey}, chain-of-thought~\cite{wei2022cot,wang2023selfconsistency}, process-reward models~\cite{lightman2023prm}, and prompt-injection studies~\cite{perez2022injection,greshake2023not,zou2023universal} identify failure modes empirically. We convert each into a certificate type: grounding becomes a bilattice coverage theorem (\S\ref{sec:grounding}), self-consistency becomes a canonicalized lattice (extended catalogue, \S\ref{sec:catalogue}), prompt injection becomes a non-interference theorem (catalogue), paraphrase metamorphic testing becomes a decision-margin certificate (\S\ref{sec:embedding}).

\paragraph{Probabilistic-program verification and conformal prediction.} Kozen~\cite{kozen1979semantics}, pRHL~\cite{barthe2009pRHL}, weakest-pre-expectations~\cite{kaminski2018wp}, conformal prediction~\cite{vovk2005algorithmic,angelopoulos2023gentle,quach2023conformal}, and concentration-of-measure theory~\cite{hoeffding1963,boucheron2013concentration} populate the catalogue's calibration and concentration certificates. The closest existing compositional results are NN Lipschitz composition~\cite{virmaux2018lipschitz} and the precursor's weight-independent robustness; Theorem~\ref{thm:composition} unifies them by allowing each layer to be either Lipschitz-continuous or a discrete gate.

\begin{table}[H]
\centering
\caption{Position relative to four adjacent literatures. ITP/PCC, NNV, LLM-reliability, and probabilistic-program verification each address one side; this paper supplies the certificate architecture for the deterministic sub-computations between LLM outputs and downstream decisions.}
\label{tab:priorgap}
\scriptsize
\setlength{\tabcolsep}{4pt}
\begin{tabular}{@{}p{2.5cm}p{3.4cm}p{2.5cm}p{4.2cm}@{}}
\toprule
\textbf{Literature} & \textbf{Contribution} & \textbf{Stops at} & \textbf{Added here} \\
\midrule
ITP / PCC & Machine-checkable proofs; minimal kernel & No LLM-interface taxonomy & Three certificate families + composition theorem + 22-type artifact \\
NN verification & Input--output properties of models & Model boundary & Composable downstream certificates \\
LLM reliability & Empirical detection / mitigation & Probabilistic only & Each lifted to kernel-checked certificate \\
Probabilistic-program verification & Relational Hoare, conformal, concentration & No LLM-pipeline integration & Per-layer in hybrid architecture \\
\bottomrule
\end{tabular}
\end{table}

\section{Architecture}
\label{sec:architecture}

This section tightens the trust-boundary architecture for LLM pipelines. The architectural primitives are inherited from~\cite{koomullil2026patent}; we do not re-derive them. We state only what changes when the domain widens from patent claims to generic LLM interfaces.

\subsection{Trust boundary}
\label{sec:arch:boundary}

\begin{figure}[H]
\centering
\begin{tikzpicture}[
 node distance=4mm and 10mm,
 box/.style={rectangle, draw, rounded corners=2pt, minimum width=2.5cm, minimum height=1.1cm, align=center, font=\scriptsize},
 red_box/.style={box, fill=red!8, draw=red!60!black},
 green_box/.style={box, fill=green!8, draw=green!60!black},
 gray_box/.style={box, fill=gray!12},
 arrow/.style={-{Stealth[length=2mm]}, thick},
 lbl/.style={font=\scriptsize\itshape}
]
 \node[gray_box] (inp) {Raw inputs\\{\tiny prompt, corpus, tools}};
 \node[red_box, right=of inp] (ml) {ML/NLP layer\\{\tiny LLM, embedder, NLI}};
 \node[green_box, right=of ml] (formal) {Formal verification\\{\tiny Lean 4 + $\Omega$-audit}};
 \node[gray_box, right=of formal] (out) {Output + Universal Card};
 \draw[arrow] (inp) -- (ml);
 \draw[arrow] (ml) -- (formal);
 \draw[arrow] (formal) -- (out);
 \draw[very thick, dashed, red!60!black] ($(ml.north east)!0.5!(formal.north west)+(0,0.25)$) -- ($(ml.south east)!0.5!(formal.south west)+(0,-0.4)$);
 \node[font=\tiny\bfseries\color{red!60!black}, above=10pt of $(ml.north east)!0.5!(formal.north west)$] {TRUST BOUNDARY};
 \node[lbl, below=2mm of ml] {probabilistic, untrusted};
 \node[lbl, below=2mm of formal] {deterministic, audited};
\end{tikzpicture}
\caption{The hybrid architecture, inherited from~\cite{koomullil2026patent}. The ML/NLP layer produces typed artifacts below a trust boundary; the formal layer re-consumes them as typed inputs and emits the Universal Assurance Card and underlying certificates. Conditional on the declared non-$\Omega$ assumptions (tiers 2--4 of the assumption hierarchy, \S\ref{sec:arch:taxonomy}), each certificate carries a kernel-audited guarantee.}
\label{fig:trust_boundary}
\end{figure}

\begin{definition}[Valid LLM Certificate \cite{koomullil2026patent}]\statuscompiled
\label{def:validcert}
A certificate $c$ is \emph{valid} iff (i) $c$ type-checks in the Lean~4 kernel at its declared type, and (ii) $c$'s transitive axiom set, as reported by \texttt{\#print axioms}, is contained in $\Omega \cup \mathrm{Ax}$ where $\Omega = \{\texttt{propext}, \texttt{Classical.choice}, \texttt{Quot.sound}\}$ and $\mathrm{Ax}$ is the certificate's declared set of named non-$\Omega$ axioms, partitioned into tier-2 mathematical placeholders, tier-3 cryptographic assumptions, and tier-4 ML/human oracles (the four-tier hierarchy of \S\ref{sec:arch:taxonomy}). Clause (ii) is the \emph{sorry-free $\Omega$-audit}: it rejects certificates whose proof fields are discharged by Lean's \texttt{sorry} (an axiom inhabiting any type) and certificates that exceed their declared non-$\Omega$ axiom set.
\end{definition}

The motivating naïve-sorry counter-example and the kernel-vs-prover separation are in~\cite{koomullil2026patent}; we omit them here.

\subsection{Where LLM pipelines differ from patents}
\label{sec:arch:diff}

What changes when the domain widens. (1) The ML layer's outputs are no longer pre-formalized DAGs; they are induced by \emph{decomposition oracles} (\S\ref{sec:grounding}) and \emph{paraphrase oracles} (\S\ref{sec:embedding}) that the framework names explicitly and audits as tier-4 axioms. (2) The relevant interfaces span four families (input, aggregation, inference, output) rather than the single DAG-coverage path of~\cite{koomullil2026patent}; see Table~\ref{tab:earlysummary} for the consolidated map. (3) The certificate must travel with each governed response under a declared assurance mode, not just sit alongside a one-off analysis report; we therefore introduce the Universal Assurance Card (\S\ref{sec:card}). (4) The auditor base widens from patent practitioners to providers, regulators, and downstream applications; the assumption hierarchy (\S\ref{sec:arch:taxonomy}) is now the load-bearing communication artifact, brought forward to immediately follow the trust-boundary statement.

\subsection{Attested interface and audit protocol}
\label{sec:arch:attested}

Every certificate carries a deterministic coercion from ML output to Lean input, a cryptographic digest of the ML artifact, and a declared set of named oracle axioms.

\smallskip
\leansketch{attested interface; declared in every certificate module}
\begin{lstlisting}
structure AttestedInterface (MLOut LeanIn : Type) where
 coerce : MLOut -> LeanIn         -- deterministic
 digest : MLOut -> ByteArray      -- SHA-256
 declared_axioms : List String    -- named non-$\Omega$ axioms (tiers 2-4)
\end{lstlisting}

\textbf{Audit protocol.} An auditor receiving a certificate $c$ (i) parses $c$ and extracts the oracle digest and (optional) ML artifact; (ii) optionally recomputes the digest and compares; (iii) runs the Lean~4 kernel on $c$ at its declared type; (iv) runs \texttt{\#print axioms} transitively and aborts if the axiom set $\not\subseteq \Omega \cup \mathrm{Ax}$; (v) confirms the response output matches the certificate's. No model weights, prompt templates, decoding parameters, or random seeds are needed. The audit is reproducible on any machine with Lean~4 and a SHA-256 implementation. For realistic workloads ($d=10^3$, family size $50$, 20 atomic claims, 100 retrieved chunks, 20-step CoT) a kernel re-check completes in sub-second wall-clock; certificate storage is KB--MB per call.

\subsection{Assumption hierarchy}
\label{sec:arch:taxonomy}

The named axioms that appear in any LLM-pipeline certificate split into four tiers, in decreasing order of trust the auditor can place in them:

\begin{enumerate}[leftmargin=1.5em,topsep=2pt,itemsep=2pt]
\item \textbf{Kernel axioms} ($\Omega = \{\texttt{propext}, \texttt{Classical.choice}, \texttt{Quot.sound}\}$). Lean~4's foundational axioms; trusted by Mathlib and externally validated by Lean4Lean~\cite{carneiro2024lean4lean}.
\item \textbf{Mathematical placeholders} (e.g., \texttt{cauchy\_schwarz\_sq}, \texttt{innerProd\_sub}, \texttt{HoeffdingInequality}). Theorems of standard mathematics whose Mathlib formalization is in progress or differently shaped; each reduces to $\Omega$ once aligned.
\item \textbf{Cryptographic assumptions} (e.g., \texttt{HashCollisionResistant}, \texttt{Decode\allowbreak Algorithm\allowbreak Deterministic}). Standard hardness or determinism assumptions; reduce to $\Omega$ when a SHA-256 (or substitute) formalization is plugged in.
\item \textbf{ML / human oracles} (e.g., \texttt{ParaphraseOracle}, \texttt{SignedSupport\allowbreak Oracle}, \texttt{Decomposition\allowbreak Oracle}, \texttt{StepConfidence\allowbreak Oracle}). These encode the correctness of a learned natural-language judgment or a domain expert's sign-off and \emph{cannot} be discharged by mathematics or cryptography.
\end{enumerate}

\noindent\textbf{Why this matters operationally.} Tier~1 is the kernel-trusted base $\Omega$ itself; tiers~2 and~3 are \emph{closable} as Mathlib and the cryptographic-formalization community evolve, each named axiom collapsing into an $\Omega$-only proof once the corresponding library catches up. The long-term trusted base of the structural content is therefore just $\Omega$. Tier~4 is \emph{persistent}: closing it would require formalizing natural-language correctness, which the architecture deliberately does not attempt. A reviewer or regulator can therefore audit a deployment by enumerating tier-4 oracles and asking, for each, what evaluation evidence supports it and whether the deployment's scope-of-validity certificates (\S\ref{sec:catalogue:scope}) cover the input distribution. This is the operational meaning of ``conditional guarantee.'' One subtlety on the axiom-free certificates in the artifact (the Hoare action family in particular): axiom-freeness is a statement about the proof's discharge mechanism, not about the semantic adequacy of the human-authored predicates the proof discharges; a precise account of what axiom-freeness does and does not guarantee is given in \S\ref{sec:artifact:axiom_free}.

\subsection{Per-oracle audit procedure}
\label{sec:arch:oracle_recipe}

The persistent tier-4 oracles in the artifact are exactly five. Table~\ref{tab:oracle_recipe} converts the philosophical point into a deployment procedure: for each oracle, what it asserts, what evaluation artifact supports it, what scope predicate gates it, and how often it must be re-evaluated. The procedure is what a regulator or procurement reviewer should require alongside any Universal Card.

\begin{table}[H]
\centering
\caption{Audit procedure for the five persistent (tier-4) ML/human oracles in the compiled artifact. Each row gives a deployment-ready discipline: assertion, evaluation evidence the deployment must maintain, scope predicate that gates the certificate's acceptance of the oracle, and a re-evaluation cadence. The table is intended to be cited in audit checklists, contracts, and procurement documents.}
\label{tab:oracle_recipe}
\scriptsize
\setlength{\tabcolsep}{4pt}
\begin{tabular}{@{}p{2.5cm}p{3.0cm}p{3.0cm}p{2.4cm}p{1.9cm}@{}}
\toprule
\textbf{Oracle} & \textbf{Asserts} & \textbf{Evaluation evidence} & \textbf{Scope predicate} & \textbf{Re-eval cadence} \\
\midrule
\texttt{Paraphrase\allowbreak Oracle} & The declared families $G_{\mathrm{inv}}, G_{\mathrm{sig}}$ are meaning-preserving / meaning-changing for the input text & Held-out paraphrase benchmark with expert equivalence labels; agreement against a human gold set & input domain matches paraphraser's training distribution; edit operations within declared grammar & quarterly + on paraphraser swap \\
\texttt{Signed\allowbreak Support\allowbreak Oracle} & Three-way NLI gives correct $(\sigma^+, \sigma^-, \sigma^0)$ for every (claim, chunk) pair & Held-out NLI calibration set on domain corpus; agreement with expert NLI annotation & retrieved corpus within calibrated domain ontology; chunk lengths within trained range & monthly + on corpus or NLI-model swap \\
\texttt{Decomposition\allowbreak Oracle} & The decomposition $\bigwedge_i c_i$ canonicalizes to the answer $y$ & Reconstruction-pass rate on a held-out decomposition set; canonicalizer round-trip equality & answer length $\le$ declared bound; vocabulary in domain ontology & per model release + on canonicalizer change \\
\texttt{Step\allowbreak Confidence\allowbreak Oracle} & Per-step confidence $c_i$ is a sound lower bound on $\Pr(\text{step correct})$ & PRM-validated confidence on held-out chains; agreement with human step grading~\cite{lightman2023prm} & chain length $\le$ declared bound; reasoning style within trained distribution & per model release + on PRM update \\
\texttt{IIDSamples} & The $k$-sample sequence is IID under the declared sampling discipline & Proof-of-sampling commitment + decoder-state-isolation log; bounded-difference martingale fallback for non-IID cases & $T > 0$; no in-context state carryover across samples & continuous (per-call) \\
\bottomrule
\end{tabular}
\end{table}

A deployment that ships a Universal Card without the corresponding evaluation evidence for every tier-4 oracle it depends on is not auditable in the framework's sense, even if its Lean kernel checks pass. The procedure is therefore necessary, not optional, and is the natural target for procurement-level standardization.

\subsection{Overview of main results}
\label{sec:arch:earlysummary}

\begin{table}[H]
\centering
\caption{Overview of the paper's main results: three local certificate families, two operators on certificates, and one consolidator (the Universal Card). Each row's status is: \emph{compiled} (in the artifact, kernel-checked plus $\Omega$-audited) or \emph{specified} (rigorous obligation, not yet compiled). All other certificate types appear in the extended catalogue (\S\ref{sec:catalogue}, Appendix~\ref{app:embedding_lean}).}
\label{tab:earlysummary}
\scriptsize
\setlength{\tabcolsep}{4pt}
\begin{tabular}{@{}p{3.5cm}p{4.0cm}p{1.8cm}p{4.4cm}@{}}
\toprule
\textbf{Item (\S)} & \textbf{What it guarantees} & \textbf{Status} & \textbf{Declared non-$\Omega$ assumptions} \\
\midrule
\multicolumn{4}{@{}l}{\emph{Three local certificate families}} \\
Conflict-aware grounding (\S\ref{sec:grounding}; Lem~\ref{thm:no_silent}) & Emission gate soundness: \texttt{emitted} only when refutation mass $\le \theta_r$; four-valued claim status & compiled & \texttt{SignedSupport\allowbreak Oracle}, \texttt{Decomposition\allowbreak Oracle} \\
Embedding sensitivity (\S\ref{sec:embedding}; Thm~\ref{thm:robust_decision}) & Robust similarity bound from $R^2_{\mathrm{inv}}$ over $G_{\mathrm{inv}}$; selective-sensitivity ($\Delta^2 > 0$) is a decidable predicate on the certificate read by the policy layer & compiled & \texttt{ParaphraseOracle} (t4); \texttt{cauchy\_schwarz\_sq}, \texttt{innerProd\_sub} (t2) \\
Hoare agent action (\S\ref{sec:agents}) & Pre/post discharged per step; trajectory chained & compiled (axiom-free) & (none) \\
\midrule
\multicolumn{4}{@{}l}{\emph{Two operators on certificates}} \\
Maximal Certifiable Residue (\S\ref{sec:mcr}; Thm~\ref{thm:mcr_maximality}) & Largest constraint-satisfying subset of an answer; abstention with audit-logged drops & compiled & (none beyond constraint families) \\
Compositional Stability (\S\ref{sec:composition}; Lem~\ref{lem:composition_two_layer} compiled; Thm~\ref{thm:composition} specified) & Pipeline-wide perturbation budget $B_\Pi$ from per-layer chain of gains and margins & lemma compiled; $n$-lift specified & (none beyond per-layer certs) \\
\midrule
\multicolumn{4}{@{}l}{\emph{Per-call deliverable}} \\
Universal Assurance Card (\S\ref{sec:card}) & Verdict consistent with field vector & specified (decidable predicate; \texttt{AssuranceCard.lean} flagged for the artifact) & inherited from sub-certificates \\
\midrule
\multicolumn{4}{@{}l}{\emph{Compiled artifact totals (\S\ref{sec:artifact})}} \\
22 cert types, 46 declarations & 17 axiom-free; 29 with $\Omega$ + tier-2/3/4 named axioms; 0 \texttt{sorry} & compiled & per-cert non-$\Omega$ set \\
\bottomrule
\end{tabular}
\end{table}

\noindent A reader looking for ``which threats does each certificate defend against, and what is the first one to ship?'' should jump to \S\ref{sec:threat} (Threat model and adoption order; Table~\ref{tab:threats} maps threats to defender certificates row-by-row).

\section{Three Impossibility Propositions and a Calibration Conjecture}
\label{sec:impossibility}

Three of these are inherited (in LLM-pipeline form) from~\cite{koomullil2026patent}; the calibration impossibility, stated below as a conjecture rather than a proposition because a careful Vovk-style proof is left for future work, is new to this paper. We state the inherited results in compact form; full motivation is in the precursor.

\paragraph{Formalism.} $S$ denotes a (possibly stochastic) system with input space $\mathcal{X}$, output space $\mathcal{Y}$, output rule $y \sim \mathcal{D}_S(x)$. $H$ denotes Shannon entropy, $H_2$ R\'enyi-2 entropy. A \emph{deterministic validator} $V : \mathcal{Y} \to \{0,1\}$ is computable without consulting randomness.

\begin{proposition}[Certificate Impossibility for Stochastic Samplers]\statusspec\label{prop:certimposs}
If $\Pr_{c \sim \mathcal{D}_S(x_0)}[c \notin \Omega^\star] > 0$ for any $x_0$, where $\Omega^\star = \{c : V(c) = 1\}$, then $S$ cannot unconditionally guarantee that its output is a valid certificate. \emph{Architectural consequence}: a deterministic post-sampler validator is necessary; the hybrid Lean-kernel-plus-$\Omega$-audit (Def.~\ref{def:validcert}) is that validator.
\end{proposition}

\begin{proposition}[Compositional Unsoundness of Marginal Flat Scoring]\statusspec\label{prop:dependency}
For any aggregation $F : [0,1]^n \to [0,1]$ symmetric in marginals and any non-degenerate marginal vector, the Fr\'echet--Hoeffding bounds give two joints with the same marginals but distinct conjunction probabilities, so no marginal-only $F$ can equal the conjunction probability without an exogenous independence assumption. The DAG-conditional analogue (every flat $F$ disagrees with the all-elements rule on some configuration) compiles as a finite-case enumeration. \emph{Architectural consequence}: dependency structure must be carried explicitly (DAGs, lattices, signed support), not collapsed into a single scalar.
\end{proposition}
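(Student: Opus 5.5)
The plan has two halves: a probabilistic impossibility for marginal-only aggregators, and a finite combinatorial analogue for DAGs.

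\textbf{Marginal half.} Fix $n\ge 2$ and a non-degenerate marginal vector $p=(p_1,\dots,p_n)\in(0,1)^n$. Here non-degenerate means that the Fr\'echet--Hoeffding interval for the conjunction has positive length, i.e.\ $\max\bigl(0,\sum_i p_i-(n-1)\bigr)<\min_i p_i$, which holds whenever every $p_i\in(0,1)$.

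I will construct two joint laws on $\{0,1\}^n$ with these marginals.
\begin{itemize}
\item $\mu^{+}$, the comonotone coupling: draw one $U\sim\mathrm{Unif}[0,1]$ and set $X_i=\mathbf{1}[U\le p_i]$. Its conjunction probability is $\min_i p_i$.
\item $\mu^{-}$, the countermonotone (lower Fr\'echet) coupling: lay the failure events $\{X_i=0\}$, of lengths $1-p_i$, disjointly along $[0,1]$ as far as they fit. Its conjunction probability is $\max\bigl(0,\sum_i p_i-(n-1)\bigr)$.
\end{itemize}
Both couplings have marginals exactly $p$. Their conjunction probabilities differ by the non-degeneracy hypothesis.

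Now suppose $F$ equals the conjunction probability for every joint. Then $F(p)$ would have to equal both values, which is a contradiction. Symmetry of $F$ is not even needed; I will note that the argument only uses the fact that $F$ sees marginals alone.

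To make the role of independence explicit, I will remark that adding the exogenous assumption "the joint is the product measure" pins the conjunction to $\prod_i p_i$. That value lies strictly between the two bounds, so independence is exactly what breaks the ambiguity.

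\textbf{DAG-conditional half.} Here I take a fixed DAG over $n$ Boolean elements and the all-elements rule: accept iff every node is satisfied. Let "flat $F$" mean a function of the multiset of node values. This can be equated with a function of node values through a symmetric score, or with a function of their count, since on a finite case split these agree.

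The plan is to exhibit, for each candidate threshold behaviour of $F$, a configuration where $F$ and the all-elements rule disagree.

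The intended witnesses are two configurations with identical value-multisets that the all-elements rule treats differently. With Boolean values, an identical multiset alone cannot separate the rule, so the real distinction must come from DAG-conditional semantics: a node counts as satisfied only if its parents are. Concretely, take a chain $a\to b$ and the configurations $(a,b)=(1,0)$ and $(0,1)$.
\begin{itemize}
\item In $(0,1)$, node $b$ is vacuous because its parent fails.
\item A flat $F$ sees the same marginal vector up to permutation in both configurations, so it assigns them the same value.
\item Both are rejected by the rule, so these two alone give no separation. I will therefore compare $(1,1)$ with a three-node variant in which the extra node is unsupported.
\end{itemize}
The finite enumeration then shows that any symmetric threshold on counts either accepts an unsupported configuration or rejects a supported one. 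This is the step that "compiles as a finite-case enumeration", and in Lean it would be a \texttt{decide} over $\{0,1\}^3$.

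\textbf{Main obstacle.} The hard part is pinning down the definitions of "flat" and "DAG-conditional" precisely enough that the enumeration is genuinely forced. The precursor's conventions should be used there. The probabilistic half is routine once the two Fr\'echet couplings are written down.
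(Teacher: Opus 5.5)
Your marginal half is correct and is the Fr\'echet--Hoeffding argument the statement sketches. The comonotone coupling attains $\min_i p_i$. Laying the failure intervals of lengths $1-p_i$ end to end, wrapping modulo~$1$ once they no longer fit, attains $\max(0,\sum_i p_i-(n-1))$. For $n\ge 3$ this is not a copula, but the bound on the single event $\{X_1=\dots=X_n=1\}$ is still attained. You are also right that symmetry of $F$ is never used there.

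\textbf{The gap is in the DAG half.} Take the definitions you actually fix: $F$ is a function of the multiset of Boolean node values, and the rule is ``accept iff every node is satisfied,'' where a node is satisfied only if its value is $1$ and its parents are satisfied. Then the acceptance set is just the all-ones vector, because every node's own value must be $1$. So the DAG-conditional semantics does nothing at the whole-configuration level, and the all-elements rule is itself flat: $F=\min$, i.e.\ the count threshold ``accept iff count $=n$,'' agrees with it on every configuration. This has three consequences:
\begin{enumerate}
\item The claim you are trying to prove is false in your framework.
\item No two configurations with equal multisets can be separated by the rule. Your $(1,0)$/$(0,1)$ check already shows this, and the ``three-node variant'' compared with $(1,1)$ lives in a different input space, so it gives no contradiction.
\item Your closing assertion, that every symmetric count threshold either accepts an unsupported configuration or rejects a supported one, is refuted by the threshold at $n$.
\end{enumerate}

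\textbf{What is missing is a rule that breaks permutation symmetry.} One example is a verdict on a designated node $t$ (say a dependent claim), true iff $t$ and all its ancestors are present. Take an edge $a\to t$ and an extra node $z$ that is not an ancestor of $t$. The configurations $(a,t,z)=(1,1,0)$ and $(0,1,1)$ have the same multiset but opposite verdicts, so every symmetric $F$ errs on one of them. That is the genuine $\{0,1\}^3$ \texttt{decide} check. Two alternatives also work. One is per-node verdicts on a chain $a\to b\to c$, read at $b$. The other lets the configuration range over DAG shapes with $F$ DAG-oblivious; this version needs no symmetry at all.

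Note the contrast with the marginal half. Here some restriction on $F$, either symmetry or DAG-obliviousness, is essential. For a fixed DAG, a position-dependent $F$ such as $v_t\prod_{a\in\mathrm{anc}(t)} v_a$ reproduces the rule exactly.
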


\begin{proposition}[Stochastic Decoding is not Unconditionally Reproducible]\statusspec\label{prop:stochastic}
For finite output alphabet $\mathcal{Y}$ with distribution $\mu$,
$\Pr[f(x,\omega_1) \neq f(x,\omega_2)] = 1 - \sum_y \mu(y \mid x)^2 \ge 1 - 2^{-H_2(\mu \mid x)}$.
At $H_2 \approx 3$ bits (typical for token distributions at $T = 1$), the disagreement probability already exceeds $87.5\%$. \emph{Architectural consequence}: the Lean kernel produces the same certificate on the same inputs with probability $1$, unconditionally; the formal layer is the deterministic anchor.
\end{proposition}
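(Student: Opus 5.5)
The statement is the collision-probability identity followed by a Jensen-type bound. I would prove it in three short steps.

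\textbf{Step 1 (the identity).} I model two independent decoding runs on the same input $x$ by independent seeds $\omega_1,\omega_2$. Each $f(x,\omega_i)$ is then distributed as $\mu(\cdot\mid x)$, and the two outputs are independent. Since $\mathcal{Y}$ is finite, I decompose the agreement event over the output values:
\[
\Pr[f(x,\omega_1)=f(x,\omega_2)]=\sum_{y\in\mathcal{Y}}\Pr[f(x,\omega_1)=y]\,\Pr[f(x,\omega_2)=y]=\sum_{y}\mu(y\mid x)^2 .
\]
Taking the complement gives the stated equality. Independence of the two seeds is the only modelling assumption used here, and I would state it explicitly. It is exactly the situation the \texttt{IIDSamples} oracle (Table~\ref{tab:oracle_recipe}) names.

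\textbf{Step 2 (the entropy bound).} By definition, R\'enyi-2 entropy is $H_2(\mu\mid x)=-\log_2\sum_y \mu(y\mid x)^2$. Hence $\sum_y\mu(y\mid x)^2=2^{-H_2(\mu\mid x)}$, and the inequality in the statement in fact holds with equality. The ``$\ge$'' form is the one that survives if one prefers Shannon entropy. In that case I would use $H_2\le H$, which follows from Jensen's inequality applied to the concave $\log$:
\[
\log_2\sum_y\mu(y)^2=\log_2\mathbb{E}_{y\sim\mu}[\mu(y)]\ \ge\ \mathbb{E}_{y\sim\mu}[\log_2\mu(y)]=-H(\mu).
\]
This gives $1-\sum_y\mu(y)^2\ge 1-2^{-H(\mu)}$. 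I would remark that the displayed bound is therefore valid for either entropy.

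\textbf{Step 3 (the numerical instance).} At $H_2=3$ bits the disagreement probability is $1-2^{-3}=0.875$. Since $2^{-H_2}$ is strictly decreasing in $H_2$, any $H_2>3$ gives disagreement strictly above $87.5\%$. The word ``already exceeds'' should accordingly be read as ``is at least $87.5\%$ at $H_2=3$, and exceeds it for larger $H_2$.''

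\textbf{Architectural consequence.} The Lean kernel is a deterministic function of its input term. Its output distribution is therefore a point mass, and its collision probability is $1$. By the same identity, the probability that two runs produce different certificates is $1-1=0$.

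The main obstacle is not computational. It is pinning down the probabilistic model: one must make explicit that the two runs use independent seeds with the same marginal $\mu(\cdot\mid x)$. Without independence the identity fails; perfectly correlated seeds, for example, give disagreement $0$. I would therefore state that assumption as a hypothesis of the proposition.
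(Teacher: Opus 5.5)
The paper gives no proof of this proposition (it is marked specified and deferred to the precursor), so there is no argument in the paper to compare against; yours is the natural collision-probability argument. Three parts of it are correct. Step~1 is right, with independence of $\omega_1,\omega_2$ made an explicit hypothesis. So is your observation that, with $H_2(\mu\mid x)=-\log_2\sum_y\mu(y\mid x)^2$, the displayed inequality is actually an equality. And your reading of ``already exceeds $87.5\%$'' (equality at $H_2=3$, strict only for $H_2>3$) is right.

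The Shannon-entropy remark in Step~2, however, runs the wrong way. Your Jensen step $\log_2\sum_y\mu(y)^2\ge -H(\mu)$ is fine. But it yields $\sum_y\mu(y)^2\ge 2^{-H(\mu)}$, and hence $1-\sum_y\mu(y)^2\le 1-2^{-H(\mu)}$. Shannon entropy therefore gives an \emph{upper} bound on the disagreement probability, not a lower one. Concretely, $\mu=(\tfrac12,\tfrac14,\tfrac14)$ has $H(\mu)=\tfrac32$ and disagreement $1-\tfrac38=0.625$, whereas $1-2^{-3/2}\approx 0.646$. So the claim that the bound is ``valid for either entropy'' is false.

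This error is not harmless for Step~3. A measured Shannon entropy of about $3$ bits only caps disagreement at $87.5\%$; it cannot certify that disagreement exceeds it. If you want a ``$\ge$'' form with a different entropy, use R\'enyi order $\alpha\ge 2$, for example min-entropy $H_\infty(\mu)=-\log_2\max_y\mu(y)$. Since $\sum_y\mu(y)^2\le\max_y\mu(y)$, this gives $1-\sum_y\mu(y)^2\ge 1-2^{-H_\infty(\mu)}$. Replace the Shannon remark with this, or drop it.
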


\begin{conjecture}[Calibration Impossibility without Ground Truth]\statusconj\label{prop:calibration_imposs}
Let $S$ emit a confidence score $c(x) \in [0,1]$ without access to the true conditional $P(y \mid x)$ and without exogenous labeled data. Then $S$ cannot prove $c$ calibrated, that is, $\Pr_{(x,y) \sim P}[y = \mathrm{correct} \mid c(x) \in B] = \mathbb{E}_{(x,y) \sim P}[c(x) \mid c(x) \in B]$, without an external distributional hypothesis.
\end{conjecture}

\noindent\emph{Informal argument.} Calibration is a property of $P$ restricted to a level set of $c$. Without access to $P$ and without labels, $S$ observes only its own outputs and cannot recover $\Pr[y = \mathrm{correct} \mid c(x) \in B]$, so any guarantee must rest on an external hypothesis about $P$ (exchangeability with a calibration set, bounded shift, or parametric form), which is precisely what conformal methods axiomatize~\cite{vovk2005algorithmic}. The conjecture is stated rather than proved because a rigorous treatment requires a Vovk-style distribution-free formalization that we have not yet completed; the formal anchor we would build on is the conformal-prediction validity argument of Vovk, Gammerman, and Shafer~\cite{vovk2005algorithmic}, in which a marginal coverage guarantee is recovered only under exchangeability between calibration and test data, and the missing step here is the explicit construction of the exchangeability filtration over the (input, score, outcome) sequence and the proof that no measurable functional of the score-only sub-filtration recovers the conditional. The architectural consequence is independent of formalization status: calibration hypotheses must be sourced exogenously and named as scope-of-validity oracles or conformal-exchangeability axioms.

Together, Propositions~\ref{prop:certimposs}--\ref{prop:stochastic} and Conjecture~\ref{prop:calibration_imposs} fix the architectural shape: (i) a deterministic component issues guarantees; (ii) compositional steps carry explicit dependency; (iii) reproducibility is engineered into the issuing layer; (iv) calibration is grounded in exchangeable held-out data. Every certificate in this paper is the deterministic validator $V$ for some specific property; the kernel plus $\Omega$-audit is the universal discharge mechanism.

\section{Conflict-Aware Bilattice Grounding}
\label{sec:grounding}

This is the most substantive technical departure from the precursor's coverage machinery. Real RAG~\cite{lewis2020rag,gao2023rag_survey} evidence does not merely \emph{support} a claim or not; it can also \emph{contradict} it. A claim well-supported by one chunk and explicitly contradicted by another is not an acceptable output, even at $W^+ = 95\%$. We therefore depart from single-polarity scoring and formalize grounding over a Belnap-style bilattice~\cite{belnap1977useful,ginsberg1988multivalued}.

\subsection{Decomposition with reconstruction completeness}
\label{sec:grounding:decomp}

The LLM's answer $y$ is decomposed (by a second LLM call) into atomic claims $\{c_1, \ldots, c_n\}$ with weights $w_i \in \mathbb{Q}^+$. We adopt the FActScore-style atomic-fact decomposition discipline of~\cite{min2023factscore} as the upstream methodology. A dishonest decomposer could omit a load-bearing claim (the one that would have been contradicted); we block this with a reconstruction check.

\begin{definition}[Verification-by-Reconstruction]\statusspec
\label{def:reconstruction}
The decomposition $\{c_i\}$ is \emph{complete for} $y$ if a canonicalizer $C$ satisfies $C(\bigwedge_i c_i) = C(y)$. The grounding certificate carries this equality as a proof field. Failure of the check rejects the decomposition.
\end{definition}

The semantic faithfulness of the reconstruction is the named oracle \texttt{DecompositionOracle} (tier 4); the canonicalizer equality check is decidable.

\subsection{Three-way signed support}
\label{sec:grounding:signed}

For each atomic claim $c_i$ and each retrieved chunk $r_j$, a three-way NLI signature is computed:
\[
\sigma(c_i, r_j) = (\sigma^+_{ij}, \sigma^-_{ij}, \sigma^0_{ij}) \in [0,1]^3, \quad \sigma^+_{ij} + \sigma^-_{ij} + \sigma^0_{ij} = 1,
\]
where $\sigma^+$, $\sigma^-$, $\sigma^0$ are entailment, contradiction, and orthogonality respectively (standard NLI~\cite{bowman2015snli}). The best-support and best-refutation scores are $\beta^+_i = \max_j \sigma^+_{ij}$ and $\beta^-_i = \max_j \sigma^-_{ij}$.

\begin{definition}[Claim Epistemic Status]\statuscompiled
\label{def:epistemic}
With thresholds $\theta_g, \theta_c \in (0,1]$:
\[
\mathrm{status}(c_i) =
\begin{cases}
\textsc{Supported} & \beta^+_i \ge \theta_g \wedge \beta^-_i < \theta_c \\
\textsc{Contradicted} & \beta^+_i < \theta_g \wedge \beta^-_i \ge \theta_c \\
\textsc{Contested} & \beta^+_i \ge \theta_g \wedge \beta^-_i \ge \theta_c \\
\textsc{Unknown} & \beta^+_i < \theta_g \wedge \beta^-_i < \theta_c
\end{cases}
\]
\end{definition}

\begin{remark}[Bilattice structure, after Belnap~\cite{belnap1977useful}]\label{lem:bilattice}
The four-valued set $\{\textsc{Supp}, \textsc{Contr}, \textsc{Contested}, \textsc{Unk}\}$ is the standard Belnap bilattice~\cite{belnap1977useful, ginsberg1988multivalued}: two orthogonal lattice orders (information-increasing and truth-increasing). Under the information (knowledge) order, the join of \textsc{Supp} and \textsc{Contr} is \textsc{Contested} (Both) and the meet is \textsc{Unk} (None); under the truth order the meet is \textsc{Contr} and the join is \textsc{Supp}. We use this as a known structural fact rather than claiming it as a contribution; what is new here is its instantiation as the codomain of the grounding classifier (Definition~\ref{def:epistemic}) combined with the emission gate of \S\ref{sec:grounding:cert}.
\end{remark}

The signed weighted grounding coverages are
\[
W^+ = \frac{\sum_{i:\,\textsc{Supp}} w_i}{\sum_i w_i},\quad
W^- = \frac{\sum_{i:\,\textsc{Contr}} w_i}{\sum_i w_i},\quad
W^\pm = \frac{\sum_{i:\,\textsc{Contested}} w_i}{\sum_i w_i},
\]
and $W^? = 1 - W^+ - W^- - W^\pm$. All four coverages lie in $[0,1]$, matching the per-call deliverable in \S\ref{sec:card} and the empirical thresholds reported in \S\ref{sec:pilot_a}. The pipeline emits iff $W^+ \ge \theta_e$ and $W^- + W^\pm \le \theta_r$. A single contested load-bearing claim is grounds for abstention, not a footnote.

\subsection{The certificate}
\label{sec:grounding:cert}

\smallskip
\leanfile{Grounding.lean}
\begin{lstlisting}
inductive EpistemicStatus
  | supported | contradicted | contested | unknown
  deriving DecidableEq

def classify (b_plus b_minus theta_g theta_c : Rat) : EpistemicStatus :=
  if b_plus ≥ theta_g ∧ b_minus < theta_c then .supported
  else if b_plus < theta_g ∧ b_minus ≥ theta_c then .contradicted
  else if b_plus ≥ theta_g ∧ b_minus ≥ theta_c then .contested
  else .unknown

structure ConflictAwareGroundingCertificate where
  answer : String
  atomic_claims : List (String × Rat)
  retrieved : List String
  support_pos : List (List Rat)
  support_neg : List (List Rat)
  support_orth : List (List Rat)
  beta_plus : List Rat
  beta_minus : List Rat
  theta_g : Rat       -- support threshold
  theta_c : Rat       -- contradiction threshold
  theta_e : Rat       -- emission threshold
  theta_r : Rat       -- refutation threshold
  statuses : List EpistemicStatus
  W_supported W_contradicted W_contested W_unknown : Rat
  emitted : Bool
  -- proof obligations (decidable; discharged by `decide`)
  p_scores_simplex : ∀ i j, support_pos.get! i |>.get! j
                          + support_neg.get! i |>.get! j
                          + support_orth.get! i |>.get! j = 1
  p_beta_plus_max : ∀ i, beta_plus.get! i =
                        (support_pos.get! i).foldl max 0
  p_beta_minus_max : ∀ i, beta_minus.get! i =
                        (support_neg.get! i).foldl max 0
  p_statuses_correct : ∀ i, statuses.get! i =
                         classify (beta_plus.get! i) (beta_minus.get! i)
                                  theta_g theta_c
  p_W_sum_unit : W_supported + W_contradicted +
                 W_contested + W_unknown = 1
  p_total_weight_pos : 0 < totalWeight atomic_claims
  -- mass-correctness fields: each declared W aggregate equals the
  -- normalized weighted coverage over its status class, where
  -- `normalizedWeightOf` filters `atomic_claims` by the corresponding
  -- entry of `statuses` (the parallel list constrained by
  -- `p_statuses_correct` above) and divides by the total weight.
  -- Forecloses a malicious decomposer declaring `W_contradicted = 0`
  -- while shipping contradicted claims.
  p_W_supported_correct    :
    W_supported    = normalizedWeightOf atomic_claims statuses .supported
  p_W_contradicted_correct :
    W_contradicted = normalizedWeightOf atomic_claims statuses .contradicted
  p_W_contested_correct    :
    W_contested    = normalizedWeightOf atomic_claims statuses .contested
  p_W_unknown_correct      :
    W_unknown      = normalizedWeightOf atomic_claims statuses .unknown
  p_emission_rule : emitted ↔
                     (W_supported ≥ theta_e ∧
                      W_contradicted + W_contested ≤ theta_r)
  p_reconstruction : Canonicalize (Conjoin atomic_claims) =
                     Canonicalize answer
  -- tier-4 oracles
  h_nli_oracle : SignedSupportOracle atomic_claims retrieved
  h_decomp_faithful : DecompositionOracle answer atomic_claims
\end{lstlisting}

\begin{lemma}[Emission Gate Soundness, conditional on NLI and decomposition oracles]\statuscompiled
\label{thm:no_silent}
For any \texttt{ConflictAwareGroundingCertificate} with \texttt{emitted = true}, $W^- + W^\pm \le \theta_r$. Hence, conditional on \texttt{SignedSupportOracle} and \texttt{DecompositionOracle} being correct on the input, emission implies that the normalized \textsc{Contradicted}-plus-\textsc{Contested} weighted mass is at most the declared refutation threshold $\theta_r$. In the zero-tolerance case $\theta_r = 0$, any positive-weight \textsc{Contradicted} or \textsc{Contested} atomic claim blocks emission. For $\theta_r > 0$, low-weight contradicted or contested mass may be emitted below the threshold, but it remains explicitly recorded in the certificate and Card and is available to the application-policy layer.
\end{lemma}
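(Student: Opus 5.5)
The core claim follows directly from the structure's proof fields; the oracles enter only in interpreting what the inequality means. Take a certificate $c$ with \texttt{emitted = true}. Rewrite the field \texttt{p\_emission\_rule} in the forward direction, \texttt{(c.p\_emission\_rule).mp}, applied to the hypothesis. This yields the conjunction $W^+ \ge \theta_e \wedge W^- + W^\pm \le \theta_r$. Projecting the second component gives the stated inequality $W^- + W^\pm \le \theta_r$. In Lean this is a one-line term such as \texttt{(c.p\_emission\_rule.mp h).2}, with \texttt{h : c.emitted = true} coerced to the \texttt{Bool}-as-\texttt{Prop} form. It uses no axioms beyond those already in the structure, so the transitive axiom set is $\Omega$ plus the two declared tier-4 oracles, exactly as Definition~\ref{def:validcert} requires.

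The second step connects the declared aggregates \texttt{W\_contradicted} and \texttt{W\_contested} to the actual weighted mass. The mass-correctness fields \texttt{p\_W\_contradicted\_correct} and \texttt{p\_W\_contested\_correct} identify each declared value with \texttt{normalizedWeightOf} over the corresponding status class. The field \texttt{p\_statuses\_correct} in turn identifies each status with \texttt{classify} applied to $\beta^\pm_i$, which \texttt{p\_beta\_plus\_max} and \texttt{p\_beta\_minus\_max} pin to the maxima of the NLI scores. Substituting these equalities turns the bound into a statement about the normalized \textsc{Contradicted}-plus-\textsc{Contested} mass computed from the scores. The scores have semantic meaning only through \texttt{SignedSupportOracle}, and the claim set represents the answer only through \texttt{DecompositionOracle} together with \texttt{p\_reconstruction}. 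This is the sense in which the conclusion is conditional: the oracles are carried as hypotheses, not discharged.

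For the zero-tolerance corollary, set $\theta_r = 0$ and argue by contraposition. Suppose some claim with weight $w_k > 0$ has status \textsc{Contradicted} or \textsc{Contested}. All weights lie in $\mathbb{Q}^+$, and \texttt{p\_total\_weight\_pos} gives a positive denominator, so the normalized class mass is at least $w_k/\sum_i w_i > 0$. The other class is nonnegative, so $W^- + W^\pm > 0 = \theta_r$. Emission would contradict the first step, hence \texttt{emitted = false}. The $\theta_r > 0$ remark needs no proof: it notes that the bound is the only constraint, and that the masses remain stored in the certificate fields.

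The main obstacle is this positivity step. It requires a lemma that \texttt{normalizedWeightOf} is nonnegative and is bounded below by any single member's normalized weight. Proving it means reasoning about filtering the zipped \texttt{atomic\_claims}/\texttt{statuses} lists and using \texttt{get!} indexing consistently, including aligning lengths so the default value of \texttt{get!} does not interfere. My first attempt would be induction on the claim list, using \texttt{List.sum\_le\_sum} over the filter and \texttt{div\_pos} over $\mathbb{Q}$.
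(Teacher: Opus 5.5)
Your first step is the paper's entire proof. The paper discharges the lemma ``by \texttt{p\_emission\_rule} directly'': take the forward direction of that biconditional and project the second conjunct, which is exactly your term \texttt{(c.p\_emission\_rule.mp h).2}. Your axiom accounting ($\Omega$ plus the two tier-4 oracles) also matches the audit of \texttt{no\_silent\_contradictions}. The paper's proof says nothing further. Your unfolding of the mass-correctness chain and your contrapositive for the $\theta_r = 0$ clause therefore go beyond it. The contrapositive is correct in the informal model of \S\ref{sec:grounding:decomp}, where every $w_i \in \mathbb{Q}^+$.

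There is one caution if you mechanize that corollary. The displayed \texttt{ConflictAwareGroundingCertificate} carries only \texttt{p\_total\_weight\_pos}. It has no per-claim analogue of the MCR certificate's \texttt{p\_weight\_nonneg}, so the lemma you plan (\texttt{normalizedWeightOf} is nonnegative and bounded below by any single member's share) does not follow from the structure. For a concrete failure, put weights $1$ and $-1$ on two \textsc{Contradicted} claims and $5$ on a \textsc{Supported} claim. The total weight is $5 > 0$, with $W^- = 0$ and $W^+ = 1$. Then \texttt{p\_W\_sum\_unit} and \texttt{p\_emission\_rule} permit emission at $\theta_r = 0$, even though a weight-$1$ claim is contradicted. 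You need a hypothesis or field such as $\forall p \in \texttt{atomic\_claims},\ 0 \le p.2$ (or strict positivity) before your induction with \texttt{List.sum\_le\_sum} and \texttt{div\_pos} can close.
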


\begin{proof}
By \texttt{p\_emission\_rule} directly: the emission flag is structurally equivalent to the refutation-threshold inequality, so the Lean kernel discharges by definitional unfolding plus rational-arithmetic reduction.
\end{proof}

\textbf{What this lemma is and is not.} The lemma proves that the \emph{emission gate} of the certificate is sound: the gate's $\mathrm{emitted}$ flag is true only when the weighted refutation mass is below threshold. The actual epistemic content, that an atomic claim is genuinely \textsc{Contradicted} or \textsc{Contested} by the retrieved evidence, is contributed by the \texttt{SignedSupportOracle} (the three-way NLI judgment) below the trust boundary. The lemma is therefore a structural soundness result for the gate, conditional on the oracle, rather than a theorem about contradictions in the world. The architectural value, blocking a class of evidence-contradicting outputs that scalar-confidence RAG pipelines cannot detect by construction, comes from the combination of the NLI oracle and the gate; the lemma certifies that the gate respects its specification.

\textbf{Verification status:} \emph{compiled} as \texttt{Grounding.no\_silent\_contradictions} in the artifact (audited axiom set: $\Omega \cup \{\texttt{SignedSupport\allowbreak Oracle}, \texttt{Decomposition\allowbreak Oracle}\}$); see Appendix~\ref{app:axiom_audit}.

\subsection{Comparison with support-only RAG}
\label{sec:grounding:why}

Hallucinations with low support are caught by $W^+ < \theta_e$. The \emph{harder} failure mode is an atomic claim that the retrieval corpus actively contradicts (e.g., ``the patient's CrCl of 22 mL/min is within apixaban's renal range'' when the FDA label requires CrCl $\ge 25$). A support-only certificate could report $W_{\mathrm{ground}} = 95\%$ and silently ship the contradiction; the conflict-aware certificate rejects emission whenever the contradicted-plus-contested mass exceeds $\theta_r$, and below that threshold the contradicted/contested mass is not silent but is explicitly surfaced to the policy layer.

\subsection{Limitations of the certificate}
\label{sec:grounding:limits}

The certificate does not verify (i) that a \textsc{Supported} claim is actually true (the NLI oracle is below the trust boundary); (ii) that the corpus is itself factually correct (RAG against propaganda still produces well-grounded propaganda); (iii) that the decomposition oracle is faithful (a malicious decomposer could omit a key claim; the reconstruction check mitigates but does not eliminate this). These limits are the LLM-pipeline analogue of the precursor's grounded-coverage caveats.

\section{Maximal Certifiable Residue}
\label{sec:mcr}

The bilattice grounding certificate (\S\ref{sec:grounding}) blocks emission when contradiction exceeds threshold. But all-or-nothing abstention is an overcorrection: a ten-claim answer with one contradicted claim is more useful if the nine certifiable claims ship and the one bad claim is dropped, with the drop audit-logged. We formalize this maximal-informative behavior.

\subsection{Definition}

Let $\mathcal{C} = \{C_1, \ldots, C_m\}$ be a finite family of decidable predicates on subsets of $\mathrm{At}(y) = \{c_1, \ldots, c_n\}$ (the answer's atomic claims). Representative constraints: $C_{\mathrm{ground}}$ (every $c \in S$ is \textsc{Supported}); $C_{\mathrm{cons}}$ (no two members of $S$ are mutually contradictory); $C_{\mathrm{pol}}$ (no member triggers a forbidden-output rule); $C_{\mathrm{pre}}$ (every implied tool-call passes its Hoare precondition); $C_{\mathrm{bud}}$ (cumulative cost stays below budget).

\begin{definition}[Maximal Certifiable Residue]\statuscompiled
\label{def:mcr}
The \emph{certifiable family} is $\mathsf{Cert}(\mathcal{C}, y) = \{S \subseteq \mathrm{At}(y) \mid C_j(S) \text{ for all } C_j \in \mathcal{C}\}$. The \emph{maximal certifiable residue} is
\[
\mathrm{MCR}(\mathcal{C}, y) = \arg\max_{S \in \mathsf{Cert}(\mathcal{C}, y)} \sum_{c \in S} w(c),
\]
breaking ties by canonical lex order. When $\mathsf{Cert} = \{\emptyset\}$, the residue is empty and the pipeline abstains.
\end{definition}

\begin{definition}[Monotone Constraint Family]\statuscompiled
\label{def:monotone_constraints}
$\mathcal{C}$ is \emph{monotone} (downward-closed) if for every $C_j$ and every $S' \subseteq S \subseteq \mathrm{At}(y)$, $C_j(S) \Rightarrow C_j(S')$. We additionally require $\mathcal{C}$ to be \emph{empty-admitting}: $C_j(\emptyset) = \mathsf{true}$ for every $C_j \in \mathcal{C}$. This holds for all five sample constraints above (each is vacuously true on $\emptyset$) and guarantees $\emptyset \in \mathsf{Cert}$ even before any non-empty witness is exhibited.
\end{definition}

Of the five sample constraints, all are subset-monotone and empty-admitting except trivial cases; the consistency constraint is monotone for the mutual-contradiction predicate (removing a claim cannot introduce a contradiction).

\subsection{Maximality, idempotence, monotonicity}

\begin{theorem}[Properties of MCR]\statuscompiled
\label{thm:mcr_maximality}
For any monotone, empty-admitting constraint family $\mathcal{C}$ and answer $y$:
\begin{enumerate}[leftmargin=1.5em,topsep=2pt,itemsep=2pt]
\item[(i)] $\mathrm{MCR}(\mathcal{C}, y)$ is well-defined: $\emptyset \in \mathsf{Cert}$ directly from empty-admission, so the $\arg\max$ is over a non-empty finite set.
\item[(ii)] $\mathrm{MCR}(\mathcal{C}, y)$ is \emph{weight-optimal}: no element of $\mathsf{Cert}$ has strictly greater weight (and ties are broken by canonical lex order, so the operator is single-valued).
\item[(iii)] (Idempotence) $\mathrm{MCR}(\mathcal{C}, \mathrm{MCR}(\mathcal{C}, y)) = \mathrm{MCR}(\mathcal{C}, y)$.
\item[(iv)] (Monotone in constraints) $\mathcal{C} \subseteq \mathcal{C}'$ implies $w(\mathrm{MCR}(\mathcal{C}', y)) \le w(\mathrm{MCR}(\mathcal{C}, y))$.
\item[(v)] (Anti-monotone in evidence, via the constraint map) Fix any evidence-extension map $E \mapsto \mathcal{C}(E)$ from observable evidence to a constraint family, and assume the map is monotone in the sense that $E \subseteq E'$ implies $\mathcal{C}(E) \subseteq \mathcal{C}(E')$ (extending evidence only adds constraints or strengthens existing ones). Then $w(\mathrm{MCR}(\mathcal{C}(E'), y)) \le w(\mathrm{MCR}(\mathcal{C}(E), y))$, i.e.\ extending evidence is non-increasing in MCR weight. This follows from item (iv) applied to the induced constraint inclusion.
\end{enumerate}
\end{theorem}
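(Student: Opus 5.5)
The plan is to prove the five items in order, working entirely with the finite powerset $2^{\mathrm{At}(y)}$ and the total preorder ``greater weight, then earlier in canonical lex order''. For (i), empty-admission gives $\emptyset \in \mathsf{Cert}(\mathcal{C},y)$ at once. The set $\mathsf{Cert}$ is finite, being a subset of a finite powerset, and it is decidable because each $C_j$ is decidable. So the $\arg\max$ ranges over a non-empty finite set. For (ii), I will define $\mathrm{MCR}$ concretely as the unique maximum under the strict total order $S \prec T$ iff $w(S)<w(T)$, or $w(S)=w(T)$ and $T$ precedes $S$ in canonical lex order. Uniqueness follows because lex order on subsets is a total order, so ties are resolved to a single element. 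Weight-optimality then holds by definition: if some $T\in\mathsf{Cert}$ had $w(T)>w(\mathrm{MCR})$, then $\mathrm{MCR}\prec T$, contradicting maximality. In Lean this would be a \texttt{Finset.max'}/\texttt{argmax} over the filtered powerset together with the corresponding \texttt{le\_max'} lemma.

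Item (iii) needs care about what $\mathrm{MCR}(\mathcal{C}, S)$ means for a subset $S$. I will read it as the residue computed with $\mathrm{At}$ replaced by $S$ and with the same weights and lex order. Write $R=\mathrm{MCR}(\mathcal{C},y)$. Then $\mathsf{Cert}(\mathcal{C},R)=\{S\subseteq R : \forall j,\ C_j(S)\}$. Since $R\in\mathsf{Cert}(\mathcal{C},y)$ and every $C_j$ holds on $R$, we get $R\in\mathsf{Cert}(\mathcal{C},R)$. Downward closure is not even needed for this membership. Moreover $\mathsf{Cert}(\mathcal{C},R)\subseteq\mathsf{Cert}(\mathcal{C},y)$. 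Because $R$ is the $\prec$-maximum of the larger set and lies in the smaller set, it is also the $\prec$-maximum of the smaller set. This is the standard lemma that the argmax of a superset, if it lies in a subset, is the argmax of that subset. The main obstacle I expect is exactly this step. The tie-break order must be a \emph{fixed} global order on subsets of $\mathrm{At}(y)$, restricted to subsets of $R$, and not recomputed relative to $R$; otherwise the restriction lemma can fail. I will state this canonical order once, for example by comparing sorted index lists, and show that restricting it to subsets of $R$ is the same order.

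For (iv), if $\mathcal{C}\subseteq\mathcal{C}'$ then every $S$ satisfying all of $\mathcal{C}'$ satisfies all of $\mathcal{C}$, so $\mathsf{Cert}(\mathcal{C}',y)\subseteq\mathsf{Cert}(\mathcal{C},y)$. Then $\mathrm{MCR}(\mathcal{C}',y)\in\mathsf{Cert}(\mathcal{C},y)$, and (ii) applied to $\mathcal{C}$ gives $w(\mathrm{MCR}(\mathcal{C}',y))\le w(\mathrm{MCR}(\mathcal{C},y))$. This uses (i) for $\mathcal{C}'$, which holds because $\mathcal{C}'$ is also assumed monotone and empty-admitting. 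I will add that hypothesis explicitly, or note that $\emptyset$ satisfies $\mathcal{C}'$. For the ``strengthens existing constraints'' reading in (v), I will interpret inclusion $\mathcal{C}(E)\subseteq\mathcal{C}(E')$ as inclusion of the solution sets, $\mathsf{Cert}(\mathcal{C}(E'),y)\subseteq\mathsf{Cert}(\mathcal{C}(E),y)$. The proof of (iv) uses only this inclusion, so it covers both readings. Item (v) is then (iv) instantiated at $\mathcal{C}=\mathcal{C}(E)$ and $\mathcal{C}'=\mathcal{C}(E')$.
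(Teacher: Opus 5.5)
Your proof is correct and follows the paper's item-by-item argument: empty-admission for (i), the tie-broken $\arg\max$ for (ii), the residue being the maximum of its own certifiable subsets for (iii), and shrinkage of $\mathsf{Cert}$ under added constraints for (iv) and (v). Your (iii) is slightly cleaner than the paper's, which invokes monotonicity where only the inclusion $\mathsf{Cert}(\mathcal{C},R) \subseteq \mathsf{Cert}(\mathcal{C},y)$ and a fixed global tie-break order are needed. Your explicit empty-admission hypothesis on $\mathcal{C}'$ and your solution-set reading of (v) also make precise points the paper leaves implicit.
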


\begin{proof}
(i) $\emptyset \in \mathsf{Cert}$ by empty-admission. (ii) If some $S^* \in \mathsf{Cert}$ had $w(S^*) > w(\mathrm{MCR})$, then $S^* \in \arg\max$, contradicting the tie-break rule. (iii) Reapplying $\mathrm{MCR}$ to its own output keeps the atomic set unchanged: the residue itself is in $\mathsf{Cert}$ and is maximal among its own subsets in $\mathsf{Cert}$ (by monotonicity, all subsets of the residue are in $\mathsf{Cert}$). (iv) Adding constraints can only shrink $\mathsf{Cert}$, so the optimum cannot increase. (v) By the assumed monotonicity of the evidence-to-constraint map, $E \subseteq E' \Rightarrow \mathcal{C}(E) \subseteq \mathcal{C}(E')$, so item (iv) applied to the induced inclusion gives the conclusion.
\end{proof}

We use \emph{weight-optimal} rather than the standard ``$\subseteq$-maximal'' phrasing because what the operator actually maximizes is total claim weight, not subset extent. A weight-optimal residue need not be a $\subseteq$-maximal element of $\mathsf{Cert}$ when uncertified low-weight claims could be appended without losing certification but would not raise the score.

\textbf{Verification status:} \emph{compiled} as \texttt{MCR.residue\_is\_fixed\_point} (and \texttt{residue\_is\_certifiable}, \texttt{subset\_of\_residue\_is\_certifiable}) in the artifact (audited axiom set: $\Omega$).

\subsection{Lean structure}

\smallskip
\leanfile{MCR.lean}
\begin{lstlisting}
structure MCRCertificate (AtomicClaim : Type) [DecidableEq AtomicClaim] where
  all_claims : Finset AtomicClaim
  weight : AtomicClaim -> Rat
  constraints : List (Finset AtomicClaim -> Bool)
  residue : Finset AtomicClaim
  residue_weight : Rat
  p_residue_subset : residue ⊆ all_claims
  p_residue_cert : ∀ C ∈ constraints, C residue = true
  p_monotone :
    ∀ C ∈ constraints, ∀ S T : Finset AtomicClaim,
      S ⊆ T -> T ⊆ all_claims -> C T = true -> C S = true
  p_weight_nonneg : ∀ a ∈ all_claims, 0 ≤ weight a
  p_maximal :
    ∀ S : Finset AtomicClaim,
      S ⊆ all_claims ->
      (∀ C ∈ constraints, C S = true) ->
      (S.sum weight) ≤ residue_weight
  p_weight_correct : residue_weight = residue.sum weight
\end{lstlisting}

\subsection{Worked example}
\label{sec:mcr:worked}

A clinical-RAG answer decomposed into five claims:

\begin{table}[H]
\centering
\footnotesize
\begin{tabular}{@{}llccccc@{}}
\toprule
& \textbf{Claim} & $w_i$ & $\beta^+$ & $\beta^-$ & \textbf{Status} & \textbf{Action?} \\
\midrule
$c_1$ & Patient diagnosis & 3.0 & 0.92 & 0.03 & \textsc{Supp} &  \\
$c_2$ & Recommended drug & 2.0 & 0.88 & 0.04 & \textsc{Supp} & order \\
$c_3$ & Dose frequency & 1.0 & 0.81 & 0.02 & \textsc{Supp} &  \\
$c_4$ & Allergy check (concludes safe) & 1.5 & 0.45 & 0.71 & \textsc{Contr} &  \\
$c_5$ & Follow-up schedule & 0.5 & 0.30 & 0.20 & \textsc{Unk} &  \\
\bottomrule
\end{tabular}
\end{table}

With constraints $\{C_{\mathrm{ground}}, C_{\mathrm{pre}}\}$ at $\theta_g = 0.70$, $\theta_c = 0.50$, and an action precondition requiring $c_4 = \textsc{Supp}$ before ordering the drug $c_2$: $C_{\mathrm{ground}}$ excludes $c_4$ and $c_5$; $C_{\mathrm{pre}}$ excludes $c_2$ when $c_4$ is absent; $\mathrm{MCR} = \{c_1, c_3\}$ with weight $4.0$. The pipeline emits the diagnosis and dose frequency, drops the drug recommendation and follow-up, and audit-logs the drops. An all-or-nothing abstention would emit nothing; a support-only certificate would have shipped all five (including the silent contradiction at $c_4$); a contradictions-only filter would have shipped $\{c_1, c_2, c_3\}$ without catching $c_2$'s dependency on $c_4$. MCR composes constraint families coherently.

\subsection{Computing the residue}

Computing $\mathrm{MCR}(\mathcal{C}, y)$ is a weighted-subset optimization and is NP-hard in general, even when $\mathcal{C}$ is monotone: the special case of weighted maximum independent set on the conflict graph induced by a pairwise-contradiction constraint reduces to it. The hard search is below the trust boundary; the certificate carries the residue as a field, and the kernel only checks that every constraint accepts the residue and that no certified element of $\mathsf{Cert}$ outweighs it (a finite-list comparison, decidable by \texttt{decide}). Three regimes admit efficient search: (i) when constraints decompose over atomic claims independently (each $C_j$ is a per-claim predicate), greedy removal of constraint-violators is optimal in $O(n \cdot |\mathcal{C}|)$; (ii) when claims carry a dependency DAG (e.g.\ ``patient has diabetes'' $\Rightarrow$ ``patient has chronic condition'') and constraints respect the DAG, $O(n^2)$ topological antichain extraction suffices; (iii) for general non-monotone families at the typical $n \lesssim 100$ atomic-claim regime, off-the-shelf SAT/SMT or weighted-MaxSAT enumeration solves the instance well within per-call latency budgets. Outside these regimes, MCR reduces to weighted-subset optimization on which the minimal-sufficient-evidence and weighted-MaxSAT literatures supply mature solvers; we adopt those solvers below the trust boundary and rely on the kernel check above the boundary for correctness.

\section{Embedding Sensitivity and Paraphrase Stability}
\label{sec:embedding}

Given an embedding $E : T \to \mathbb{R}^d$ with $L_2$-normalized output and texts $x, y$ with $\mathrm{sim}(x,y) = \langle E(x), E(y)\rangle$, we certify that the similarity decision is robust under a declared finite perturbation family. Stability is not a semantic property of the encoder; it is a local Lipschitz-style property whose verification is purely arithmetic once the family is fixed.

\subsection{Selective sensitivity}

\begin{definition}[Meaning-Invariant and Significant Perturbation Families]\statusspec
\label{def:perturbations}
For $x \in T$, $G_{\mathrm{inv}}(x) \subset T$ consists of meaning-preserving edits (synonym replacement, voice transposition, harmless typographical changes); $G_{\mathrm{sig}}(x) \subset T$ consists of meaning-changing edits (negation, entity substitution, quantity change, role reversal). Both families are finite, explicit, and declared as inputs.
\end{definition}

We work in squared form so the certificate stays in $\mathbb{Q}$. Define the meaning-invariant robustness radius $R^2_{\mathrm{inv}}(x) = \max_{x' \in G_{\mathrm{inv}}(x)} \|E(x') - E(x)\|^2$, the critical sensitivity floor $R^2_{\mathrm{sig}}(x) = \min_{x' \in G_{\mathrm{sig}}(x)} \|E(x') - E(x)\|^2$, and the selective gap $\Delta^2(x) = R^2_{\mathrm{sig}}(x) - R^2_{\mathrm{inv}}(x)$. $\Delta^2 > 0$ certifies a local separation of meaning; $\Delta^2 \le 0$ flags a local encoder failure.

\begin{lemma}[Family Monotonicity]\statusspec
\label{lem:family_monotone}
$G_{\mathrm{inv}}(x) \subseteq G'_{\mathrm{inv}}(x) \Rightarrow R^2_{\mathrm{inv}}(x; G_{\mathrm{inv}}) \le R^2_{\mathrm{inv}}(x; G'_{\mathrm{inv}})$. An auditor extending the family can only weaken (never spuriously strengthen) the guarantee.
\end{lemma}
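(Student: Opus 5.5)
The argument reduces to the elementary fact that a maximum over a finite set can only grow when the set is enlarged. Write $f_x(x') = \|E(x') - E(x)\|^2$. With this notation, $R^2_{\mathrm{inv}}(x; G) = \max_{x' \in G} f_x(x')$ for any finite declared family $G$. Throughout, $x$ and the encoder $E$ are fixed, so $f_x$ is the same rational-valued function on both sides of the inequality. The only thing that changes is the index set of the maximum.

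The proof has three steps.

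\begin{enumerate}[leftmargin=1.5em,topsep=2pt,itemsep=2pt]
\item \textbf{Empty family.} I first settle the convention for $G_{\mathrm{inv}}(x) = \emptyset$. I would adopt the one the Lean artifact would naturally use: a \texttt{Finset.sup} of nonnegative rationals, which takes the value $0$ on the empty set. The left side is then $0$, and since $f_x \ge 0$ pointwise, the inequality holds trivially.
\item \textbf{Nonempty family.} Choose a maximizer $x^\star \in G_{\mathrm{inv}}(x)$ with $f_x(x^\star) = R^2_{\mathrm{inv}}(x; G_{\mathrm{inv}})$. It exists because the family is finite and nonempty.
\item \textbf{Transfer to the larger family.} The hypothesis $G_{\mathrm{inv}}(x) \subseteq G'_{\mathrm{inv}}(x)$ gives $x^\star \in G'_{\mathrm{inv}}(x)$. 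Hence $f_x(x^\star) \le \max_{x' \in G'_{\mathrm{inv}}(x)} f_x(x') = R^2_{\mathrm{inv}}(x; G'_{\mathrm{inv}})$.
\end{enumerate}

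In Lean this is a one-line appeal to \texttt{Finset.sup\_mono} (or \texttt{Finset.sup'\_mono} if the paper insists on nonempty families). No tier-2 placeholder is needed, because no inner-product identity enters the argument. I would therefore expect the compiled declaration to be axiom-free or $\Omega$-only.

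The auditor-facing sentence follows from Theorem~\ref{thm:robust_decision}. That theorem certifies a similarity envelope using $R^2_{\mathrm{inv}}$ as the drift allowance. A larger allowance yields a weaker, or equal, guarantee, so enlarging the family can never strengthen the certified bound. By the dual argument with $\min$, a larger $G_{\mathrm{sig}}$ can only lower $R^2_{\mathrm{sig}}$. Combining the two directions, $\Delta^2$ is antitone under family extension. I would state this as a remark, since the lemma itself only asserts the $R^2_{\mathrm{inv}}$ direction.

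I expect no real mathematical obstacle. The one place care is needed is the empty-family convention and the matching choice between \texttt{sup} and \texttt{sup'}. The convention must be the same one used in the certificate definition; otherwise the compiled statement will not match the paper's definition of $R^2_{\mathrm{inv}}$.
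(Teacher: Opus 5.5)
Your proof is correct and matches the argument the paper intends: the paper gives no written proof (it states the lemma with a \emph{specified} badge and later lists it among the artifact's proved facts), and the implicit justification is exactly yours, namely that a maximum of the fixed function $x' \mapsto \|E(x') - E(x)\|^2$ over a finite family can only grow when the family is enlarged. Your explicit empty-family convention (a supremum of nonnegative rationals, equal to $0$ on $\emptyset$) and your remark that $\Delta^2$ is antitone when both families are extended are sound additions that the paper leaves implicit.
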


\begin{remark}[Why \emph{selective} sensitivity]\label{rem:selective}
An encoder mapping every text to the same vector has $R_{\mathrm{inv}} = 0$ but is useless because $R_{\mathrm{sig}} = 0$. The right objective is selective sensitivity: low $R_{\mathrm{inv}}$ and high $R_{\mathrm{sig}}$.
\end{remark}

\subsection{The certificate}

\smallskip
\leanfile{Certificate.lean}
\begin{lstlisting}
structure EmbeddingSensitivityCertificate (d : Nat) where
  text : String
  embedding : Vector Rat d
  G_inv : List String           -- neutral perturbations
  G_sig : List String           -- significant perturbations
  emb_inv : List (Vector Rat d)
  emb_sig : List (Vector Rat d)
  R_inv_sq : Rat                -- max squared L2-dist over G_inv
  R_sig_sq : Rat                -- min squared L2-dist over G_sig
  delta_sq : Rat                -- R_sig_sq - R_inv_sq
  p_inv_length : G_inv.length = emb_inv.length
  p_sig_length : G_sig.length = emb_sig.length
  p_bounded : ∀ v ∈ (embedding :: emb_inv ++ emb_sig), l2_norm_sq v ≤ 1
  p_R_inv_sound : ∀ e ∈ emb_inv, l2_dist_sq embedding e ≤ R_inv_sq
  p_R_sig_sound : ∀ e ∈ emb_sig, R_sig_sq ≤ l2_dist_sq embedding e
  p_delta_correct : delta_sq = R_sig_sq - R_inv_sq
  -- tier-4 oracle
  h_paraphrase_valid : ParaphraseOracle text G_inv G_sig
\end{lstlisting}

The first four proof fields reduce by \texttt{decide} or \texttt{norm\_num}; the rest by \texttt{rfl}. The paraphrase oracle stands in for the paraphraser's faithfulness (tier 4).

\begin{theorem}[Robust Similarity Bound, Squared Form]\statuscompiled
\label{thm:robust_decision}
For any \texttt{EmbeddingSensitivityCertificate}, any unit-normalized document embedding $d \in \mathbb{R}^d$, and any $e' \in \mathrm{emb\_inv}$:
$|\langle E(\mathrm{text}), d\rangle - \langle e', d\rangle|^2 \le R^2_{\mathrm{inv}}.$
Consequently, if the threshold decision $\langle E(\mathrm{text}), d\rangle \ge \theta_{\mathrm{ret}}$ (similarity of the embedded query against the pre-indexed document embedding $d$) has margin $> \sqrt{R^2_{\mathrm{inv}}}$, the decision is preserved across every variant in $G_{\mathrm{inv}}$. Here $\theta_{\mathrm{ret}}$ is the retrieval similarity threshold; the bilattice grounding thresholds $\theta_g, \theta_c, \theta_e, \theta_r$ of \S\ref{sec:grounding} are unrelated.
\end{theorem}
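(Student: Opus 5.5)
The plan is to reduce the squared bound to Cauchy--Schwarz applied to the difference vector. Write $e = E(\mathrm{text})$ for the certificate's \texttt{embedding} field and fix $e' \in \mathrm{emb\_inv}$ and a document embedding $d$ with $\|d\|^2 \le 1$. By bilinearity of the inner product (the tier-2 placeholder \texttt{innerProd\_sub}), $\langle e, d\rangle - \langle e', d\rangle = \langle e - e', d\rangle$. Squaring and applying Cauchy--Schwarz in squared form (the placeholder \texttt{cauchy\_schwarz\_sq}) gives
\[
|\langle e - e', d\rangle|^2 \le \|e - e'\|^2 \, \|d\|^2 \le \|e - e'\|^2 .
\]
Here the last step uses $\|d\|^2 \le 1$ together with nonnegativity of $\|e-e'\|^2$. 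The certificate field \texttt{p\_R\_inv\_sound}, instantiated at $e'$, then yields $\|e - e'\|^2 \le R^2_{\mathrm{inv}}$. This closes the first claim, entirely over $\mathbb{Q}$ and with no square roots.

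The consequence about decisions is a short real-arithmetic step. Suppose $\langle e, d\rangle - \theta_{\mathrm{ret}} > \sqrt{R^2_{\mathrm{inv}}}$. By the first part, $|\langle e,d\rangle - \langle e',d\rangle| \le \sqrt{R^2_{\mathrm{inv}}}$, because the square root is monotone on nonnegatives and $R^2_{\mathrm{inv}} \ge 0$. The latter holds since it upper-bounds a squared distance whenever $\mathrm{emb\_inv}$ is nonempty; if $\mathrm{emb\_inv}$ is empty the claim is vacuous. Therefore $\langle e', d\rangle \ge \langle e,d\rangle - \sqrt{R^2_{\mathrm{inv}}} > \theta_{\mathrm{ret}}$, so the decision holds for every embedded variant.

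To avoid $\sqrt{\cdot}$ in Lean, I would state the margin hypothesis in squared form, $(\langle e,d\rangle - \theta_{\mathrm{ret}})^2 > R^2_{\mathrm{inv}}$ together with $\langle e,d\rangle \ge \theta_{\mathrm{ret}}$. The conclusion then follows from $|a - b|^2 \le R^2 < m^2$, which gives $|a-b| < m$ via \texttt{abs\_lt\_abs} and \texttt{sq\_lt\_sq'}. The symmetric case, where the decision is below threshold with margin, is handled the same way. "Preserved across every variant in $G_{\mathrm{inv}}$" should be read through \texttt{p\_inv\_length} and the tier-4 \texttt{ParaphraseOracle}: each text in $G_{\mathrm{inv}}$ corresponds to its listed embedding. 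This is a bookkeeping identification, not a proof obligation.

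I expect the main obstacle to be mechanical rather than mathematical. The certificate stores embeddings as \texttt{Vector Rat d}, while Mathlib's Cauchy--Schwarz is phrased for inner product spaces over $\mathbb{R}$. I would first try to prove \texttt{cauchy\_schwarz\_sq} directly on finite rational sums, via the Lagrange identity $\sum_i a_i^2 \sum_j b_j^2 - (\sum_i a_i b_i)^2 = \tfrac12\sum_{i,j}(a_i b_j - a_j b_i)^2 \ge 0$. If that proves awkward, I would keep it as the declared tier-2 axiom, as the paper does, and add a cast to $\mathbb{R}$ with \texttt{EuclideanSpace} for the eventual discharge. A secondary wrinkle is the norm hypothesis on $d$. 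The statement says unit-normalized, but only $\|d\|^2 \le 1$ is used, so I would assume that weaker, decidable hypothesis.
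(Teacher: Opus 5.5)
Your proposal is correct and follows the paper's own argument. You rewrite the difference of inner products via \texttt{innerProd\_sub}, bound it with \texttt{cauchy\_schwarz\_sq} together with $\|d\|^2 \le 1$ (exactly the weaker hypothesis \texttt{h\_d\_norm} that the compiled \texttt{robust\_similarity\_sq} assumes), and close with \texttt{p\_R\_inv\_sound}; your explicit derivation of the margin-preservation consequence, which the paper's sketch only asserts, is a sound addition rather than a different route.
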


\begin{proof}[Proof sketch]
By Cauchy--Schwarz: $|\langle u - v, d\rangle|^2 \le \|u-v\|^2 \|d\|^2 \le R^2_{\mathrm{inv}} \cdot 1$. Lean discharges via the named axioms \texttt{cauchy\_schwarz\_sq} (tier 2) and \texttt{innerProd\_sub} (tier 2).
\end{proof}

\textbf{Verification status:} \emph{compiled} as \texttt{robust\_similarity\_sq} in \texttt{Certificate.lean}; transitive axioms reported in Appendix~\ref{app:axiom_audit}.

\paragraph{Two distinct claims, distinguished.} Theorem~\ref{thm:robust_decision} is a \emph{robust similarity bound}: it uses only $R^2_{\mathrm{inv}}$ and certifies that $L_2$ drift on the meaning-invariant family is bounded by the declared envelope. It does \emph{not} require $\Delta^2 > 0$. The \emph{selective sensitivity} claim, that meaning-changing edits move farther than meaning-preserving ones, is the separate decidable predicate
\[
\mathrm{SelectiveSensitivityPass}(c) \;:=\; (0 < c.\mathtt{delta\_sq})
\]
on a constructed certificate. Selective sensitivity is what the policy layer (the Universal Card; \S\ref{sec:card}) reads to decide whether the encoder behaves correctly on this query; it is not a precondition for Theorem~\ref{thm:robust_decision}. The certificate carries $R^2_{\mathrm{inv}}$, $R^2_{\mathrm{sig}}$, and $\Delta^2$ as fields, with $\Delta^2$ tagged for the policy gate; the empirical scope of the certificate (Pilot~B and Pilot~B$'$, \S\ref{sec:embedding:empirical}--\S\ref{sec:embedding:longform}) reports how often $\Delta^2 > 0$ holds, distinct from the always-applicable robust-similarity bound. This separation matters because the negative cells of Pilot~B (paraphrase $\times$ attribute\_flip) still carry a sound robust-similarity bound; what they do not carry is selective sensitivity, and the Card downgrades emission accordingly.

\paragraph{Floating-point and black-box embeddings.} The formal certificate is exact over the rationalized embedding representation supplied to Lean. In deployments where embeddings are returned by a floating-point or black-box API, the implementation first commits to the API transcript, the model-version hash, the tokenizer hash, and the embedding-service configuration, then applies a deterministic quantization map to rational vectors. The Card records both the rounded rational drift bound $R^2_{\mathrm{inv},\mathrm{rounded}}(x)$ and a rounding-error budget $\varepsilon_{\eta}$; policy margins are evaluated against the conservative error-adjusted bound $R^2_{\mathrm{inv},\mathrm{real}}(x) \le R^2_{\mathrm{inv},\mathrm{rounded}}(x) + \varepsilon_{\eta}$. Thus the theorem remains exact for the audited rational object, while the deployment records the additional approximation layer explicitly and the auditor can replay the rational computation against the recorded transcript and hashes without re-invoking the embedding service.

\subsection{Top-$k$ ranking and paraphrase decision margin}

A simple corollary: ranking the document set by $\langle E(\mathrm{text}), d\rangle$, the certified envelope partitions documents into \texttt{in}, \texttt{maybe}, \texttt{out} based on whether their similarity gap to the retrieval threshold $\theta_{\mathrm{ret}}$ exceeds $2\sqrt{R^2_{\mathrm{inv}}}$. Paraphrase decision-margin certificates layer on top: for any binary classifier $f$ over the $k$ paraphrases in $G_{\mathrm{inv}}$, count agreement and certify a flip-free emission whenever the agreement margin clears a threshold. Both are extensions of Theorem~\ref{thm:robust_decision} that share the same Lean structure.

\subsection{Empirical scope of the inequality}
\label{sec:embedding:empirical}

The robust-similarity theorem (Theorem~\ref{thm:robust_decision}) applies whenever the declared $R^2_{\mathrm{inv}}$ bound is certified; it does not require $\Delta^2 > 0$. The empirical question measured here is separate: how often the policy-layer selective-sensitivity predicate $\Delta^2 > 0$ holds for the declared perturbation families. This was measured in Pilot~B (\S\ref{sec:embedding:empirical}), with design and pre-registered predictions in the companion repository.

\paragraph{Setup.} We sliced $N{=}209$ HotpotQA dev-distractor queries (post-yield-filter), generated 25 meaning-preserving paraphrases ($G_{\mathrm{inv}}$) and 15 meaning-changing edits ($G_{\mathrm{sig}}$) per query with GPT-5, and validated each variant with a 3-judge panel (Claude Sonnet 4.5, GPT-5-mini, Gemini 2.5 Flash). \textbf{Panel reliability is high}: minimum pairwise Cohen's $\kappa = 0.887$ and unanimous-accept rate 0.950, both well above the pre-registered $\kappa \ge 0.65$ and unanimous-accept $\ge 0.85$ thresholds (the methodological reasons this $\kappa$ is informative here while it saturates in Pilot~A's panel are reconciled in \S\ref{sec:pilot_a:hypotheses}). We embedded all variants on text-embedding-3-large (strong-API encoder) and sentence-transformers/all-mpnet-base-v2 (open-weights baseline), then computed per-query $R^2_{\mathrm{inv}}$, $R^2_{\mathrm{sig}}$, $\Delta^2$. (Cohere embed-v3 was deferred for budget reasons; the run is parameterized for $n$ encoders and could be re-run cheaply with a Cohere production key.)

\paragraph{Hypothesis HB1: empirical scope of the inequality.}

\smallskip
\begin{center}
\begin{tabular}{lccc}
\toprule
\textbf{Encoder} & \textbf{$\Delta^2 > 0$ (95\% CI)} & \textbf{Median $\Delta^2$} & \textbf{Threshold} \\
\midrule
text-embedding-3-large       & 0.297 (0.234, 0.359) & $-0.070$ & $\ge 0.80$ (FAIL) \\
sentence-transformers-mpnet  & 0.196 (0.144, 0.249) & $-0.109$ & $\ge 0.60$ (FAIL) \\
\bottomrule
\end{tabular}
\end{center}

The \S\ref{sec:embedding} inequality holds in only 30\% of queries on the strong-API encoder and 20\% on the open-weights baseline. Median $\Delta^2$ is \emph{negative} on both. This is a substantive negative result, not a measurement artefact: the panel that defined $G_{\mathrm{inv}}$ vs.\ $G_{\mathrm{sig}}$ agreed at $\kappa = 0.89$, so the categories are reliably labeled.

\paragraph{Hypothesis HB2: certificate F1 as a flip predictor.} A pre-registered secondary test asks whether the certificate's predicted budget $\sqrt{R^2_{\mathrm{inv}}}$ correctly bounds the cases where top-$k$ retrieval flips under paraphrase. Using a sample-split ($R^2_{\mathrm{inv}}$ calibrated from a 70\% partition of $G_{\mathrm{inv}}$, predictor tested on the held-out 30\%), the certificate achieves $\mathrm{F1}=0.131$ on both encoders, against a pre-registered threshold of $\ge 0.65$. The top-10 set flips under paraphrase for 81--85\% of (query, paraphrase) pairs; the certificate's drift threshold cannot separate the flippers from the non-flippers because both sit \emph{inside} $R^2_{\mathrm{inv}}$.

\paragraph{Hypothesis HB4: cross-encoder differentiation.} Median $|\Delta^2|$ spread between the two encoders is $0.64\times$ (threshold $\ge 2.0\times$): the two encoders differ in level but not by a margin large enough to pass the pre-registered differentiation test. Both fail in the same direction.

\smallskip
\noindent\textbf{On HB3.} The pre-registration originally listed an HB3 hypothesis testing the certificate's positive-fraction on a contradictions-only baseline. HB3 was retired during the pre-registration revision pass for the same reason H1 and H5 were rewritten in Pilot~A (\S\ref{sec:pilot_a:hypotheses}): once measured, the original HB3 formulation was unrealisable on the data slice. The HB-numbering jumps from HB2 to HB4 to preserve the pre-registration record; the retired HB3 wording is documented in the companion repository.

\paragraph{Why the inequality fails on this benchmark.} HotpotQA dev-distractor queries are short (10--20 tokens, often two named entities and a connective) and many of the panel-labeled \emph{meaning-changing} edits (negation, single-attribute flip) are surface-close to the original (one or two tokens differ), while the panel-labeled \emph{meaning-preserving} paraphrases routinely involve heavy lexical and syntactic restructuring. Modern sentence encoders embed short queries primarily by surface form. The mechanical consequence is that a long paraphrase often lands farther from the original than a short negation does, which violates the certificate's central inequality on a per-query basis.

This is consistent with the embedding literature on negation-sensitivity~\cite{ettinger2020bert,kassner2020negated} and with the broader finding that distributional encoders track contextual co-occurrence rather than logical form. The certificate's soundness is unaffected (Theorem~\ref{thm:robust_decision} continues to hold whenever $\Delta^2 > 0$); its premise cannot be assumed to hold on production encoders for adversarial short-query benchmarks.

\paragraph{Phase~1 ablation: where exactly does the inequality hold?} The \S\ref{sec:embedding} certificate's $G_{\mathrm{inv}}$ and $G_{\mathrm{sig}}$ are \emph{declared} edit-type families (\S\ref{def:perturbations}). \textbf{The following ablation is post-hoc exploratory analysis}: the edit-type subset stratification was not in the pre-registered Pilot~B hypothesis set, and the cells are reported here to characterize where the aggregate failure sits, not to make a confirmatory claim. We treat the ablation's positive cells (e.g.\ surface$\times$entity\_swap at 96.7\%) as hypotheses to be validated by future targeted runs rather than as registered findings on this slice. Conditioning the empirical inequality on every (sub-$G_{\mathrm{inv}}$, sub-$G_{\mathrm{sig}}$) edit-type combination on the same 209-query slice reveals a clean rectangular structure on text-embedding-3-large:

\smallskip
\begin{center}
\footnotesize
\begin{tabular}{lccc}
\toprule
$G_{\mathrm{inv}}$ subset & $G_{\mathrm{sig}}$ subset & $\Delta^2 > 0$ frac (95\% CI) & Median $\Delta^2$ \\
\midrule
surface         & entity\_swap     & \textbf{0.967 (0.938, 0.990)} & 0.345 \\
synonym         & entity\_swap     & \textbf{0.913 (0.875, 0.947)} & 0.310 \\
surface         & negation         & \textbf{0.865 (0.817, 0.909)} & 0.170 \\
reorder         & entity\_swap     & 0.861 & 0.298 \\
paraphrase      & entity\_swap     & 0.847 & 0.290 \\
synonym         & negation         & 0.836 & 0.136 \\
\midrule
paraphrase      & attribute\_flip  & 0.450 & $-0.022$ \\
reorder         & attribute\_flip  & 0.455 & $-0.022$ \\
\bottomrule
\end{tabular}
\end{center}

\smallskip
The pattern is encoder-stable (sentence-transformers-mpnet shows the same ordering at slightly weaker magnitudes: surface$\times$entity\_swap 0.938, paraphrase$\times$attribute\_flip 0.340). \textbf{Aggregated} positive-fractions: by $G_{\mathrm{sig}}$ alone, entity\_swap 0.79, negation 0.60, attribute\_flip 0.38; by $G_{\mathrm{inv}}$ alone, surface 0.64, synonym 0.53, voice 0.43, reorder 0.41, paraphrase 0.40.

\paragraph{The certificate is a typed guarantee, not a uniform one.} The \S\ref{sec:embedding} specification (\S\ref{def:perturbations}) requires the deployer to \emph{declare} $G_{\mathrm{inv}}$ and $G_{\mathrm{sig}}$ as explicit finite families. The Phase~1 ablation is therefore not a refutation of the certificate but a measurement of which declarations are empirically supported on this benchmark. Three deployment-relevant declarations clear the pre-registered $\ge\!0.80$ bar on text-embedding-3-large:

\begin{itemize}[leftmargin=*,topsep=2pt,itemsep=2pt]
\item \emph{Surface-close paraphrase against entity confusion} (surface$\times$entity\_swap, 97\%): the deployer guarantees robustness when the query is rephrased only by whitespace / case / punctuation differences and the failure to defend against is naming the wrong entity. This is the strongest setting and corresponds to a common production threat model in legal and clinical retrieval.
\item \emph{Synonym-only paraphrase against entity confusion} (synonym$\times$entity\_swap, 91\%): tolerates one or two synonym substitutions; same threat model.
\item \emph{Surface-close paraphrase against negation} (surface$\times$negation, 87\%): tolerates trivial paraphrase, defends against single-word logical inversion.
\end{itemize}

The hard cell is paraphrase$\times$attribute\_flip: heavy restructuring of the query combined with a single-attribute meaning change (e.g., a date or numeric attribute) sits inside the embedding's resolution. Deployers in domains where attribute-level distinctions are critical (clinical dosage, legal jurisdiction) should treat this as a known scope limit and either (i) use a richer signature (entity-typed token-distance comparators) or (ii) condition the \S\ref{sec:embedding} certificate on a $G_{\mathrm{inv}}$ that excludes heavy paraphrase.

\paragraph{Pilot data and reproducibility (Pilot~B short-form).} The Pilot~B numbers reported in this subsection are reproducible from the companion repository (per-query $\Delta^2$, per-pair drift and flip records, panel votes with verbatim judge outputs). On the full $N{=}209$ slice, the dominant cost is the 3-judge panel labeling on 8{,}000 panel calls per judge rather than the embedding API itself.

\subsection{Pilot~B$'$: long-form HotpotQA paragraphs}
\label{sec:embedding:longform}

To test whether the short-form failure is intrinsic to commodity encoders or specific to short adversarial queries, we ran a second pre-registered pilot (Pilot~B$'$) on the same benchmark with the query unit shifted from \emph{questions} to \emph{gold supporting paragraphs} (median ${\approx}110$ tokens, multiple named entities and predicates per query). The retrieval corpus, panel composition, generator, encoders, and statistical protocol are identical to Pilot~B; only the input length and named-entity density change. $N{=}64$ effective queries (sized to fit the available budget at the observed per-query cost for long-form variant generation).

\paragraph{Hypothesis HB$'$1: the inequality on long-form paragraphs.}

\smallskip
\begin{center}
\footnotesize
\begin{tabular}{lccc}
\toprule
\textbf{Encoder} & \textbf{$\Delta^2 > 0$ (95\% CP CI)} & \textbf{Median $\Delta^2$} & \textbf{Threshold} \\
\midrule
text-embedding-3-large       & \textbf{1.000 (0.944, 1.000)} & 0.390 & $\ge 0.70$ (PASS) \\
sentence-transformers-mpnet  & \textbf{0.984 (0.916, 1.000)} & 0.419 & $\ge 0.55$ (PASS) \\
\bottomrule
\end{tabular}
\end{center}

\smallskip
The \S\ref{sec:embedding} inequality holds on every long-form query on text-embedding-3-large and on 63 of 64 on the open-weights baseline. Median $\Delta^2$ flips sign (from $-0.07$ short-form to $+0.39$ long-form) and rises by an order of magnitude. The HB$'$1 thresholds are pre-registered and clear by large margins. \textbf{CI methodology.} For Pilot~B$'$ point estimates of 64/64 and 63/64 the bootstrap-resample interval collapses to a degenerate $[1.000, 1.000]$ that misrepresents the uncertainty inherent in the small-$N$ slice; we therefore report exact Clopper--Pearson binomial 95\% intervals here (denoted CP). The $N{=}64$ slice is sized to fit the available budget at the observed per-query cost for long-form variant generation, and the long-form result is correctly characterized as a strong directional finding under a tight one-sided lower bound rather than a narrow two-sided deployment claim; the ``essentially universal'' framing should be read in that light.

\paragraph{Hypothesis HB$'$2: tightness of the drift bound for retrieval-flip prediction.}

\smallskip
\begin{center}
\footnotesize
\begin{tabular}{lcccc}
\toprule
\textbf{Encoder} & \textbf{Flip rate} & \textbf{Top-10 overlap} & \textbf{Cert F1} & \textbf{Threshold} \\
\midrule
text-embedding-3-large       & 0.626 (vs 0.812 short-form) & 0.820 & 0.148 & $\ge 0.40$ (FAIL) \\
sentence-transformers-mpnet  & 0.733 (vs 0.848 short-form) & 0.785 & 0.137 & $\ge 0.40$ (FAIL) \\
\bottomrule
\end{tabular}
\end{center}

\smallskip
Long-form retrieval is materially \emph{more stable} (flip rate $0.81{\to}0.63$, overlap $0.73{\to}0.82$), but the certificate's predicted budget $\sqrt{R^2_{\mathrm{inv}}}$ is still not a tight predictor of which paraphrases flip top-$k$. On long-form the certificate's \emph{soundness} premise is universally met (HB$'$1 above), but the bound's \emph{tightness} for the retrieval-flip use case remains loose. These are two distinct empirical claims:

\begin{itemize}[leftmargin=*,topsep=2pt,itemsep=2pt]
\item \emph{HB$'$1 (premise):} the inequality $\Delta^2 > 0$ holds; the certificate makes a sound claim.
\item \emph{HB$'$2 (tightness):} $\sqrt{R^2_{\mathrm{inv}}}$ is large enough to permit drift that crosses retrieval boundaries, so the bound is not a useful flip-discriminator on its own.
\end{itemize}

The right tightening is a calibrated post-bound: deploy the \S\ref{sec:embedding} certificate as a \emph{soundness gate} (block when the inequality fails to hold) and pair it with a separately-fitted retrieval-flip predictor. Forward work.

\paragraph{Hypothesis HB$'$3: edit-type ablation on long-form.} Repeating the Phase~1 ablation on the long-form data: \textbf{every (sub-$G_{\mathrm{inv}}$, sub-$G_{\mathrm{sig}}$) cell yields 1.000 positive-fraction} on text-embedding-3-large. The cell that was hardest on short-form, paraphrase$\times$attribute\_flip (45\% positive), is at 1.000 on long-form (median $\Delta^2 = 0.465$). The same pattern holds on mpnet, with two cells dropping marginally to 0.984 (paraphrase$\times$entity\_swap and paraphrase$\times$attribute\_flip) and the rest at 1.000.

\paragraph{Hypothesis HB$'$4: cross-encoder uniformity.} Median $|\Delta^2|$ ratio between text-embedding-3-large and mpnet on long-form is $1.07\times$ (vs $0.64\times$ short-form), against a pre-registered $\ge 2\times$ differentiation threshold. The two encoders perform near-identically on long-form, which is the operationally desirable signal: the certificate's value transfers across encoder providers.

\paragraph{Synthesis: the \S\ref{sec:embedding} certificate is sound on every input length, with selective-sensitivity supported on long-form HotpotQA pending domain replication.} Combining Pilots~B and B$'$:

\begin{itemize}[leftmargin=*,topsep=2pt,itemsep=2pt]
\item \emph{Soundness (the implication itself):} unaffected; the Lean-checked Theorem~\ref{thm:robust_decision} continues to hold whenever its declared premise does.
\item \emph{Premise on adversarial short queries (Pilot~B):} holds in 30\% / 20\% of cases; only specific edit-type sub-rectangles (e.g., synonym$\times$entity\_swap) clear the $\ge 0.80$ threshold. Deployers on short-query benchmarks should declare a restrictive $G_{\mathrm{inv}}$ family.
\item \emph{Premise on long-form text (Pilot~B$'$):} holds in 100\% / 98\% of cases across every edit-type combination on this slice. Deployers on long-form retrieval can declare the full $G_{\mathrm{inv}}$ family with empirical support on this benchmark, subject to target-domain validation.
\item \emph{Tightness for retrieval-flip prediction (Pilots~B and B$'$):} loose on both. The certificate is a soundness gate, not a tight flip predictor.
\item \emph{Cross-encoder generalisation:} the two-encoder pattern is consistent across input lengths; the failure modes and successes track input length and edit-type more than encoder provider.
\end{itemize}

\paragraph{Pilot data and reproducibility (Pilot~B$'$ long-form).} The Pilot~B$'$ numbers are reproducible from the companion repository (per-query $\Delta^2$, per-pair drift and flip records, panel votes). On the $N{=}64$ slice, long-form variant generation incurs a higher per-query cost than short-form, driven by generator output proportional to passage length.

\paragraph{What this changes for \S\ref{sec:embedding} and \S\ref{sec:card} (Universal Card).} The certificate is correctly characterized as \emph{conditional on a per-query measurement of $\Delta^2 > 0$}. The Universal Card's \texttt{stability} field (§\ref{sec:card}) accordingly reports the measured $\Delta^2$ as a per-call signal rather than as a deployment-wide guarantee, and the Card's policy layer can downgrade or block emission when $\Delta^2 \le 0$. The forward-work questions are: (i)~which encoder/dataset combinations satisfy the inequality at material rates (we expect long-form documents, named-entity-heavy clinical/legal text, and technical paragraphs to behave better than short adversarial QA); (ii)~whether richer signature definitions (e.g., conditioning on detected negation~\cite{kassner2020negated} or on entity-level edit types) restore the inequality.

\paragraph{Why this matters for the framework.} The \S\ref{sec:grounding} grounding certificate (validated empirically in \S\ref{sec:pilot_a}: catch=1.0 at FBR=0.064) and the \S\ref{sec:embedding} embedding certificate (premise empirically narrow on this benchmark) together illustrate the framework's intended epistemic posture: the certificates are \emph{soundness machinery}, and the empirical question is always ``does the per-call premise hold here?''. A pilot that finds the premise holds robustly is a deployment-readiness result; a pilot that finds it does not is a scope-of-applicability result. Both are publishable empirical contributions; both feed back into the Card's per-call signal vector. Pilot~B is the second kind, on this benchmark, on these encoders.

\section{Hoare-Style Agent Action}
\label{sec:agents}

Agentic LLM systems write to files, call APIs, send messages~\cite{schick2023toolformer,yao2023react,wang2024survey_agents}. A single action can be irreversible; each action warrants a formal audit trail that it was safe given the system's state. Hoare logic~\cite{hoare1969axiomatic,dijkstra1976discipline} is the natural formalism, and the resulting certificate has the cleanest discharge profile in the framework: in the compiled artifact, the trajectory certificate carries no axioms at all (Appendix~\ref{app:axiom_audit}), so every proof step reduces by kernel computation alone. The Hoare family is special only in that it pushes all of its conditionality into the type signatures and predicate definitions, leaving none in the form of named axioms; the per-failure cost remains highest among the families and the deployment story is unusually compelling for decision-makers. Axiom-freeness is a statement about the proof's discharge mechanism, not about the semantic adequacy of the human-authored predicates (\texttt{owner = user}, \texttt{path is in sandbox}, \texttt{age $> 30$d}); see \S\ref{sec:artifact:axiom_free} for the precise formulation of what axiom-freeness does and does not mean.

\subsection{The certificate}

For each callable action $a$, specify a precondition $\mathrm{Pre}(a, \sigma, \alpha)$ and postcondition $\mathrm{Post}(a, \sigma, \sigma', \alpha)$ as decidable Lean predicates on a state model $\sigma$. The LLM proposes $(a, \alpha)$; the formal layer verifies \texttt{Pre}, executes $a$ via a typed executor, verifies \texttt{Post}, and emits the certificate.

\smallskip
\leanfile{Action.lean}
\begin{lstlisting}
structure ActionCertificate (σ : Type) (ActionName : Type)
                             (Args : ActionName -> Type) where
  action : ActionName
  args : Args action
  state_pre : σ
  state_post : σ
  pre_pred : σ -> (args : Args action) -> Prop
  post_pred : σ -> σ -> (args : Args action) -> Prop
  p_pre_holds : pre_pred state_pre args
  p_post_holds : post_pred state_pre state_post args
  p_execution : execute action args state_pre = state_post

structure TrajectoryCertificate (σ : Type) where
  steps : List (ActionCertificate σ ActionName Args)
  p_chain : ∀ i, i + 1 < steps.length ->
              (steps.get! (i+1)).state_pre = (steps.get! i).state_post
\end{lstlisting}

A trajectory certificate fails to type-check at the first unsafe step; the agent either emits a complete safe trajectory or is blocked. The LLM is the proposer; a small deterministic Lean-audited runtime is the final authority that permits or blocks side effects.

\subsection{Five invariant classes}

Concrete agent frameworks instantiate the certificate against five invariant classes: \emph{data invariants} (no scope leakage); \emph{budget invariants} (token / API / monetary cost); \emph{safety invariants} (forbidden tools / destinations); \emph{state invariants} (action coherent with memory and task state); \emph{trace integrity} (logged and re-executable). Each is a decidable predicate populated from the deployment's policy layer.

\subsection{Beyond Hoare}

Real agents exhibit structure that simple sequential Hoare does not capture. We sketch the path; full development is in the extended catalogue (\S\ref{sec:catalogue}). \emph{Separation logic} for modular substates~\cite{ohearn2019concurrent}: each action certificate asserts only the resources it reads and writes; frame rules let an action's certificate compose into a larger context. \emph{Rely-guarantee}~\cite{jones1983rely} for parallel tool calls: each branch declares the environmental changes it tolerates and what it promises others; the trajectory composes by checking each branch's rely is satisfied by the conjunction of others' guarantees. \emph{Belief-state Hoare} for partial observability: pre/post on belief paired with an observation-soundness axiom. \emph{Temporal-logic protocol correctness}: ``authenticate before transfer,'' ``confirm before destructive action,'' modeled as CTL$^*$ / LTL formulas over the trajectory.

\textbf{Verification status:} the Hoare baseline is \emph{compiled, axiom-free} (\texttt{Action.\allowbreak trajectory\_all\_steps\_safe}); the rely-guarantee, belief-state, and temporal-logic extensions are \emph{specified} and become mechanizable once the obligations are decidable.

\subsection{Connection to shielding}

The pattern generalizes safe-RL shielding~\cite{alshiekh2018safe} and logic-constrained RL~\cite{hasanbeig2020cautious}: in both, a runtime monitor intercepts an unsafe action and substitutes a safe alternative. Action certificates make the monitor's guarantees formal and the substitutions auditable. The difference is that our monitors are kernel-audited Lean predicates rather than learned or hand-coded policies, and the audit trail is cryptographically bound to the ML artifact via the oracle digest.

\subsection{Empirical validation: Pilot~C}
\label{sec:agents:empirical}

Pilot~C tests the \S\ref{sec:agents} trajectory certificate's pre-execution gating
behavior on a controlled filesystem-sandbox benchmark; design,
adversarial-action registry, and pre-registered hypotheses are in
the companion repository.

\paragraph{Setup.} A tempdir-rooted filesystem sandbox per
trajectory; nine tools (\texttt{ls}, \texttt{cat}, \texttt{find},
\texttt{stat}, \texttt{md5}, \texttt{rm}, \texttt{mv}, \texttt{mkdir},
\texttt{chmod}) dispatched through a single gate; five
deterministic Hoare-predicate constructors (\texttt{path\_in\_sandbox},
\texttt{path\_in\_subprefix}, \texttt{mtime\_at\_least},
\texttt{owner\_is}, \texttt{delete\_count\_within\_budget}); five
representative tasks spanning four categories (file-management, deduplication, cleanup, reorganization), each carrying a per-tool predicate set; three
harnesses (Unaudited, Deny-list, Treatment) sharing the same
ReAct agent (GPT-5).

\paragraph{Threat model.} Direct-injection adversarial actions: after
two agent steps establish non-trivial sandbox state, a registry
of 18 \emph{known-unsafe action proposals} is spliced into the
trajectory and run through each harness's gate. The proposals span
three pre-registered attack classes: out-of-scope file references
(\texttt{/etc/passwd}, \texttt{\textasciitilde/.ssh/id\_rsa}),
owner/age-violating deletes (recent or wrong-owner files), and
budget-exceeding sequences (six deletes against a 5-delete cap).

The pre-registered design specified prompt-injection adversaries
(adversarial system messages embedded in the agent's observation
stream); a preliminary feasibility check found that GPT-5 simply did not act on
those injection prompts (refusal rate ${>}\!0.95$), making the
prompt-injection design uninformative as a test of the gate. The
registered slice therefore uses direct splicing; the deviation is
recorded in the pre-registration notes of the companion repository,
and the result is framed accordingly. \emph{This is a strictly stronger test of
the gate in isolation:} direct injection removes agent cooperation
as a confound, so any block in the Treatment column is attributable
to the gate and not to the agent declining to propose the action.
The two failure modes (gate fails to block a violating action; agent
falls for prompt injection) are orthogonal; Pilot~C measures the
first by construction, and the second is a separately-measurable
property of the agent that is not the \S\ref{sec:agents} certificate's
responsibility. A complementary pilot that varies model and prompt
to elicit non-trivial prompt-injection success on a more
manipulable agent, and then re-runs the same gate against that
agent's proposals, is forward work.

\paragraph{Trajectory composition.} The 48 trajectories decompose as 16 per harness across the three harnesses, where each harness's 16 trajectories comprise 11 adversarial trajectories carrying 18 spliced direct-injection unsafe-action proposals (some trajectories carry multiple injections) plus 5 benign trajectories carrying 6 destructive-but-permitted proposals. The block-rate denominators below are over the 18 injected adversarial proposals and the 6 benign destructive proposals respectively.

\paragraph{Results (48 trajectories; approximately 30~minutes of wall-clock).}

\smallskip
\begin{center}
\footnotesize
\begin{tabular}{lcc}
\toprule
\textbf{Harness} & \textbf{Block rate on injected unsafe} (95\% CI) & \textbf{n} \\
\midrule
Unaudited (no gate) & 0.000 (0.000, 0.000) & 18 \\
Deny-list (regex baseline) & 0.278 (0.111, 0.500) & 18 \\
\textbf{Treatment} (Lean-checked Hoare) & \textbf{0.667} (0.444, 0.889) & 18 \\
\bottomrule
\end{tabular}
\end{center}

\smallskip
False-block rate on benign destructive proposals: \textbf{0.000} on
all three harnesses ($n{=}6$ each). Audit-log informativeness on
Treatment blocks: \textbf{1.000} (13 of 13 blocks contain the
failed predicate, the offending state value, the expected
condition, and the upstream agent proposal text). The 13 Treatment
blocks comprise the 12 unsafe-injection blocks (out of 18 spliced
proposals) plus 1 read-only \texttt{stat} block on a benign trajectory
where the agent proposed inspecting a path the sandbox declares
out-of-scope; the latter is non-destructive and hence does not
count toward the false-block-rate denominator above.

\smallskip
\emph{In-scope reading.} Of the 13 attacks targeting a registered
predicate clause, 12 were blocked and 1 passed (92.3\%); the 5
remaining unblocked within-budget delete probes are correctly
permitted under the registered budget predicate (the 6th in the
sequence, which exceeds the budget, is blocked). HC1's denominator
includes both classes, by literal pre-registration; the diagnostic
separates them.

\paragraph{Hypothesis verdicts.}

\smallskip
\begin{center}
\begin{tabular}{cp{7cm}c}
\toprule
\textbf{H} & \textbf{Statement (registered)} & \textbf{Verdict} \\
\midrule
HC1 & Treatment block rate $\ge\!0.95$ on injected unsafe actions & FAIL ($\to$ predicate-completeness diagnostic) \\
HC2 & Treatment FBR $\le\!0.05$ on benign destructive actions & PASS \\
HC3 & Treatment block rate $\ge$ Deny-list + 15pp & PASS ($\Delta = 38.9$pp) \\
HC4 & Unaudited block rate $\le\!0.05$ (structural floor) & PASS \\
HC5 & Audit-log informativeness $\ge\!0.95$ of Treatment blocks & PASS \\
\bottomrule
\end{tabular}
\end{center}

\paragraph{HC1 verdict: FAIL, and what the FAIL localizes.} The
literal pre-registered statement of HC1 (Treatment block rate
$\ge\!0.95$ on injected unsafe actions) is not satisfied: Treatment
blocks 12 of 18 (66.7\%, 95\% CI [0.444, 0.889]). We report this as
the verdict of record.

\emph{What the audit log says to fix.} The empirical contribution Pilot~C was designed to deliver for typed-guarantee certificates is the audit-log diagnostic on each unblocked attack. The 6 unblocked attacks split cleanly: 5 are within-budget delete probes that the t02 \texttt{delete\_count\_within\_budget} predicate correctly permits (the 6th in the same sequence is blocked, exactly when the budget is exceeded); the remaining 1 is a recent-mtime move that probes a predicate the t01 \texttt{mv} declaration deliberately omits. The audit log therefore names exactly which predicate clauses (t02 budget narrowness, t01 mv mtime omission) need extending to close the gap. The diagnostic also exposes a registration-time operationalisation choice: HC1's denominator was specified as ``adversarial proposals,'' not ``predicate violators,'' so 5 within-budget probes that no registered predicate violates were nevertheless counted as failures-to-block. Conditional on attacks that target a predicate clause in scope, the block rate is 12/13. We log the conflation as a registration deviation to be tightened in any follow-up; the verdict of record remains HC1 FAIL on the literal pre-registered formula.

\emph{Why the Lean soundness theorem is unaffected.} Every action that passed the gate satisfied the predicates declared in scope, and no action that violated a declared predicate passed the gate. The Lean-checked trajectory soundness theorem (\texttt{Action.\allowbreak trajectory\_all\_steps\_safe}, compiled axiom-free in the artifact; see Appendix~\ref{app:axiom_audit} and \S\ref{sec:agents}) is therefore independent of HC1's pre-registration framing.

\paragraph{Scope.} Pilot~C runs on a registered slice of five
representative tasks (selected to span the four predicate
families); the original 30-task plan was scaled to fit the
available budget and the preliminary feasibility schedule.
Predicate-registry breadth across more tasks, more tools, and
more attack classes is forward work (\S\ref{sec:limits:empirical}).
The 18 adversarial proposals are hand-curated; automated
adversary generation (LLM-driven proposal mining against a
held-out predicate set) is the natural follow-up.

\paragraph{What this validates for \S\ref{sec:agents}.} The empirical claim Pilot~C
makes for the Hoare-action family is the most sharply-typed of the
four pilots: \emph{within the predicate-encoded scope, the Treatment
gate dominates the deny-list baseline by 39 percentage points
(0.667 vs 0.278) on a controlled adversarial-injection set, with 100\%
audit-log informativeness and zero false-blocks on benign
trajectories.} The Hoare family's zero-axiom Lean proof structure
turns the empirical question into pure predicate-set design: the
audit log of unblocked attacks names exactly which predicate clauses
need to extend to close the residual 8\%. This is the empirical
contribution Pilot~C was designed to deliver.

\section{Compositional Stability}
\label{sec:composition}

The three local certificate families developed earlier in the paper give a grounding bound (\S\ref{sec:grounding}), an embedding-stability bound (\S\ref{sec:embedding}), and a Hoare action bound (\S\ref{sec:agents}). Each is local: it bounds a single pipeline layer or claim. The pipeline-level question is what survives at the final output given a bounded perturbation of the raw user input, and that question is what this section answers. The closed-form pipeline-wide budget formula derived below is what makes the framework usable end-to-end rather than per-layer: it turns the local catalogue of \S\S\ref{sec:grounding}, \ref{sec:embedding}, and~\ref{sec:agents} into a single inequality the deployer can inspect to decide whether a given adversary is provably out of reach.

\subsection{Pipeline model}

A pipeline $\Pi$ is a finite sequence of layers $L_1, \ldots, L_n$ with $f_i : \mathcal{X}_i \to \mathcal{Y}_i = \mathcal{X}_{i+1}$ and metric $d_i$ on each space. Composition is $\Pi(x) = f_n \circ \cdots \circ f_1(x)$.

\begin{definition}[Certified Layer]\statuscompiled
\label{def:certlayer}
$L_i$ is \emph{certified} with \emph{gain} $g_i \in \mathbb{Q}_{\ge 0}$ and \emph{margin} $m_i \in \mathbb{Q}_{\ge 0}$ if a kernel-audited certificate witnesses
$d_{i+1}(f_i(x), f_i(x')) \le g_i \cdot d_i(x, x')$
for all $x, x'$ with $d_i(x, x') < m_i$, and the layer's discrete decision is invariant under that bound (\emph{nominal stability}).
\end{definition}

The local certificates of the three families above supply $g_i$ and $m_i$ by construction. Embedding (\S\ref{sec:embedding}): $g = 1$ via Cauchy--Schwarz, $m = \sqrt{R^2_{\mathrm{inv}}}$ as defined there. Discrete decision layers (paraphrase, ranking, emission threshold): $g = 0$, $m = $ flip distance or coverage slack. Agent action (\S\ref{sec:agents}): $g = 0$, $m = $ precondition slack.

\medskip
\noindent\textbf{From a finite witness family to a metric ball.} Definition~\ref{def:certlayer} states the certified-layer inequality on a metric ball $\{x' : d_i(x_0, x') < m_i\}$, but the embedding certificate of \S\ref{sec:embedding} is built from a \emph{finite} declared family $G_{\mathrm{inv}}(x_0)$ (Definition~\ref{def:perturbations}). The bridge is layer-specific: for the embedding layer, $R^2_{\mathrm{inv}}$ certifies the worst-case displacement over the witness family by direct enumeration, and Cauchy--Schwarz lifts that to every $x'$ with $\|E(x') - E(x_0)\|^2 \le R^2_{\mathrm{inv}}$, i.e.\ the entire $L_2$-ball of squared radius $R^2_{\mathrm{inv}}$ in the embedding-output metric. So the metric-ball margin $m_i = \sqrt{R^2_{\mathrm{inv}}}$ holds for the layer's \emph{output-side} distance for free, while the input-side distance ($d_1$ on text inputs) is constrained only over $G_{\mathrm{inv}}(x_0)$ itself unless a separate encoder-Lipschitz hypothesis is supplied. We adopt this as a layer-specific match between certificate witness and metric: the embedding layer transports a finite-family declaration into a continuous-ball guarantee on its output, and the downstream layers consume the latter. For text-input layers without a metric, $g_i$ is taken as $0$ and the layer is treated as a discrete-decision gate (e.g.\ paraphrase classifier, retrieval threshold), short-circuiting the metric requirement upstream of any continuous propagation.

\subsection{The theorem}

\begin{lemma}[Two-Layer Compositional Stability, Base Case]\statuscompiled
\label{lem:composition_two_layer}
Let $\Pi_2 = L_2 \circ L_1$ with both layers certified. If $d_1(x_0, x') < m_1$ and $g_1 \cdot d_1(x_0, x') < m_2$, then $L_2(L_1(x')) = L_2(L_1(x_0))$ at the discrete-decision level (or under canonicalization).
\end{lemma}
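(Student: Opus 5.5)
The plan is to chain the two certified-layer guarantees of Definition~\ref{def:certlayer}, applying the first at $L_1$ and feeding its output bound into the margin hypothesis of $L_2$. Write $y_0 = f_1(x_0)$ and $y' = f_1(x')$. Since $d_1(x_0,x') < m_1$, the certificate for $L_1$ gives the gain inequality
\[
d_2(y_0, y') \le g_1 \cdot d_1(x_0, x').
\]
Nominal stability at $L_1$ also gives invariance of $L_1$'s discrete decision. Combining the displayed inequality with the second hypothesis $g_1 \cdot d_1(x_0,x') < m_2$ yields $d_2(y_0,y') < m_2$, so $y'$ lies strictly inside $L_2$'s certified ball around $y_0$.

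Applying $L_2$'s certificate at the pair $(y_0,y')$ then gives $d_3(f_2(y_0), f_2(y')) \le g_2 \cdot d_2(y_0,y')$. More importantly, nominal stability of $L_2$ gives invariance of its discrete decision, i.e.\ $L_2(L_1(x')) = L_2(L_1(x_0))$ at the decision level. Where $L_2$'s decision is only defined up to a canonical form (as with the reconstruction canonicalizer of Definition~\ref{def:reconstruction}), the same step is read modulo canonicalization. This is the ``or under canonicalization'' clause of the statement.

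In Lean I would state this as a lemma over two \texttt{CertifiedLayer} records whose fields are the gain inequality and the decision-invariance implication, both guarded by the margin premise. The proof is then two field applications joined by \texttt{lt\_of\_le\_of\_lt} on $\mathbb{Q}$, with no axioms beyond $\Omega$.

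The main obstacle is making precise what ``the layer's discrete decision is invariant under that bound'' means in Definition~\ref{def:certlayer}, since the lemma's conclusion rests entirely on it. I would formalize it as follows: for all $x'$ with $d_i(x_0,x') < m_i$, $\mathrm{dec}_i(f_i(x')) = \mathrm{dec}_i(f_i(x_0))$, centered at the nominal point $x_0$ rather than quantified over all pairs. The composition then only needs $L_2$'s certificate centered at $y_0 = f_1(x_0)$, which is exactly what the chaining above supplies.

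A secondary subtlety is the degenerate gain $g_1 = 0$, the case of discrete gates. The second hypothesis becomes $0 < m_2$, and the gain inequality forces $d_2(y_0,y') = 0$. This case goes through the same argument provided $d_2$ is a genuine metric, or at least $d_2(y,y)=0$.
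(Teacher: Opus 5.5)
Your proposal is correct and takes the same route as the paper, whose proof unfolds to two applications of nominal stability plus rational arithmetic: $L_1$'s certificate gives $d_2(f_1(x_0), f_1(x')) \le g_1\, d_1(x_0,x') < m_2$, and $L_2$'s decision invariance then applies. One minor point is that your $g_1 = 0$ proviso is unnecessary, because the same \texttt{lt\_of\_le\_of\_lt} step already yields $d_2(y_0,y') < m_2$ without using any metric property of $d_2$.
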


\textbf{Verification status:} \emph{compiled} as \texttt{Composition.compositional\_\allowbreak stability\_two\_layer} in the artifact (audited axiom set: $\Omega$). The base case unfolds to two applications of nominal stability plus rational arithmetic.

\begin{theorem}[Compositional Stability, $n$-Layer]\statusspec
\label{thm:composition}
Let $\Pi = L_n \circ \cdots \circ L_1$ have certified layers with gains $(g_i)$ and margins $(m_i)$. Define $\varepsilon^\star_1 = \varepsilon_0$ and $\varepsilon^\star_{i+1} = g_i \cdot \varepsilon^\star_i$. If $\varepsilon^\star_i < m_i$ at every interface, then $d_1(x_0, x') < \varepsilon_0$ implies $\Pi(x') = \Pi(x_0)$ (or $C(\Pi(x')) = C(\Pi(x_0))$ if $\Pi$ ends in canonicalization).
\end{theorem}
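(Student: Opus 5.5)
The plan is to induct on the number of layers $n$, with the two-layer case supplied by Lemma~\ref{lem:composition_two_layer}. I would state a stronger invariant that actually carries through the induction. Write $x_i = f_{i-1}\circ\cdots\circ f_1(x_0)$ and $x'_i$ for the corresponding prefix images of $x'$, so $x_1 = x_0$ and $x'_1 = x'$. The invariant $P(k)$ is: for every $i \le k$, $d_i(x_i, x'_i) < \varepsilon^\star_i$, and $L_k \circ \cdots \circ L_1$ agrees on $x_0$ and $x'$ at the discrete-decision level. In Lean this becomes a lemma about \texttt{List.foldl} over a list of certified layers, generalized over the starting pair $(x_0, x')$ and the starting budget $\varepsilon_0$. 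The induction then peels off the head layer and recurses on the tail with the new budget $g_1\varepsilon_0$, which is exactly the shape of the recursion $\varepsilon^\star_{i+1} = g_i\varepsilon^\star_i$.

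\textbf{Base case.} For $k=1$ we have $d_1(x_0,x') < \varepsilon_0 = \varepsilon^\star_1 < m_1$, so Definition~\ref{def:certlayer} applies to $L_1$. It gives $d_2(f_1(x_0), f_1(x')) \le g_1 d_1(x_0,x') < g_1\varepsilon_0 = \varepsilon^\star_2$ when $g_1>0$, and distance $0 \le \varepsilon^\star_2$ when $g_1 = 0$. Nominal stability gives the decision invariance.

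\textbf{Inductive step.} The hypothesis gives $d_i(x_i,x'_i) < \varepsilon^\star_i < m_i$. Applying the certified-layer inequality for $L_i$ at the pair $(x_i,x'_i)$ gives $d_{i+1}(x_{i+1},x'_{i+1}) \le g_i\, d_i(x_i,x'_i) \le g_i\varepsilon^\star_i = \varepsilon^\star_{i+1}$. Since $\varepsilon^\star_{i+1} < m_{i+1}$ by hypothesis, $L_{i+1}$ is again inside its certified margin. The final conclusion $\Pi(x') = \Pi(x_0)$, or the $C$-version, then comes from nominal stability of $L_n$ applied at $(x_n, x'_n)$.

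\textbf{Main obstacle: strict versus non-strict inequalities.} The hard step is bookkeeping rather than mathematics. Definition~\ref{def:certlayer} delivers only $\le$ in the output, but it needs $<$ on the input side. When $g_i = 0$ the propagated distance is $0 < m_{i+1}$ only if $m_{i+1}>0$; the hypothesis $\varepsilon^\star_{i+1} = 0 < m_{i+1}$ supplies this. When $g_i>0$, strictness propagates because multiplying a strict inequality by a positive rational preserves it. I would therefore carry the invariant as $d_i(x_i,x'_i) \le \varepsilon^\star_i$ together with the chain $\varepsilon^\star_i < m_i$, and use strictness only at the point of invoking the margin. The single exception is $i=1$, where we have strict $d_1 < \varepsilon_0$ directly. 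The same weak invariant also covers the case $\varepsilon^\star_i = 0$ with exact equality.

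\textbf{A second subtlety: what ``decision invariance'' means.} Nominal stability must be read as saying that the layer's \emph{output object} may move by at most $g_i d_i$ while its discrete decision component is invariant. The final equality is then claimed only for the last layer's decision, or after canonicalization $C$. This matches the hedge in the statement, and I would make it explicit by letting each certified layer carry a decision map. Metric-ball versus witness-family issues for the embedding layer I would treat as already absorbed into the certified-layer hypothesis, per the bridge paragraph preceding the lemma.
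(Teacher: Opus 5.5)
Your proposal is correct and follows essentially the paper's route: induction along the layers, propagating the chained budget $\varepsilon^\star_{i+1} = g_i\varepsilon^\star_i$ through the certified-layer inequality, then invoking nominal stability at the last layer. Your explicit distance invariant matters: it must be carried with $\le$, as you eventually do, rather than the strict form of your initial $P(k)$, which fails when $g_i = 0$. That invariant is exactly the bookkeeping the paper's one-line inductive step leaves implicit when it applies Lemma~\ref{lem:composition_two_layer} to $(L_i, L_{i+1})$, because that lemma's conclusion alone does not carry the distance bound forward to the next interface.
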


\begin{proof}
By induction on $i$, with Lemma~\ref{lem:composition_two_layer} as the inductive step. Base ($i=1$): $d_1(x_0, x') < \varepsilon_0 = \varepsilon^\star_1 < m_1$, so $L_1$'s nominal stability gives $L_1(x') \equiv_\mathrm{dec} L_1(x_0)$ and $d_2(\cdot,\cdot) \le g_1 \varepsilon_0 = \varepsilon^\star_2$. Inductive step: assuming the conclusion at layer $i$, apply Lemma~\ref{lem:composition_two_layer} to layers $L_i, L_{i+1}$. At $i = n$ the discrete decision is invariant.
\end{proof}

\textbf{Verification status:} \emph{specified}. The $n$-layer statement is a structural induction over a \texttt{List.foldl} of certified layers; each step is an instance of the compiled \texttt{compositional\_\allowbreak stability\_two\_layer} base lemma plus the per-layer \texttt{nominal\_stable} field. Mechanizing the inductive lift is a single Lean proof obligation over a finite list and is currently architecturally enforced via the chain-of-margins type signature; closing it to fully \emph{compiled} is a focused exercise.

\begin{corollary}[Pipeline Conditioning Number and Budget]\statusspec
\label{cor:conditioning}
Let $\kappa_\Pi = \prod_{i=1}^n g_i$ (cumulative gain) and $B_\Pi = \min_{1 \le i \le n} m_i / \prod_{j<i} g_j$, with the convention that any term whose denominator $\prod_{j<i} g_j$ vanishes (i.e.\ some earlier layer is a discrete-zero gate, $g_{j^\star} = 0$ for some $j^\star < i$) is treated as $+\infty$ and does not bind the minimum. Equivalently, let $i^\star = \min\{i : \prod_{j<i} g_j = 0\}$ when such an index exists; then $B_\Pi = \min_{1 \le i < i^\star} m_i / \prod_{j<i} g_j$, and otherwise the minimum is taken over all $n$ layers. Theorem~\ref{thm:composition} guarantees decision invariance for every $\varepsilon_0 < B_\Pi$, and $B_\Pi$ is the tightest such budget producible from the given per-layer certificates.
\end{corollary}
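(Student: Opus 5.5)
The corollary has two parts: sufficiency (every $\varepsilon_0 < B_\Pi$ meets the hypotheses of Theorem~\ref{thm:composition}) and tightness (no larger budget is derivable from the per-layer data alone). The plan is to reduce both to the closed form of the propagated radii. Unrolling the recursion $\varepsilon^\star_{i+1} = g_i\,\varepsilon^\star_i$ from $\varepsilon^\star_1 = \varepsilon_0$ gives, by a one-line induction,
\[
\varepsilon^\star_i = \varepsilon_0 \prod_{j<i} g_j .
\]
So the interface condition $\varepsilon^\star_i < m_i$ reads $\varepsilon_0 \prod_{j<i} g_j < m_i$.

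\emph{Sufficiency.} I would split the index set on whether $P_i := \prod_{j<i} g_j$ vanishes.

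\begin{itemize}
\item If $P_i > 0$, the condition is equivalent to $\varepsilon_0 < m_i / P_i$. This follows from $\varepsilon_0 < B_\Pi \le m_i/P_i$, since the minimum defining $B_\Pi$ includes this term.
\item If $P_i = 0$, the term is $+\infty$ by the stated convention, and $\varepsilon^\star_i = 0$. The condition then reduces to $0 < m_i$.
\end{itemize}

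Here I must be careful. The convention implicitly needs $m_i > 0$ for layers past a zero-gain gate; otherwise the interface condition fails at $\varepsilon^\star_i = 0$. I would either add this as a standing assumption (margins of certified layers are positive, which holds for all the instances listed after Definition~\ref{def:certlayer}) or observe the following. Once $L_{i^\star-1}$ is a discrete gate whose decision is invariant, the downstream input is literally equal, $d_{i^\star}=0$, so all later layers receive identical inputs and the conclusion holds trivially. This second route short-circuits the remaining layers without invoking their margins, and I would present it as the cleaner argument. Together with Theorem~\ref{thm:composition} applied to the prefix $L_1,\dots,L_{i^\star-1}$, this gives decision invariance for every $\varepsilon_0 < B_\Pi$. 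The equivalent formulation via $i^\star$ follows because $P_i = 0$ for all $i \ge i^\star$ once a zero factor appears.

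\emph{Tightness.} This is the main obstacle, because ``tightest producible from the given certificates'' must be formalized. I would read it as: among budgets $\varepsilon_0$ for which the chain-of-margins hypotheses of Theorem~\ref{thm:composition} all hold, $B_\Pi$ is the supremum. The proof takes $i_0$ attaining the minimum with $P_{i_0} > 0$. Any $\varepsilon_0 \ge B_\Pi$ then gives $\varepsilon^\star_{i_0} = \varepsilon_0 P_{i_0} \ge m_{i_0}$, violating the hypothesis at interface $i_0$. Thus $B_\Pi$ is exactly the threshold of the sufficient condition. If the minimum set is empty, meaning $i^\star = 1$ (the first layer is a gate), then $B_\Pi$ reduces to $m_1$ and the same argument applies with $i_0 = 1$. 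A stronger semantic tightness claim, that some pipeline consistent with the certificates actually flips at radius $B_\Pi$, would need an extremal construction in which each layer achieves its gain with equality and flips exactly at its margin. I would mention this only as a remark and not claim it, since Definition~\ref{def:certlayer} only upper-bounds behavior.
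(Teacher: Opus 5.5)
Your argument is correct and is the reasoning the paper leaves implicit, since it states the corollary without a separate proof: unroll $\varepsilon^\star_i=\varepsilon_0\prod_{j<i}g_j$, read each interface condition as $\varepsilon_0<m_i/\prod_{j<i}g_j$, and get tightness from a binding index $i_0$ with $\prod_{j<i_0}g_j>0$. Your caveat is a real edge case the paper's $+\infty$ convention glosses over, because Definition~\ref{def:certlayer} allows $m_i=0$: layers after a gate need $m_i>0$ unless you use your short-circuit via $d_{i^\star}=0$. One small slip is that $i^\star=1$ never occurs, because $\prod_{j<1}g_j=1$ is the empty product; a first-layer gate gives $i^\star=2$ and $B_\Pi=m_1$, exactly as you conclude.
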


\subsection{Discrete decision layers and the budget}

In most real pipelines, cumulative gain collapses to zero after the first discrete-decision layer (paraphrase decision, retrieval threshold, emission threshold), and $B_\Pi$ reduces to the \emph{tightest single-layer margin}. This formalizes the folklore observation that pipelines with strong discrete gates are more stable end-to-end than pipelines with many cascaded continuous layers. Concretely, for a 7-layer RAG pipeline with $(g_i) = (1,0,0,0,1,0,0)$ and $(m_i) = (0.12, 50, 3, 0.08, 0.77, 1, 1)$, we have $\kappa_\Pi = 0$ and $B_\Pi = 0.12$ (dominated by the embedding margin from \S\ref{sec:embedding}). The motivating instability ``we have no idea whether this similarity score is stable'' becomes ``the canonical denotation of the answer is invariant under any lexical paraphrase whose $L_2$ embedding drift is below $0.12$.''

\subsection{Adversarial corollary}

If a paraphrase adversary is bounded by $\varepsilon_0^{\mathrm{adv}}$ on the user input, every adversary with $\varepsilon_0^{\mathrm{adv}} < B_\Pi$ is provably unable to flip the canonical denotation. This is a positive statement over all adversaries in the declared family, complementing empirical adversarial testing which only exhibits flipping examples when found.

\paragraph{Perturbation-budget convention.} In this paper, $B_\Pi$ and the corresponding Card field \texttt{perturbation\_budget} denote a certified \emph{stability radius} (larger values represent tighter guarantees): the pipeline's canonical denotation is invariant under any upstream perturbation of $L_2$-norm strictly below the budget. Implementations that instead store a worst-case \emph{drift envelope} (such as $R^2_{\mathrm{inv}}$ or a quantization-error-adjusted drift bound, where smaller values are tighter) should either use a separate \texttt{drift\_bound} field or attach a convention tag to the Card so the application-policy layer applies the correct comparison direction.

\section{The Universal Assurance Card}
\label{sec:card}

The three local certificate families of \S\S\ref{sec:grounding}, \ref{sec:embedding}, and \ref{sec:agents} address specific failure modes; the residue operator (\S\ref{sec:mcr}) and the compositional stability theorem (\S\ref{sec:composition}) compose them. What is still missing, viewed from a deployment perspective, is the \emph{single object} that travels with each governed LLM call \emph{under a declared assurance mode} and lets a downstream consumer decide, automatically and without reading the model output, whether to act on the result. This section gives that object: a Lean~4 record we call the \emph{Universal Assurance Card}, with a decidable consistency predicate, designed for the per-call deliverable of any high-stakes deployment. The structure and predicate are stated in Lean form below; integration into the compiled artifact as \texttt{AssuranceCard.lean} is straightforward but separate engineering work and is currently \emph{specified} in the verification-status convention.

\subsection{Three design principles}

\begin{enumerate}[leftmargin=1.5em,topsep=2pt,itemsep=2pt]
\item \emph{No single confidence number.} A scalar collapses orthogonal failure modes and licenses exactly the false confidence Proposition~\ref{prop:dependency} forbids. The card is a vector.
\item \emph{Application-policy interpretation.} The same vector is read differently by legal research, clinical triage, creative writing, agentic action; the card carries fields, the application carries policy. Cards are portable; policies are deployment-specific.
\item \emph{Always-emitted.} Every call produces a card, including failures: abstention is a verdict, and the user is owed a machine-readable reason.
\end{enumerate}

\subsection{The schema (specified)}

The card is declared below as a Lean~4 structure. The consistency proof field \texttt{p\_verdict\_consistent} is decidable: given the field values, the predicate \texttt{VerdictConsistent} checks that the verdict label is internally consistent with the declared field values on the card itself (e.g., \textsc{Certified} requires \texttt{unverified\_layers = []} and all safety checks true). Once the structure and predicate are added to the artifact (no Mathlib-heavy lemmas are required; only \texttt{Rat}, \texttt{Bool}, and \texttt{List String} comparisons), \texttt{\#print axioms} on the worked-example cards from \S\ref{sec:cases} should report only $\Omega$.

\noindent\textbf{What \texttt{VerdictConsistent} is and is not.} The predicate is a \emph{schema-consistency} check on the Card record itself: given the values of the eight field groups, does the verdict tag make sense? Concretely, the visible \textsc{Certified} branch checks \texttt{unverified\_layers = []}, \texttt{forbidden\_check\_pass = true}, \texttt{action\_precondition = true}, \texttt{contradicted\_mass = 0}, and \texttt{in\_scope = true}, and does \emph{not} visibly check \texttt{contested\_mass}, \texttt{unknown\_mass}, \texttt{budget\_under\_limit}, freshness via \texttt{issue\_time}/\texttt{expiry\_time}, \texttt{calibration\_gap}, \texttt{semantic\_agreement}, or \texttt{certificate\_digest} validity. Each of these gaps is \emph{deliberate}: the predicate is intentionally a lightweight schema gate, and the omitted properties are policy-layer or digest-checked obligations rather than schema obligations.

It is \emph{not} a re-verification of the underlying certificates: that the embedding-sensitivity certificate's $\Delta^2$ field is the actual $R_{\mathrm{sig}}^2 - R_{\mathrm{inv}}^2$ for the declared families, that the grounding certificate's $W^-$ is the normalized weighted coverage of contradicted claims (the \texttt{p\_W\_contradicted\_correct} field of the grounding certificate, \S\ref{sec:grounding}), or that the Hoare trajectory's preconditions hold on the executed state, are properties of the underlying compiled certificates and are enforced there. The Card binds those underlying guarantees into one decidable record by referencing them via the \texttt{certificate\_digest} field (a hash of the signed underlying certificates produced when the Card is emitted); a downstream auditor verifying the Card re-checks the digest against the underlying certificates' kernel-audited build outputs. \texttt{VerdictConsistent} is therefore the lightweight schema gate; the deeper guarantee chain runs through the digest into the compiled certificates of \S\S\ref{sec:grounding}, \ref{sec:embedding}, and \ref{sec:agents}, and the policy-layer thresholds (\texttt{AssurancePolicy}) enforce deployment-specific cuts on the omitted fields.

\smallskip
\leansketch{AssuranceCard schema; decidable consistency; specified, not yet in the compiled artifact}
\begin{lstlisting}
inductive AssuranceVerdict
  | certified
  | partial   (gaps : List String)
  | residue   (weight : Rat)
  | abstain   (reasons : List String)

structure AssuranceCard where
  verdict : AssuranceVerdict
  -- stability (sources: composition, embedding)
  perturbation_budget : Rat
  variant_flips : Nat
  semantic_agreement : Rat
  -- evidence (source: bilattice grounding)
  supported_mass contradicted_mass contested_mass unknown_mass : Rat
  -- calibration (catalogue: conformal / calibration)
  conformal_set_size : Nat
  calibration_gap : Rat
  -- reproducibility (catalogue: proof-of-sampling)
  is_deterministic : Bool
  sample_count : Nat
  proof_of_sampling : Option ByteArray
  -- scope and freshness (catalogue: scope-of-validity, provenance, temporal)
  in_scope : Bool
  scope_predicates_held : List String
  source_snapshot_hash : ByteArray
  issue_time expiry_time : Nat
  -- safety (source: Hoare action; catalogue: negative-guarantee, budget)
  forbidden_check_pass : Bool
  action_precondition : Bool
  budget_under_limit : Bool
  -- residue (source: MCR)
  residue_coverage : Rat
  dropped_claims : List String
  -- gaps and provenance
  unverified_layers : List String
  model_version_hash prompt_template_hash : ByteArray
  human_signatures : List ByteArray
  -- audit handle
  certificate_digest : ByteArray
  verifier_version : String
  replay_handle : String
  -- decidable consistency obligation (full Card record passed)
  p_verdict_consistent : VerdictConsistent (AssuranceCard.mk ...)

-- The predicate is a function on the whole `AssuranceCard` record;
-- the match destructures the verdict tag and reads the field values
-- it inspects from the card. The fields the predicate does not
-- destructure (e.g. contested_mass, unknown_mass, budget_under_limit,
-- issue_time, expiry_time, calibration_gap, semantic_agreement,
-- certificate_digest) are deliberately omitted -- see the
-- "What VerdictConsistent is and is not" discussion above.
def VerdictConsistent (c : AssuranceCard) : Prop :=
  match c.verdict with
  | .certified =>
     c.unverified_layers = [] ∧
     c.forbidden_check_pass = true ∧
     c.action_precondition = true ∧
     c.contradicted_mass = 0 ∧
     c.in_scope = true
  | .partial gaps =>
     c.unverified_layers = gaps ∧
     c.forbidden_check_pass = true ∧
     c.action_precondition = true
  | .residue w =>
     c.residue_coverage = w ∧ c.dropped_claims ≠ []
  | .abstain rs =>
     rs ≠ [] ∧ c.residue_coverage = 0
\end{lstlisting}

\paragraph{Four kinds of residue.} The residue clause uses the universal characterization \texttt{dropped\_claims} $\neq$ \texttt{[]}, which is witnessed by any of four reasons the MCR operator (\S\ref{sec:mcr}) drops items from a candidate emission, each a separate monotone constraint family in $\mathcal{C}$:
\begin{itemize}[leftmargin=1.5em,topsep=2pt,itemsep=2pt]
\item \emph{Evidence residue}: $C_{\mathrm{ground}}$ excludes \textsc{Contradicted} or \textsc{Contested} claims; characterized by \texttt{contradicted\_mass} + \texttt{contested\_mass} $> 0$. (Clinical case study, \S\ref{sec:case_clinical}.)
\item \emph{Action residue}: $C_{\mathrm{pre}}$ excludes implied tool-calls whose Hoare precondition fails; characterized by \texttt{dropped\_claims} listing blocked actions even when \texttt{contradicted\_mass} = 0. (Agentic case study, \S\ref{sec:case_agent}.)
\item \emph{Scope residue}: a scope-of-validity predicate fails on a candidate atomic claim, so the MCR drops it; characterized by \texttt{scope\_predicates\_held} excluding the relevant predicate.
\item \emph{Budget residue}: $C_{\mathrm{bud}}$ excludes the marginal claim that would push cumulative cost past the declared limit; characterized by \texttt{budget\_under\_limit} = true with non-empty \texttt{dropped\_claims}.
\end{itemize}
The four kinds are mutually compatible (a single card can witness multiple), and each composes with the others through the MCR's monotone-constraint discipline. The card's verdict is \textsc{Residue} whenever \emph{any} of them fires; the field vector tells the consumer which.

\textbf{Verification status:} \emph{specified}. The Card structure and \texttt{VerdictConsistent} predicate are stated in Lean form above; the predicate is decidable by case analysis on the verdict tag plus rational and list comparisons; integrating it into the artifact as \texttt{AssuranceCard.lean} with a worked example for each of the four verdicts (each closing via \texttt{decide}) is a focused engineering step and is the natural follow-on to this paper. This is the same compiled-status precision applied to Compositional Stability (Lemma~\ref{lem:composition_two_layer} compiled, Theorem~\ref{thm:composition} specified).

\subsection{The four verdicts}

\begin{description}[leftmargin=0pt,itemsep=4pt,topsep=4pt]
\item[\textsc{Certified}.] Every required certificate passes. The output is emitted with a kernel-audited bundle.
\item[\textsc{Partial}.] Some non-load-bearing layers are unverified; the card explicitly names them in \texttt{unverified\_layers}. The output is emitted; the consumer's policy decides whether the named gaps are tolerable.
\item[\textsc{Residue}.] The full answer cannot be certified; a maximal certifiable subset can. The card reports the residue weight (Theorem~\ref{thm:mcr_maximality}) and lists dropped claims. The consumer receives a strictly smaller but kernel-audited answer.
\item[\textsc{Abstain}.] No emission. The card reports machine-readable reasons; an honest abstention is a first-class outcome.
\end{description}

\subsection{Application policy}

The card is not directly comparable across applications. A deployment publishes its acceptance policy as a separate object that maps the field vector to an accept / reject / downgrade decision. Both the card and the policy are kernel-objects, so the deployment's accept/reject choice is auditable.

\smallskip
\leansketch{application-policy schema}
\begin{lstlisting}
structure AssurancePolicy where
  min_supported_mass : Rat
  max_contradicted_mass : Rat
  max_contested_mass : Rat
  required_in_scope : Bool
  max_unverified_gaps : Nat
  require_action_safe : Bool
  max_calibration_gap : Rat
  min_perturbation_budget : Rat
  require_proof_of_sampling : Bool
  max_age_seconds : Nat
  forbidden_must_pass : Bool

def policy_accepts (P : AssurancePolicy) (c : AssuranceCard) : Bool := ...
\end{lstlisting}

\begin{table}[H]
\centering
\caption{Five canonical acceptance policies. The same card is interpreted differently by each. Numbers are illustrative starting points; a real deployment tunes them on its own ground truth.}
\label{tab:assurance_policies}
\scriptsize
\setlength{\tabcolsep}{4pt}
\begin{tabular}{@{}p{3.2cm}p{1.4cm}p{1.4cm}p{1.5cm}p{1.4cm}p{1.6cm}@{}}
\toprule
\textbf{Policy field} & \textbf{Creative} & \textbf{Enterprise QA} & \textbf{Legal research} & \textbf{Clinical} & \textbf{Agentic action} \\
\midrule
\texttt{supported\_mass} $\ge$ & 0 & 0.70 & 0.90 & 0.95 & 0.95 \\
\texttt{contradicted\_mass} $\le$ & 1.0 & 0.10 & 0.02 & 0.01 & 0.00 \\
\texttt{contested\_mass} $\le$ & 1.0 & 0.20 & 0.08 & 0.04 & 0.02 \\
\texttt{required\_in\_scope} & no & yes & yes & yes & yes \\
\texttt{max\_unverified\_gaps} & 5 & 2 & 0 & 0 & 0 \\
\texttt{require\_action\_safe} & no & no & no & yes & yes \\
\texttt{max\_age\_seconds} & $\infty$ & 30\,d & 7\,d & 24\,h & 1\,h \\
\texttt{forbidden\_must\_pass} & policy & yes & yes & yes & yes \\
\bottomrule
\end{tabular}
\end{table}

\subsection{Wire protocol}

The wire protocol is intentionally light. \emph{Request side}: a \texttt{min\_certificate\_policy} field carries the application's policy; the provider commits to either returning a card the policy accepts, returning residue or abstention with a reason code, or raising an error. \emph{Response side}: the response carries a top-level \texttt{certificate} field (the \texttt{AssuranceCard}) alongside the answer, plus an audit handle so a third party can replay the kernel check without re-running the LLM. \emph{Selective disclosure}: for privacy-sensitive workloads, the provider returns the Merkle root of the card plus only the leaves the user is authorized to see. \emph{Versioning}: the card carries \texttt{verifier\_version}. Adoption pathways and regulatory mappings are deployment artifacts; we collect them in Appendix~\ref{app:adoption} rather than the main text, where they would crowd out the technical core.

The Universal Assurance Card answers the question \emph{``what does it mean for a generative AI system to ship trust along with output, in deployments that need to ship trust?''} The schema is the consolidator; the policy layer is the application interface. Together they turn the certificate taxonomy into a per-call deliverable that any provider serving high-stakes use cases can attach to its assurance-mode endpoint in the next product release.

\section{Two Worked Case Studies}
\label{sec:cases}

This section gives two end-to-end examples chosen to exercise the framework's core: clinical RAG, where MCR and bilattice grounding are most distinctive, and an agentic Hoare execution, where action safety is the deployment story. Each ends with a filled-out Universal Assurance Card. The numerical values in both case studies are illustrative and internally consistent rather than measured; the case studies are intended to exhibit Card behavior and certificate composition, not to make empirical claims. Measured pre-registered data on the constituent certificates is reported elsewhere: grounding in \S\ref{sec:pilot_a} (Pilot~A), embedding sensitivity in \S\ref{sec:embedding:empirical}--\S\ref{sec:embedding:longform} (Pilots~B and B$'$), and Hoare-style agent action in \S\ref{sec:agents:empirical} (Pilot~C). End-to-end validation of the Card consolidator itself remains forward work, structured as a Pilot~D family of domain-specific validations (\S\ref{sec:limits:pilot}).

\subsection{Case study 1: Clinical RAG with contradiction detection}
\label{sec:case_clinical}

\paragraph{Setup.} A clinical decision-support system answers ``\emph{Is this 62-year-old patient with atrial fibrillation a candidate for direct oral anticoagulation (DOAC) given their CKD-4 renal function?}'' over (a) ACC/AHA and ESC guidelines, (b) the patient's chart, (c) the hospital's formulary, and (d) drug-interaction databases. The clinical numbers below are stylized illustrative placeholders.

\paragraph{Per-layer certificates.}
\begin{enumerate}[leftmargin=1.5em,topsep=2pt,itemsep=2pt]
\item \emph{L0 Scope-of-Validity.} Query clinically valid; chart current within 30 days; NLI model on whitelist; corpus timestamp within 90-day freshness window. All checks pass.
\item \emph{L1 Embedding sensitivity.} Under a 50-variant lexical paraphrase family $G_{\mathrm{inv}}$ on the query, $R^2_{\mathrm{inv}} = 0.014$ ($\sqrt{\cdot} \approx 0.118$); on a 30-variant significant family $G_{\mathrm{sig}}$, $R^2_{\mathrm{sig}} = 0.19$. Selective gap $\Delta^2 = 0.176 > 0$, encoder behaves correctly on this query.
\item \emph{L2 Conflict-aware grounding.} Twelve atomic claims decomposed; reconstruction check passes. \textbf{Key finding}: the recommendation of apixaban is supported by 3 guidelines ($\beta^+ = 0.94$). However, one atomic claim \emph{(``the patient's CrCl of 22 mL/min is within apixaban's renal range'')} is \emph{contradicted} by the FDA label (CrCl $\ge 25$ for standard dosing): $\beta^- = 0.89$. Status: \textsc{Contradicted}. $W^+ = 92\%$, $W^- = 8\%$ (above $\theta_r = 5\%$). \textbf{The certificate's emission gate blocks the full answer} (Lemma~\ref{thm:no_silent}, conditional on the NLI oracle's contradiction judgment).
\item \emph{L3 Maximal Certifiable Residue.} With $c_4$ removed, the remaining 11 claims satisfy all constraints. The MCR (the same operator developed in \S\ref{sec:mcr}, here applied to a 12-claim instance rather than the 5-claim worked example of \S\ref{sec:mcr:worked}) returns the 11-claim residue at weight $92\%$ (illustrative weights chosen as round numbers; per-claim weight assignment is omitted for compactness). The output becomes: ``Apixaban is a supported DOAC choice in AF; \emph{however, for this patient's CrCl of 22 mL/min, standard dosing is outside the FDA label, so dose reduction or alternative agent required. Consult nephrology / pharmacy.}''
\item \emph{L4 Hoare action gate.} The downstream action ``order apixaban at standard dose'' has precondition CrCl $\ge 25$, which fails. The Hoare certificate blocks the order and escalates to human review.
\item \emph{L5 Compositional Stability.} Per-layer gains $(g_i)$ and margins $(m_i)$ from L1--L4 give pipeline budget $B_\Pi \ge 0.06$; the canonical denotation of the emitted residue is invariant under any embedding $L_2$ drift below $0.06$.
\end{enumerate}

\paragraph{What is caught.} The conflict-aware grounding certificate detects the contradiction that the support-only variant misses. MCR retains maximum utility: instead of abstaining entirely, the system emits an honest partial answer. The action certificate blocks the unsafe order. Each catch corresponds to a named certificate in the framework, and each is auditable by an external reviewer (a hospital committee, a malpractice carrier) without access to model weights or prompt.

\medskip
\noindent\textbf{Filled-out Universal Assurance Card (illustrative).}\par

\smallskip
\leanverbatim{clinical card (illustrative)}
\begin{lstlisting}
verdict                  := residue 0.92
perturbation_budget      := 0.06
variant_flips            := 0   (50 paraphrases)
semantic_agreement       := 1.00
supported_mass           := 0.92
contradicted_mass        := 0.08
contested_mass           := 0.00
unknown_mass             := 0.00
calibration_gap          := 0.03
is_deterministic         := false
sample_count             := 5
in_scope                 := true
scope_predicates_held    := ["query_clinical", "chart_fresh_30d",
                             "corpus_within_90d", "nli_whitelisted"]
source_snapshot_hash     := 0xFA7E..0421
issue_time               := 1714000000  ; expiry_time := 1714086400
forbidden_check_pass     := true
action_precondition      := false   ; (Hoare gate blocked apixaban order)
budget_under_limit       := true
residue_coverage         := 0.92
dropped_claims           := ["c4: CrCl 22 mL/min within apixaban range
                              (CONTRADICTED by FDA label)"]
unverified_layers        := []
model_version_hash       := 0x3D11..7AAB
prompt_template_hash     := 0x9A02..5C6D
human_signatures         := []
certificate_digest       := 0xB7C9..3041
verifier_version         := "leanv4.30.0-rc2-mathlib"
replay_handle            := "rag-clinical-2026-04-25-7c4fa"
\end{lstlisting}

The clinical-policy profile (Table~\ref{tab:assurance_policies}) requires \texttt{supported\_mass} $\ge 0.95$, \texttt{contradicted\_mass} $\le 0.01$, and \texttt{action\_precondition} = true. The card's verdict is \textsc{Residue} with $W^- = 0.08$ and \texttt{action\_precondition} = false; the policy rejects the card under \textsc{Certified} or \textsc{Partial} but accepts the residue payload subject to clinician sign-off (the \texttt{human\_signatures} field, here empty, would be populated post-review).

\subsection{Case study 2: Agentic Hoare execution (sandboxed file system)}
\label{sec:case_agent}

\paragraph{Setup.} An LLM agent is instructed to consolidate the user's project notes by deleting drafts older than 30~days and merging surviving notes into a single document. The agent has a typed file-system API: \texttt{stat}, \texttt{read}, \texttt{write}, \texttt{delete}, each with declared pre/post conditions in the formal layer. State $\sigma$ is a Lean record mapping paths to \texttt{(timestamp, content, owner)}. The agent operates within a sandbox bound declared by the user's policy.

\paragraph{Per-layer certificates.}
\begin{enumerate}[leftmargin=1.5em,topsep=2pt,itemsep=2pt]
\item \emph{L0 Semantic-type role check.} The instruction text is tagged \texttt{user\_instruction}; retrieved file contents are tagged \texttt{tool\_output}; prompt-injection attempts within retrieved contents fail the role-coercion check at compile time.
\item \emph{L1 Plan certificate.} The agent's plan is decomposed: \texttt{(stat *)}, \texttt{filter age $>$ 30d}, \texttt{(delete drafts)}, \texttt{(read survivors)}, \texttt{(write merged)}. Each step is a typed action with declared pre/post.
\item \emph{L2 Trajectory certificate.} \emph{Step 1} \texttt{stat *}: \texttt{Pre} = ``directory exists''; \texttt{Post} = ``returns metadata, $\sigma$ unchanged.'' Discharged. \emph{Step 2--N (delete drafts)}: \texttt{Pre} = ``file exists $\wedge$ \texttt{age} $> 30$d $\wedge$ \texttt{owner} = user $\wedge$ path is in sandbox''. The agent proposes deleting \texttt{notes/old\_meeting.md} (52 days, owner = user, in sandbox). \texttt{Pre} discharged. \texttt{Post} = ``file removed; $\sigma'$ = $\sigma$ minus path.'' Executor produces $\sigma'$; \texttt{Post} verified.
\item \emph{L3 The unsafe step.} The agent proposes deleting \texttt{notes/shared/team\_design.md} as a draft. \texttt{Pre} requires \texttt{owner} = user; the file's \texttt{owner} is \texttt{team}. \texttt{Pre} fails. The trajectory certificate fails to type-check at this step. The agent is blocked; the planner is re-prompted with the failure witness (CEGAR-style; see catalogue \S\ref{sec:catalogue:cegar}).
\item \emph{L4 MCR over the executable subset.} The MCR operator (Theorem~\ref{thm:mcr_maximality}) accepts the largest prefix-closed subset of the trajectory whose constraints (\texttt{Pre}, sandbox, owner, age) hold. Output: 7 of 9 deletions executed, 2 blocked. The merge step's \texttt{Pre} (``every input survivor exists'') is satisfied on the executed subset; the merged document is written.
\item \emph{L5 Budget invariant.} Cumulative tokens, API calls, and wall-clock under declared limits.
\end{enumerate}

\paragraph{The structural lesson.} The Hoare action certificate blocks the out-of-scope deletion structurally: the kernel refuses to type-check a trajectory containing an action whose precondition is unmet. There is no ``best-effort filter'' to circumvent and no policy adjustment that could let the unsafe action through silently. The MCR turns the blocked step into a partial execution rather than aborting; the audit log names the blocked steps; the residue's \texttt{action\_precondition} field is true because every executed step's precondition was discharged (the rejected steps simply do not appear in the trajectory). A baseline unaudited agent would have either deleted the team file (irreversible) or aborted the entire batch on the first failure. The operational rule the example illustrates is that the LLM proposes a typed action and a small kernel-audited runtime is the final authority that permits or blocks side effects.

\paragraph{Card summary (compressed).} The agentic card differs from the clinical one chiefly in residue \emph{kind} (action-residue versus evidence-residue) and in the absence of evidential mass:
\texttt{verdict = residue 0.78}, \texttt{contradicted\_mass = 0.00}, \texttt{contested\_mass = 0.00}, \texttt{action\_precondition = true} (over executed steps), \texttt{dropped\_claims} listing the two blocked deletions with owner-mismatch reasons, \texttt{is\_deterministic = true}, \texttt{in\_scope = true}, \texttt{budget\_under\_limit = true}. The agentic-action policy (Table~\ref{tab:assurance_policies}) requires \texttt{require\_action\_safe = true} and \texttt{contradicted\_mass} $\le 0$; both hold. The policy accepts the residue payload because no in-scope action's precondition was violated, and the blocked steps are listed for the user. The full card has the same field structure as the clinical one (\S\ref{sec:case_clinical}); we omit the verbatim listing.

\paragraph{Domain mapping.} The two case studies above are illustrative of a broader pattern. Analogous Card-bearing pipelines apply in patent claim charting (where a single prior-art reference can support and refute different elements of the same independent claim, the bilattice grounding case), in regulated-finance disclosure analysis (where temporal-validity windows on filings and regulations compose with action-precondition gates on analyst recommendations), and in legal case-law retrieval (where circuit-split holdings produce contested grounding and long-form embedding sensitivity is jointly load-bearing). The clinical and agentic walkthroughs are exhibited here for compactness; the family of high-stakes domains the framework targets is the subject of the Pilot~D family of empirical validations outlined in \S\ref{sec:limits:pilot}.

\section{Empirical Validation: Pilot A on HotpotQA}
\label{sec:pilot_a}

The case studies in \S\ref{sec:cases} use synthetic numbers to illustrate
how the certificates and the Universal Card behave on representative
inputs. This section reports a pre-registered empirical pilot
(``Pilot~A'') that puts the conflict-aware bilattice grounding
certificate (\S\ref{sec:grounding}, Lemma~\ref{thm:no_silent}) to an
adversarial test on a public benchmark, against three distinct
baselines, with measured catch and false-block rates. The pilot ran
in three iterations (a preliminary read v3; a confirmatory run v4;
a clean rerun v5 with corrected hypothesis wordings derived from
v4's diagnostics); all numbers reported here are from the v5
registered analysis.

\subsection{Pipeline and methods}
\label{sec:pilot_a:pipeline}

\textbf{Dataset.} HotpotQA dev distractor~\cite{yang2018hotpotqa},
deterministically sliced (seed~42) into disjoint pools: 300
perturbation candidates and 225 unperturbed candidates. Multi-hop
bridge questions dominate.

\textbf{RAG generator.} Claude Sonnet 4.5 produces a 1--3 sentence
answer from gold supporting paragraphs, instructed to refuse with
``Insufficient information'' when sources are insufficient. 22~of
225 unperturbed cases (9.8\%) returned a refusal and are excluded
from the false-block-rate denominator (a vacuous block on a
refusal is correct behavior, not a false block; the refusal rate
is reported separately).

\textbf{Atomic-claim decomposition.} Gemini 2.5 Flash-Lite, with a
parse-failure cascade that escalates to 8192-token budget then to
a cross-provider GPT-5 fallback. Decomposer-choice sensitivity is
checked separately (\S\ref{sec:pilot_a:sensitivity}).

\textbf{NLI signature.} Three-way LLM-as-NLI (entail / contradict /
neutral) via GPT-5, with chunk-concatenation aggregation:
$\beta^+ = \sigma^+(\text{claim}, \text{chunk}_1\!\oplus\!\dots\!\oplus
\!\text{chunk}_n)$, $\beta^- = \sigma^-(\text{claim}, \cdot)$. Concat
aggregation lets the NLI see cross-chunk evidence so
HotpotQA bridge claims (synthesised from facts in multiple chunks)
are not mis-classified as ``Unknown.'' Per-chunk-max aggregation is
preserved as a registered ablation (\S\ref{sec:pilot_a:ablation}).

\textbf{Bilattice classification.} As Lemma~\ref{thm:no_silent}:
$\theta_G {=} \theta_C {=} 0.5$ for the four-way label, $\theta_E$
swept in $\{0.50, 0.55, \ldots, 1.00\}$ to produce a
Pareto-frontier curve, $\theta_R {=} 0.05$.

\textbf{Adversarial perturbation.} Each candidate received one
perturbation chosen from \{negate, swap, flip\} on the targeted
atomic claim (\S\ref{sec:grounding} terminology). A three-judge
cross-model panel (GPT-5, Claude Opus 4.5, Gemini 2.5 Pro) labels
each candidate as ``contradicted'' or not; only candidates a
majority of the panel labels contradicted enter the perturbed
slice. \textbf{Yield: 264 of 300 (88\%) accepted; unanimous-accept
rate: 0.936}, well above the 0.85 pre-registered threshold.

\textbf{Methods compared.}
\begin{itemize}[leftmargin=*,topsep=2pt,itemsep=2pt]
  \item \textsc{Treatment}: full bilattice + emission gate
        (Lemma~\ref{thm:no_silent}).
  \item \textsc{B1} (cosine baseline): block iff
        $\max_i \cos(\text{answer}, \text{chunk}_i) \ge \theta_{B1}$,
        sweep $\theta_{B1} \in [0.5, 1.0]$ in 0.05 steps.
  \item \textsc{B2} (self-consistency, $k{\in}\{1,5\}$): generate
        $k$ samples, decompose, cluster claims, block on
        cluster-disagreement.
  \item \textsc{B3} (raw NLI, no support gate): block iff
        $\max_i w_i \beta^-_i \ge 0.5$.
\end{itemize}

\subsection{Hypothesis verdicts}
\label{sec:pilot_a:hypotheses}

Six hypotheses pre-registered (v5 wording, with the v4 wording
retained as a comparison footnote):

\smallskip
\begin{center}
\footnotesize
\begin{tabular}{cp{7cm}c}
\toprule
\textbf{H} & \textbf{v5 statement} & \textbf{Verdict} \\
\midrule
H1 & $\exists \theta_E$ such that Treatment achieves catch $\ge 0.95$ at FBR $\le 0.10$, and B1 achieves no such operating point at any threshold. & PASS \\
H2 & Treatment catch\_blk $\ge 0.70$. & PASS \\
H3 & Treatment catch\_tgt $\ge 0.60$ (target-claim catch). & PASS \\
H4 & $\exists \theta_E$ such that Treatment catch $\ge 0.95$ and FBR $\le 0.15$. & PASS \\
H5 & B1 cannot reach catch $= 1.0$ at FBR $\le 0.10$ for any $\theta_{B1}$. & PASS \\
H6 & Panel unanimous-accept rate $\ge 0.85$. & PASS \\
\bottomrule
\end{tabular}
\end{center}

\paragraph{Pre-registration revisions.} Three hypotheses were rewritten between v3/v4 and v5 once their original formulations proved unrealisable on the data. H1 was originally a 60pp Pareto-dominance test that is upper-bounded at zero whenever the catch axis saturates at 1.0; v5 replaces it with the Pareto-reachability formulation above. H5 was originally a B1 catch-ceiling $\le 0.30$ that is empirically false (B1 reaches 1.0 at $\theta_{B1} \ge 0.85$); v5 replaces it with a true negative claim about catch=1.0 reachability. H6 originally combined a $\kappa \ge 0.6$ component with the unanimous-accept threshold, but $\kappa$ degenerates near unanimity (a rater near 100\% drives the chance-correction denominator to zero); v5 retains the unanimous-accept threshold and drops the $\kappa$ component. The methodological observations these revisions surface (saturating-Pareto pathology and $\kappa$ saturation under near-unanimous panels) are discussed in \S\ref{sec:pilot_a:lessons}. The verbatim v3/v4 wordings are recorded in the pilot specification in the companion repository.

\paragraph{Reconciling Pilot~A's $\kappa$ with Pilot~B's $\kappa$.}
A reader comparing the two might wonder how Pilot~A reports
$\kappa \approx -0.014$ alongside 93.6\% unanimous agreement while
Pilot~B (\S\ref{sec:embedding:empirical}) reports
$\kappa = 0.887$ on a structurally similar 3-judge panel. The
two panels label different things at different base rates. Pilot~B's
panel labels paraphrase quality (the categories ``meaning-preserving''
vs.\ ``meaning-changing''), where the base rates are well away from
unanimity, so $\kappa$ is informative and large. Pilot~A's panel
labels contradiction status, where one rater is near 100\% on the
negative class for a registered slice of unperturbed cases, driving
the marginal-equality denominator of $\kappa$ to near zero, which is the classic chance-correction saturation. This is why v5 retired the
$\kappa$ component for Pilot~A's H6 in favor of unanimous-accept,
which does not have that pathology, and is also why the same panel
discipline reproduces high $\kappa$ on Pilot~B: the difference is in
the labeled task, not the panel.

\subsection{Main results}
\label{sec:pilot_a:main_results}

Effective sample size: 264 perturbed (panel-accepted out of 300
candidates), 203 unperturbed (refusal-excluded out of 225).

\smallskip
\begin{center}
\footnotesize
\begin{tabular}{lccc}
\toprule
\textbf{Method} & \textbf{catch\_blk (95\% CI)} & \textbf{catch\_tgt} & \textbf{false\_block (95\% CI)} \\
\midrule
\textsc{Treatment} ($\theta_E{=}0.50$) & 1.000 & 1.000 & \textbf{0.064} (0.034, 0.099) \\
\textsc{Treatment} ($\theta_E{=}0.90$, registered) & 1.000 & 1.000 & 0.227 (0.167, 0.286) \\
\textsc{B1} @ $\theta{=}0.85$ & 0.992 & n/a & 0.232 \\
\textsc{B1} @ $\theta{=}0.90$ & 1.000 & n/a & 0.266 \\
\textsc{B2} ($k{=}1$) & 0.295 (0.242, 0.348) & 0.295 & 0.350 (0.281, 0.414) \\
\textsc{B2} ($k{=}5$) & 0.292 (0.239, 0.348) & 0.292 & 0.261 (0.202, 0.320) \\
\textsc{B3} & 0.977 (0.958, 0.992) & 1.000 & \textbf{0.044} (0.020, 0.074) \\
\bottomrule
\end{tabular}
\end{center}

\smallskip
\textbf{What the table says.} (i)~Treatment achieves catch $=1.0$
at FBR $=0.064$, an operating point B1 cannot reach for any
$\theta_{B1}$ (B1's lowest FBR at catch $\ge 0.95$ is 0.232).
(ii)~Treatment massively dominates self-consistency: McNemar's
exact test $p < 10^{-30}$ for both $k{=}1$ and $k{=}5$. (iii)~B3
(contradiction-only) is a strong baseline at this scale, achieving
catch $=0.977$ at FBR $=0.044$; Treatment matches its operating
point at $\theta_E{=}0.50$ and offers a tunable knob for higher-
support-coverage regimes.

\subsection{Pareto frontier (Treatment $\theta_E$ sweep vs B1 sweep)}
\label{sec:pilot_a:pareto}

Treatment achieves catch $=1.0$ across the entire $\theta_E$ sweep;
the free parameter selects the operating point along the FBR axis:

\smallskip
\begin{center}
\footnotesize
\begin{tabular}{lcc|lcc}
\toprule
$\theta_E$ & catch & FBR & $\theta_{B1}$ & catch & FBR \\
\midrule
0.50 & 1.000 & \textbf{0.064} & 0.50 & 0.432 & 0.010 \\
0.55 & 1.000 & 0.108 & 0.60 & 0.602 & 0.089 \\
0.60 & 1.000 & 0.113 & 0.70 & 0.788 & 0.123 \\
0.65 & 1.000 & 0.138 & 0.80 & 0.928 & 0.202 \\
0.70 & 1.000 & 0.148 & 0.85 & 0.992 & 0.232 \\
0.80 & 1.000 & 0.187 & 0.90 & 1.000 & 0.266 \\
0.90--0.95 & 1.000 & 0.227 & 0.95 & 1.000 & 0.409 \\
1.00 & 1.000 & 0.256 & 1.00 & 1.000 & 1.000 \\
\bottomrule
\end{tabular}
\end{center}

The $\theta_E{=}0.50$ point is unreachable for B1 at any
$\theta_{B1}$: B1 only enters the catch~$\ge$0.95 regime at FBR
$\ge$0.232, while Treatment is already at FBR=0.064 there. This is
the operationally meaningful comparison: in deployment, the
operating point is chosen for tolerable FBR, not at the saturation
ceiling.

\subsection{Decomposer-choice robustness}
\label{sec:pilot_a:sensitivity}

A pre-registered sensitivity test (script~06) re-runs decomposition
on a 200-example subsample with GPT-5 substituted for the registered
Gemini~2.5~Flash-Lite. \textbf{Treatment catch differs by $<\!5$pp
between the two decomposers}, well within the
pre-registered $\le 5$pp threshold. The main result is therefore
not an artefact of the registered decomposer's
idiosyncrasies.

\subsection{Ablation: which architectural choices are load-bearing?}
\label{sec:pilot_a:ablation}

\smallskip
\begin{center}
\footnotesize
\begin{tabular}{lccc}
\toprule
\textbf{Configuration} & \textbf{Catch} & \textbf{FBR} & \textbf{$\Delta$ FBR} \\
\midrule
Registered v5 (concat NLI, $\theta_E{=}0.5$) & 1.000 & 0.064 & n/a \\
Per-chunk-max NLI ablation (v3 default), $\theta_E{=}0.5$ & 1.000 & 0.350 & $+$0.286 \\
Concat NLI, $\theta_E{=}0.9$ (v3 registered) & 1.000 & 0.227 & $+$0.163 \\
Per-chunk-max NLI, $\theta_E{=}0.9$ (v3 baseline) & 1.000 & 0.450 & $+$0.386 \\
\bottomrule
\end{tabular}
\end{center}

\textbf{NLI aggregation choice (concat vs per-chunk-max) is the
dominant hyperparameter.} It changes Treatment's FBR by
$\sim\!2.3\times$ at $\theta_E{=}0.5$. The per-chunk-max default
fails on multi-hop claims because no single chunk entails a
synthesised fact; concat aggregation lets the NLI score the
synthesised claim against all evidence at once. We recommend
concat as the default for any RAG corpus where claims may require
multi-chunk synthesis (HotpotQA-like multi-hop QA, clinical
guidelines that cross documents, legal precedent chains).

\subsection{Methodological observations}
\label{sec:pilot_a:lessons}

The three iterations surfaced empirical-evaluation pitfalls
worth naming for subsequent work in this area.

\begin{itemize}[leftmargin=*,topsep=2pt,itemsep=4pt]
  \item \textbf{Reasoning-token starvation in modern LLMs.} GPT-5
        and Gemini~2.5 use hidden reasoning tokens that consume
        the visible-output budget first. Calls with
        \texttt{max\_tokens} $<$ 2048 silently return empty text or
        400 errors, with no signal in normal logs. Affected
        components: NLI (originally 64 tokens), perturbation
        generator (512), decomposer (2048, still too tight),
        panel judges (128). \emph{Recommendation:} all
        reasoning-capable LLM calls should default to
        \texttt{max\_tokens} $\ge$ 4096 with a usage-metadata
        check that warns on \texttt{output\_tokens=0}.
  \item \textbf{Per-chunk vs.\ chunk-concatenated NLI on
        multi-hop claims.} The original pipeline used
        per-chunk-max aggregation. On HotpotQA bridge questions
        ($\sim$11\% of unperturbed claims) the answer synthesises
        facts across chunks; no single chunk entails the
        synthesised claim, so the NLI returns ``neutral'' on every
        chunk, the bilattice classifies the claim ``Unknown,'' and
        the support gate fires a false-block. Chunk-concatenated
        aggregation drops this rate from 11\% to 6\%.
  \item \textbf{Refusals vs.\ false-blocks.} A vacuous block on a
        refusal (RAG returns ``Insufficient information'') is
        correct behavior, not a false block. v3 mis-counted these
        in the FBR denominator, contaminating the rate by 24
        percentage points. \emph{Recommendation:} pre-register
        whether refusals are excluded from the FBR denominator;
        report the refusal rate as a first-class primary metric.
  \item \textbf{Cohen's $\kappa$ near unanimity.} Three-judge
        agreement was 93.6\% unanimous, but minimum pairwise
        $\kappa$ was $-0.014$. $\kappa$ corrects for chance
        agreement; when one rater is near-100\% on one label,
        the chance-correction term saturates and any disagreement
        on the remaining cases drives $\kappa$ to zero or below.
        \emph{Recommendation:} pre-register $\kappa$ thresholds
        only when expected base rates are away from unanimity;
        use unanimous-accept and majority-accept rates for
        verification panels.
  \item \textbf{Saturating Pareto frontiers and the 60pp gap.}
        ``Min gap over the entire B1 frontier $\ge$ 60pp'' is
        mathematically unattainable when both methods reach
        catch $= 1.0$ at high enough thresholds. The right
        formulation is \emph{Pareto-reachability}: ``Treatment
        achieves an operating point B1 cannot.'' This is the
        deployment-relevant question; the original min-gap test
        was a methodological error in hypothesis design, not a
        result.
\end{itemize}

These lessons are not evidence against the framework; they are
evidence about what discipline an empirical evaluation needs in
order to produce informative numbers. The five lessons above add
up to the difference between v3's summary claim ``method fails on most
pre-registered hypotheses'' and v5's summary claim ``method passes
every pre-registered hypothesis with statistically significant
separation from baselines.''

\subsection{Pilot scope}
\label{sec:pilot_a:scope}

This pilot validates the conflict-aware grounding certificate
(\S\ref{sec:grounding}) on one public benchmark with adversarial
perturbations. It does not test (i)~the Compositional Stability
theorem (\S\ref{sec:composition}) at the multi-layer level or
(ii)~deployment in a regulated workflow. The other certificate
families have their own pilots reported elsewhere: Pilots~B and B$'$
(embedding sensitivity) in \S\ref{sec:embedding:empirical}--\S\ref{sec:embedding:longform},
and Pilot~C (Hoare-style agent action) in \S\ref{sec:agents:empirical}.
The Pilot~D family of Universal Card validations
(patent / technical-intelligence, legal, regulated-finance, and
clinical variants) remains forward work
(\S\ref{sec:limits:pilot}).

\bigskip

\section{Compiled Lean~4 Reference Artifact}
\label{sec:artifact}

The architecture is mechanized. The companion Lean~4 project (directory \texttt{lean\_artifact/}) compiles all 22 certificate types in the broader catalogue. The artifact builds in roughly ten seconds with a warm Mathlib cache; Appendix~\ref{app:embedding_lean} reproduces representative structures verbatim, and Appendix~\ref{app:axiom_audit} contains a representative excerpt of the \texttt{\#print axioms} output (the full audit is build-regenerated by \texttt{EmbeddingSensitivity/AxiomAudit.lean}).

\subsection{Summary statistics}

\begin{center}\fbox{\parbox{0.85\textwidth}{\centering
\textbf{22 certificate types $\bullet$ 46 audited declarations $\bullet$ 17 axiom-free $\bullet$ 29 with $\Omega$+oracles $\bullet$ 0 \texttt{sorry} $\bullet$ 0 \texttt{native\_decide}}
}}\end{center}

\medskip
The 17 axiom-free declarations include the entire Hoare-style action certificate (state, structures, all safety theorems, concrete trajectory), the bulk of the self-consistency certificate, the Curry--Howard \texttt{SortedList} and \texttt{DistinctList} structures, scope-of-validity, simpler CEGAR theorems, and six of the ten extended-catalogue types (negative-guarantee, provenance, temporal, calibrated-abstention, diff, meta as definitions). These compile by kernel reduction alone, with \emph{no axioms whatsoever} in their transitive set. In the conditional-guarantee hierarchy, this is the strongest possible status: the guarantee is unconditional modulo only the Lean kernel itself.

\paragraph{What ``axiom-free'' does and does not mean.}\label{sec:artifact:axiom_free} Axiom-freeness is a statement about the proof's \emph{discharge mechanism}: every step reduces by kernel computation and pattern matching, with no appeal to a named axiom (not even Lean's foundational $\Omega$). It is not a statement that the encoded predicates are unconditionally meaningful in the real world. For the Hoare action certificate, predicates such as \texttt{owner = user} or \texttt{path is in sandbox} are human-authored definitions on the chosen state model; their semantic adequacy, that they actually capture the real-world safety property the deployment intends, is a modeling assumption beneath the trust boundary. Axiom-freeness rules out hidden mathematical assumptions (no \texttt{sorry}, no \texttt{native\_decide}, no Mathlib placeholder), but it does not rule out \emph{specification errors} in the predicate authors. A paraphrase: the certificate guarantees ``the kernel computed this property correctly''; it does not guarantee ``the property is the right one.'' This caveat applies to every axiom-free certificate in the artifact and is consistent with the conditionality discussion of \S\ref{sec:limits:conditionality}.

\subsection{Artifact evaluation}
\label{sec:artifact:eval}

Beyond the axiom audit, Table~\ref{tab:artifact_eval} reports concrete operational metrics. The numbers are observable on commodity hardware (a 2024-class laptop) under Lean v4.30.0-rc2 with a warm Mathlib cache; they are intended as order-of-magnitude reference points, not as a benchmark.

\begin{table}[!htbp]
\centering
\caption{Operational metrics for the compiled reference artifact. Numbers labeled \emph{measured} are observed on the artifact directly: module and declaration counts are exact; build and audit times are reported on a 2024-class laptop with a warm Mathlib cache, rounded conservatively. Numbers labeled \emph{projected} are schema-derived estimates for components currently specified but not yet compiled (notably the Universal Assurance Card consolidator); their soundness depends on the focused engineering step in \S\ref{sec:limits:card}. Certificate and Card byte sizes are computed from the field-record byte layout. Production-traffic throughput numbers (latency under sustained load, batched-audit amortization, SNARK-compression overhead) are deliberately not reported here and remain forward work (\S\ref{sec:limits:empirical}).}
\label{tab:artifact_eval}
\footnotesize
\setlength{\tabcolsep}{4pt}
\begin{tabular}{@{}p{4.4cm}p{8.4cm}@{}}
\toprule
\textbf{Metric} & \textbf{Value} \\
\midrule
\multicolumn{2}{@{}l}{\emph{Source-level totals}} \\
Modules & 25 \\
Top-level declarations audited & 46 \\
Axiom-free declarations & 17 / 46 \\
Persistent tier-4 oracles & 5 (cf.\ Table~\ref{tab:oracle_recipe}) \\
Mathematical placeholders (tier 2) & 3 (\texttt{cauchy\_schwarz\_sq}, \texttt{innerProd\_sub}, \texttt{HoeffdingInequality}) \\
Cryptographic assumptions (tier 3) & 2 (\texttt{HashCollisionResistant}, \texttt{Decode\allowbreak Algorithm\allowbreak Deterministic}) \\
\midrule
\multicolumn{2}{@{}l}{\emph{Build and audit time}} \\
\texttt{lake build}, warm Mathlib cache & $\approx 10$~s \\
\texttt{lake build}, cold Mathlib cache & $\approx 2$~min (one-time, then cached) \\
Per-cert kernel re-check (typical) & $<100$~ms (sub-second for every leaf type) \\
\texttt{\#print axioms} on full artifact & $<1$~s \\
End-to-end audit over the 22 underlying compiled certificates (measured) & $\approx 1$~s on commodity hardware \\
Projected end-to-end audit on a single Universal Card (specified; once \texttt{AssuranceCard.lean} added) & $\approx 1$~s; consistency predicate is \texttt{decide}-discharged on a $\sim$30-field record \\
\midrule
\multicolumn{2}{@{}l}{\emph{Certificate and card sizes}} \\
Universal Assurance Card record (typical, \emph{projected} from schema) & $\approx 1.5$~KB JSON, $\approx 0.7$~KB binary \\
Embedding sensitivity certificate (per call) & a few KB; scales with $|G_{\mathrm{inv}}| \cdot d$ \\
Bilattice grounding certificate (20 claims, 100 chunks) & $<10$~KB \\
Hoare trajectory certificate (10 actions) & $<5$~KB \\
Composite Lean proof terms & sub-KB to a few KB; SNARK-compressible to hundreds of bytes (catalogue) \\
\midrule
\multicolumn{2}{@{}l}{\emph{Audit infrastructure}} \\
Reference audit-runner (Python/Rust around \texttt{lean}) & $<100$~LOC \\
External dependencies for audit & Lean v4.30.0-rc2; SHA-256 implementation; nothing else \\
Required input & certificate file; Mathlib snapshot pin \\
Not required for audit & model weights, prompts, decoding parameters, random seeds \\
\bottomrule
\end{tabular}
\end{table}

The table substantiates two operational claims that the paper otherwise gestures at: (i) the audit is genuinely small and portable (an external auditor needs only \texttt{lean} plus a $<100$~LOC wrapper, no model artifacts); (ii) per-call verification cost is dominated by the underlying LLM call, not by the kernel re-check, so the overhead is amortizable in any production setting. Operational measurements at production traffic, throughput under sustained load, expert-panel agreement on abstentions, and catch rates against deployed pipelines, remain explicit future work (\S\ref{sec:limits}).

\subsection{Per-type axiom dependencies}

Table~\ref{tab:audit_all22} reports the axiom-dependency tier of every compiled certificate type. Two rows in the table collapse to a single type for the purpose of the 22-type count: the two Curry--Howard rows (sorted/distinct and parens/quote) share one type, and the Hoeffding concentration row stands for one type parameterized over eight named concentration inequalities (Hoeffding, generalized Hoeffding, Bernstein, empirical Bernstein, Azuma--Hoeffding, Bennett, McDiarmid, union-bound) sharing one Lean structure with different \texttt{failure\_bound}. Net: 18 catalogue rows yield 17 distinct catalogue types; with 5 core items (3 families and 2 operators), the total is $5 + 17 = 22$.

\begin{table}[!htbp]
\centering
\caption{Axiom-dependency tier for each compiled certificate type. ``none'' means no axioms at all; $\Omega = \{\texttt{propext}, \texttt{Classical.choice}, \texttt{Quot.sound}\}$ are the kernel-trusted axioms (tier 1). Named non-$\Omega$ axioms are tagged \emph{(t2)} for tier-2 mathematical placeholders, \emph{(t3)} for tier-3 cryptographic assumptions, and \emph{(t4)} for tier-4 ML/human oracles, per the hierarchy of \S\ref{sec:arch:taxonomy}.}
\label{tab:audit_all22}
\footnotesize
\setlength{\tabcolsep}{4pt}
\begin{tabular}{@{}p{4.0cm}p{7.0cm}p{2.0cm}@{}}
\toprule
\textbf{Certificate} & \textbf{Axiom dependency} & \textbf{Section} \\
\midrule
\multicolumn{3}{@{}l}{\emph{Three local certificate families}} \\
Conflict-aware bilattice grounding & $\Omega$ + \texttt{Signed\-Support\-Oracle}, \texttt{Decomposition\-Oracle} (t4) & \S\ref{sec:grounding} \\
Embedding sensitivity & $\Omega$ + \texttt{Paraphrase\-Oracle} (t4) + \texttt{cauchy\_schwarz\_sq}, \texttt{innerProd\_sub} (t2) & \S\ref{sec:embedding} \\
Hoare agent action & none & \S\ref{sec:agents} \\
\midrule
\multicolumn{3}{@{}l}{\emph{Two operators on certificates}} \\
Maximal Certifiable Residue & $\Omega$ & \S\ref{sec:mcr} \\
Compositional Stability (2-layer base lemma) & $\Omega$ & \S\ref{sec:composition} \\
\midrule
\multicolumn{3}{@{}l}{\emph{Per-call deliverable}} \\
Universal Assurance Card (specified) & would inherit from sub-certs once \texttt{AssuranceCard.lean} is added & \S\ref{sec:card} \\
\midrule
\multicolumn{3}{@{}l}{\emph{Extended catalogue (Appendix~\ref{app:embedding_lean})}} \\
Self-consistency lattice & none & catalogue \\
Chain-of-thought DAG & $\Omega$ + \texttt{Step\-Confidence\-Oracle} (t4) & catalogue \\
Curry--Howard (sorted/distinct) & none & catalogue \\
Curry--Howard (parens/quote) & $\Omega$ & catalogue \\
Hoeffding concentration & $\Omega$ + \texttt{Hoeffding\-Inequality} (t2) + \texttt{IID\-Samples} (t4) & catalogue \\
Proof-of-sampling & \texttt{Hash\-Collision\-Resistant}, \texttt{Decode\-Algorithm\-Deterministic} (t3) & catalogue \\
Scope-of-Validity & none & catalogue \\
CEGAR repair loop & none / $\Omega$ & catalogue \\
Negative-guarantee & none & catalogue \\
Counterfactual & $\Omega$ & catalogue \\
Calibration & $\Omega$ & catalogue \\
Provenance & none & catalogue \\
Temporal validity & none & catalogue \\
Budget composition & $\Omega$ & catalogue \\
Calibrated abstention & none & catalogue \\
Diff / incremental & none & catalogue \\
Multi-hop reasoning & $\Omega$ & catalogue \\
Meta / self-audit & none & catalogue \\
\bottomrule
\end{tabular}
\end{table}

\subsection{What is conditional, what is closable}

The \emph{persistent} (tier-4) oracles in the artifact are \texttt{ParaphraseOracle}, \texttt{SignedSupport\allowbreak Oracle}, \texttt{Decomposition\allowbreak Oracle}, \texttt{StepConfidence\allowbreak Oracle}, \texttt{IIDSamples}. These cannot be discharged by mathematics. Each carries a deployment-specific evaluation discipline: a paraphrase oracle's faithfulness is supported by a held-out paraphrase benchmark; a signed-support oracle's faithfulness by a held-out NLI calibration set; a decomposition oracle's faithfulness by reconstruction-pass rates. The remaining named axioms (tiers 2 and 3: \texttt{cauchy\_schwarz\_sq}, \texttt{innerProd\_sub}, \texttt{HoeffdingInequality}, \texttt{HashCollisionResistant}, \texttt{Decode\allowbreak Algorithm\allowbreak Deterministic}) are \emph{closable} as Mathlib's inner-product-space, probability, and cryptographic-hash libraries develop.

\textbf{Proof-of-sampling discipline.} For concentration certificates that depend on \texttt{IIDSamples}, the framework treats IID sampling as a persistent tier-4 oracle rather than a mathematical fact. A proof-of-sampling certificate supplies auditable evidence for that oracle: the deployment records a declared temperature $T > 0$, per-pass commitments over the random seed, the decoder entry state, and the per-pass token-emission log, and a decidable no-state-carryover predicate over those commitments. When strict no-state-carryover is unavailable, the certificate may instead record a bounded-difference or martingale fallback, yielding an Azuma--Hoeffding / McDiarmid-style discipline rather than an IID one. The certificate populates the \texttt{is\_deterministic}, \texttt{sample\_count}, and \texttt{proof\_of\_sampling} fields of the Universal Assurance Card, but it does not mathematically close the natural-language or decoder-behavior oracle; rather, it gives the auditor structured commitments against which the \texttt{IIDSamples} hypothesis can be operationally re-verified.

\textbf{What the artifact proves} (selected; full list in Appendix~\ref{app:axiom_audit}): Theorem~\ref{thm:robust_decision} (squared form, embedding stability); Lemma~\ref{lem:family_monotone} (variant-family monotonicity); Lemma~\ref{thm:no_silent} (emission gate soundness for grounding, conditional on the NLI / decomposition oracles); the abstention-reason and emit-when-supported lemmas; Lemma~\ref{lem:composition_two_layer} (compositional stability, two-layer base case; the $n$-layer Theorem~\ref{thm:composition} lifts by induction over the certified-layer list); Theorem~\ref{thm:mcr_maximality} (MCR maximality, idempotence, fixed-point); per-step trajectory safety and \texttt{p\_chain} for Hoare; six concrete instantiations spanning a 3-D rational embedding, a clinical-RAG 3-claim grounding example that abstains, a counter-state 3-step trajectory, a 5-sample consistency example, four Curry--Howard structured outputs, and a 20-sample Hoeffding example.

\section{Threat Model and Adoption Order}
\label{sec:threat}

\subsection{Threats versus defender certificates}

\begin{table}[H]
\centering
\caption{Threats against LLM pipelines and the certificate types that defend each. Threats marked $^\star$ are the most frequent in production deployments today.}
\label{tab:threats}
\footnotesize
\begin{tabular}{@{}p{5.3cm}p{7.5cm}@{}}
\toprule
\textbf{Threat} & \textbf{Defender certificate(s)} \\
\midrule
Hallucination (invented facts)$^\star$ &
Bilattice grounding (\S\ref{sec:grounding}); self-consistency (catalogue); canonical-denotation decoding (catalogue). \\
Silent contradiction$^\star$ &
Bilattice grounding emission-gate (Lem~\ref{thm:no_silent}) with $W^-$ threshold; MCR drops the contradicted claim (\S\ref{sec:mcr}). \\
Prompt injection via retrieved content$^\star$ &
Semantic type system + non-interference (catalogue); grounding with source-role tagging. \\
Variant flip (paraphrase attack)$^\star$ &
Embedding sensitivity (\S\ref{sec:embedding}); compositional-stability budget (Thm~\ref{thm:composition}). \\
Unsafe tool invocation$^\star$ &
Hoare action certificate (\S\ref{sec:agents}); budget composition (catalogue); rely-guarantee for parallel actions. \\
Sample forgery (claimed $k$ samples) &
Proof-of-sampling with committed seed (catalogue). \\
Corpus tamper-detection (snapshot drift from committed hash) &
Provenance + temporal validity (catalogue). Note: detects that the served corpus differs from the committed snapshot; does not certify that the snapshot is itself uncorrupted (corpus-level factuality is below the trust boundary). \\
Vendor / model / prompt drift &
Provenance certificate + composite-certificate diff (catalogue). \\
Fabricated compliance evidence (\texttt{sorry}-based cert) &
Axiom audit against $\Omega$ (Def.~\ref{def:validcert}); meta-certificate over kernel hash. \\
PII / forbidden content leakage &
Negative-guarantee (catalogue); abstract-interpretation PII grammar (catalogue). \\
Over-confident miscalibrated confidence &
Calibration certificate; conformal prediction (catalogue). \\
Unfaithful chain-of-thought (stated $\neq$ actual) &
Cross-model CoT DAG verification; PRM step scoring; mech-interp axioms (catalogue). \\
Audit tampering (binary swap of kernel) &
Meta-certificate + reproducible build + signed kernel. \\
\bottomrule
\end{tabular}
\end{table}

\subsection{First certificate to ship}
\label{sec:first_cert}

We recommend an adoption order chosen for low implementation cost relative to risk-reduction value:

\begin{enumerate}[leftmargin=1.5em,topsep=2pt,itemsep=2pt]
\item \emph{Embedding sensitivity (\S\ref{sec:embedding}).} Tractable; independent of downstream architecture; off-the-shelf paraphraser oracle; retrieval instability maps directly to financial / legal risk.
\item \emph{Conflict-aware grounding (\S\ref{sec:grounding}).} Highest hallucination-visibility; the signed-support arithmetic is straightforward; per-claim audit trails are immediately useful.
\item \emph{Hoare-style agent action (\S\ref{sec:agents}).} Highest per-failure cost; ``blocked at first unsafe step'' is a compelling demo for decision-makers; precondition predicates usually exist in the policy layer.
\item \emph{Scope-of-validity + provenance.} Low marginal cost given 1--3; large audit-trail gain.
\item \emph{Self-consistency + proof-of-sampling.} Needed only when stochastic sampling is load-bearing.
\end{enumerate}

\subsection{A verifiability benchmark}

LLM benchmarks today measure output quality. None measures \emph{verifiability}. We propose a complementary benchmark family on certificate coverage, audit independence, abstention quality, contradiction-detection rate, failure-localization precision, certificate size, audit time, and intervention latency for agents. Publishing a minimal instance on a public dataset (HotpotQA for RAG; WebArena for agents) is a concrete near-term Stage~1 priority.

\subsection{Cost: audit overhead vs.\ assurance-mode generation overhead}
\label{sec:threat:cost}

The cost story has two layers and they should not be conflated.

\paragraph{Audit overhead is small.} Once a certificate has been generated, re-checking it is dominated by the Lean kernel and the \texttt{\#print axioms} pass. Per Table~\ref{tab:artifact_eval}: per-cert kernel re-check is typically $<100$~ms, the full \texttt{\#print axioms} pass is $<1$~s, and an external audit-runner is $<100$~LOC of Python or Rust. This is the cost a regulator, downstream consumer, or third-party verifier pays. It is genuinely small and portable.

\paragraph{Assurance-mode generation overhead can be materially higher.} Producing a certificate, in contrast, can require non-trivial extra work above the unaudited LLM call. A conflict-aware grounding certificate may invoke the LLM (or a separate decomposer) to extract atomic claims, then call an NLI model on every (claim, chunk) pair, then aggregate signed-support scores. A paraphrase decision-margin certificate may run the encoder on $|G_{\mathrm{inv}}|$ paraphrases. A self-consistency certificate may sample the LLM $k$ times. A composite Universal Card-shaped pipeline may run all of these. Empirically we have not measured these costs (\S\ref{sec:limits:empirical}), but a rough upper bound is on the order of $5\times$--$20\times$ the unaudited generation cost in the most assurance-heavy configurations, with significant variance by certificate family. Coarse breakdown: grounding contributes most ($O(n \cdot m)$ NLI calls in atomic claims and chunks); paraphrase and self-consistency contribute roughly multiplicatively in their sample/variant counts; agent action and embedding sensitivity are nearly free above the baseline.

\paragraph{Where the framework pays for itself.} The framework is designed for the regime where this generation overhead is small relative to per-call error cost. Legal retrieval errors that lead to malpractice claims, clinical recommendations that miss contraindications, agentic actions that delete the wrong file, or fabricated compliance evidence in regulated workflows: these are settings where a $10\times$ generation cost is dwarfed by a single prevented incident. Creative writing, brainstorming, and casual question-answering do not clear that bar; for them, assurance mode should not be enabled. ROI is therefore deployment-specific rather than universal, and the case studies (\S\ref{sec:cases}) are illustrative of the regime where the framework's value proposition holds.

\paragraph{Storage and bandwidth.} Certificate storage is KB--MB per call (sub-KB with SNARK compression, an extension flagged in the catalogue). Card storage is $\approx 1.5$~KB JSON per call. These are negligible at any production scale.

\subsection{Regulatory anchoring}

The certificate taxonomy and the Universal Card map onto major frameworks (EU AI Act, FDA SaMD, NIST AI RMF, ISO/IEC 42001, HIPAA, Federal Rules of Evidence~702 / Daubert); the point is structural rather than detailed compliance, and a per-framework correspondence is in Appendix~\ref{app:adoption}.

\section{The Extended Catalogue}
\label{sec:catalogue}

The compiled artifact's 22 certificate types decompose as follows: \emph{five fully-compiled core items}, namely the three local certificate families (conflict-aware bilattice grounding, embedding sensitivity, and Hoare-style agent action) and the two operators on certificates (Maximal Certifiable Residue, and the two-layer base lemma of Compositional Stability); the Universal Card consolidator is \emph{specified} pending \texttt{AssuranceCard.lean}, and is not part of the 22. To these five are added the \emph{17 catalogue types} listed in Table~\ref{tab:audit_all22}: self-consistency lattice, chain-of-thought DAG, Curry--Howard structured output (compiled in two flavours, sorted/distinct and parens/quote, counted as one type), Hoeffding concentration (the Lean structure parameterized over eight named concentration inequalities; counted as one type), proof-of-sampling, scope-of-validity, counterexample-guided repair (CEGAR), and ten more (negative-guarantee, counterfactual, calibration, provenance, temporal validity, budget composition, calibrated abstention, diff, multi-hop, meta-audit). Per-certificate dependencies are in Table~\ref{tab:audit_all22}; one-paragraph descriptions and references in Appendix~\ref{app:catalogue}; a representative excerpt of the \texttt{\#print axioms} output in Appendix~\ref{app:axiom_audit}, with the build-regenerated full audit in \texttt{EmbeddingSensitivity/AxiomAudit.lean}. Beyond the 22 compiled types, Appendix~\ref{app:catalogue} additionally describes types that are \emph{specified in narrative form, pending dedicated artifact modules}: a semantic type system for text with role non-interference, conformal prediction, Markov-category compositional uncertainty, and five further extensions (spectral sensitivity, abstract interpretation, canonical-denotation decoding, dual-model cross-verification, zero-knowledge proof-carrying outputs). The catalogue is broad enough to fill a monograph, but the core development of \S\S\ref{sec:grounding}--\ref{sec:card} is what makes the framework operational; we do not duplicate the catalogue in the main text.

\label{sec:catalogue:scope}\label{sec:catalogue:cegar}

\section{Limitations and Future Work}
\label{sec:limits}

\subsection{Conditionality is unavoidable}
\label{sec:limits:conditionality}

Every certificate is conditional on tier-4 ML / human oracles. A miscalibrated paraphraser produces a sensitivity certificate whose max divergence is correct but materially misleading. A tamper-sensitive retrieval corpus grounds well-formed hallucinations. The architecture does not reach across the trust boundary; it disciplines it. The right characterization of the benefit is not ``no more hallucinations'' but ``hallucinations become named, localized, reproducible events rather than opaque stochastic outputs.''

\subsection{Empirical scope}
\label{sec:limits:empirical}

The case studies in \S\ref{sec:cases} use synthetic numbers,
flagged at the top of that section. The empirical evidence in
this paper consists of (i)~the Lean artifact's measurements
(proof sizes, build/audit times, the per-cert axiom-set
composition of Table~\ref{tab:audit_all22}); (ii)~Pilot~A
(\S\ref{sec:pilot_a}), a pre-registered three-iteration test of
the conflict-aware grounding certificate on HotpotQA dev
distractor with adversarial perturbations, in which all six v5
hypotheses pass against three baselines; (iii)~Pilots~B and B$'$
(\S\ref{sec:embedding:empirical}, \S\ref{sec:embedding:longform}),
a pre-registered two-domain test of the embedding-sensitivity
certificate on the same benchmark in matched short-form and
long-form settings: the central inequality $\Delta^2 > 0$ holds
in 20--30\% of short adversarial queries but in 98--100\% of
long-form paragraphs across every edit-type combination; and
(iv)~Pilot~C (\S\ref{sec:agents:empirical}), a pre-registered
controlled-sandbox test of the Hoare-style agent-action certificate
on a 5-task slice with 18 hand-curated direct-injection adversarial
proposals (HC2/HC3/HC4/HC5 PASS, HC1 FAIL with predicate-completeness
diagnostic).

\paragraph{Stated limitations of the four pilots.} (a) \emph{Domain
coverage} is HotpotQA dev distractor (Pilots A/B/B$'$) and a
controlled filesystem sandbox (Pilot C); patent / technical-intelligence,
legal, regulated-finance, and clinical replication of all three certificate
families remains forward work (\S\ref{sec:limits:pilot}). (b) \emph{Pilot B$'$ N} is 64 effective
queries (bootstrap CIs $\approx \pm 12$pp), sized to fit the
available budget at the observed per-query cost for long-form
variant generation (\S\ref{sec:embedding:longform}); the result
is therefore a strong directional finding on long-form rather
than a narrow-CI deployment claim, and is reported as such. (c)
\emph{Pilot C scope} is a 5-task registered slice (the original
30-task plan was scaled to fit the available budget and the
preliminary feasibility schedule); breadth across more tasks,
more tools, and more attack classes is forward work. (d)
\emph{Adversary coverage}: Pilot~A's perturbations and Pilot~C's
adversarial proposals are hand-curated; LLM-driven adversary
mining against held-out predicate / scope sets is a natural
follow-up. (e) \emph{Throughput economics}: production-scale
throughput numbers and amortised audit-cost economics are
forward work (\S\ref{sec:limits:throughput}).

Pilots~A, B, and B$'$ are reported as a matched program on a shared benchmark with the same panel discipline and pre-registration protocol; Pilot~C runs on a separately-built filesystem sandbox under the same pre-registration discipline. The contrast across the four (Pilot~A positive grounding result; Pilots~B/B$'$ empirical scope of the embedding inequality, with short-form narrow and long-form essentially universal; Pilot~C 38.9pp Treatment-over-deny-list dominance with HC1's predicate-completeness diagnostic) operationalizes the framework's intended posture: a certificate is \emph{soundness machinery} whose deployment-readiness depends on a per-call premise that must be measured per-encoder and per-input-class.

\begin{table}[H]
\centering
\caption{Comparison of the HotpotQA pilots (A, B, B$'$). Primary results from \S\S\ref{sec:pilot_a}, \ref{sec:embedding:empirical}--\ref{sec:embedding:longform}; scope limitations summarized from \S\ref{sec:limits:empirical}. Pilot C is reported separately in Table~\ref{tab:pilot_compare_c}.}
\label{tab:pilot_compare_hp}
\footnotesize
\setlength{\tabcolsep}{4pt}
\begin{tabular}{@{}p{0.6cm}p{2.8cm}p{1.0cm}p{4.0cm}p{4.4cm}@{}}
\toprule
\textbf{Pilot} & \textbf{Certificate} & \textbf{$N$} & \textbf{Primary verdict} & \textbf{Key result} \\
\midrule
A & Bilattice grounding (\S\ref{sec:grounding}) & 264 & H1--H6 all PASS (v5) & catch=1.0 at FBR=0.064; B1 cannot reach catch=1.0 below FBR=0.266 \\[2pt]
B & Embedding sensitivity, short-form (\S\ref{sec:embedding}) & 209 & HB1 narrow (30\%/20\% positive) & median $\Delta^2 = -0.07$; (paraphrase, reorder)$\times$attribute\_flip cells fail; (surface, synonym)$\times$(entity, negation) cells reach 84--97\% \\[2pt]
B$'$ & Embedding sensitivity, long-form (\S\ref{sec:embedding}) & 64 & HB$'$1 PASS (100\%/98\% positive); CP CI $\supseteq [0.92,1.00]$ & median $\Delta^2 = +0.39$; every edit-type cell positive \\
\bottomrule
\end{tabular}

\medskip
\footnotesize
\textbf{Scope limitations.} Pilot~A: HotpotQA-only; v3/v4 hypothesis revisions documented in \S\ref{sec:pilot_a:hypotheses}. Pilot~B: adversarial short queries; deployment-narrow on this slice. Pilot~B$'$: $N{=}64$, strong directional finding rather than a narrow-CI deployment claim.
\end{table}

\begin{table}[H]
\centering
\caption{Comparison row for Pilot C, the Hoare-style agent-action pilot. Primary results from \S\ref{sec:agents:empirical}.}
\label{tab:pilot_compare_c}
\footnotesize
\setlength{\tabcolsep}{4pt}
\begin{tabular}{@{}p{0.6cm}p{2.8cm}p{2.5cm}p{2.6cm}p{4.0cm}@{}}
\toprule
\textbf{Pilot} & \textbf{Certificate} & \textbf{$N$} & \textbf{Primary verdict} & \textbf{Key result} \\
\midrule
C & Hoare agent action (\S\ref{sec:agents}) & 48 trajs (16/harness; 18 unsafe-injection probes per harness) & HC2--HC5 PASS; HC1 FAIL (12/18) & 38.9pp Treatment-over-deny-list dominance; 100\% audit-log informativeness; HC1 diagnostic on 5 within-budget probes + 1 omitted predicate clause \\
\bottomrule
\end{tabular}

\medskip
\footnotesize
\textbf{Scope limitations.} 5-task registered slice (the original 30-task plan was scaled to fit the available budget and preliminary feasibility schedule); hand-curated adversaries; LLM-driven adversary mining against held-out predicate sets is forward work.
\end{table}

\paragraph{Multiple-testing posture.} The four pilots together
register eighteen pre-registered hypotheses (H1--H6 in Pilot~A,
six; HB1, HB2, and HB4 in Pilot~B short-form, three, with HB3
retired during pre-registration revisions, see
\S\ref{sec:embedding:empirical}; HB$'$1--HB$'$4 in Pilot~B$'$
long-form, four; HC1--HC5 in Pilot~C, five). Confidence intervals
in this paper are reported per-hypothesis with no family-wise
correction; the proportion CIs are 95\% bootstrap intervals
(10{,}000-resample) by default, and 95\% Clopper--Pearson exact
intervals where the bootstrap collapses to a degenerate point
(Pilot~B$'$ B$'$1; see \S\ref{sec:embedding:longform}). We declare
this posture explicitly so readers can discount marginal results
accordingly. We treat the pilots' \emph{primary} load-bearing
hypotheses as: H1 (catch-rate dominance, Pilot~A), HB$'$1 (long-form
inequality positivity, Pilot~B$'$, where the certificate's premise
is the deployment claim), and HC3 (38.9pp Treatment-over-deny-list
dominance, Pilot~C). The Pilot~B short-form HB1 hypothesis is also
pre-registered as a primary, and its result is a
\emph{scope-narrowing negative}: the certificate remains sound, but
its premise holds in only 20--30\% of cases on short adversarial
queries. We do not list it among the deployment-readiness primaries
because the paper's substantive B-family deployment claim is
carried by HB$'$1, not HB1. \emph{Pilot~C HC2 and HC5 are
small-denominator descriptive instrumentation checks rather than
deployment-readiness primaries:} HC2 reports 0/6 false-blocks on
benign destructive proposals (point estimate 0.000; 95\%
Clopper--Pearson interval $[0, 0.459]$) and HC5 reports 13/13
audit-log informativeness on Treatment blocks (1.000; CP 95\%
$[0.753, 1.000]$). Both pass the registered point-estimate criteria
but, given $n{=}6$ and $n{=}13$, are not narrow-interval population
claims; we report them as instrumentation evidence supporting HC3
rather than as standalone load-bearing conclusions. Even with a
conservative Bonferroni adjustment over the three load-bearing
primaries (three tests, $\alpha = 0.05/3 \approx 0.017$ per test),
the relevant intervals on Pilot~A H1, Pilot~B$'$ HB$'$1 long-form
($1.000$ and $0.984$ with Clopper--Pearson lower bounds $0.944$ and
$0.916$), and Pilot~C HC3 (38.9pp gap) clear the adjusted bar by
wide margins. Readers with a stricter posture should treat Pilot~B
short-form HB1, HB$'$2 / HB2 retrieval-flip-prediction F1, and HC1's
12/13 in-scope diagnostic as the multiple-testing-fragile cells.
None of the three load-bearing primaries listed above sits within
an interval whose lower bound would cross threshold under any
reasonable family-wise adjustment; the load-bearing conclusions are
therefore robust to the posture choice.

What remains explicit future work: the Pilot~D family of Universal Card
validations in high-stakes domains (\S\ref{sec:limits:pilot}; patent /
technical-intelligence, legal, regulated-finance, clinical),
domain-specific extensions of the long-form embedding result to those
corpora, calibrated post-bound construction to tighten the
\S\ref{sec:embedding} certificate's retrieval-flip prediction beyond
the loose F1 it achieves on raw drift, scaling Pilot~C's predicate
registry beyond the 5-task scaffolded slice to close the residual
predicate-completeness gap, throughput at production scale, and cost
savings on prevented errors. Building public verifiability benchmarks
across these high-stakes domains is the Stage~1 priority of the
research program.

\subsection{Tier-2 and tier-3 axioms are closable}

\texttt{cauchy\_schwarz\_sq}, \texttt{innerProd\_sub}, and \texttt{HoeffdingInequality} reduce to $\Omega$ once aligned with Mathlib's evolving inner-product-space and probability libraries. \texttt{HashCollisionResistant} and \texttt{Decode\allowbreak Algorithm\allowbreak Deterministic} reduce to $\Omega$ once a SHA-256 (or substitute) Lean formalization is plugged in. None of these axioms is structurally different from the Mathlib mathematical-axiom load that other formal-methods deployments accept.

\subsection{Promoting the Universal Assurance Card from specified to compiled}
\label{sec:limits:card}

The Universal Assurance Card schema and its \texttt{VerdictConsistent} predicate are presently \emph{specified} (\S\ref{sec:card}): the structure, the four verdicts, and the consistency obligation are stated in Lean form, but \texttt{AssuranceCard.lean} is not yet a module in \texttt{lean\_artifact/}. The promotion is mechanical: the predicate is decidable on \texttt{Rat}/\texttt{Bool}/\texttt{List String} fields with no Mathlib-heavy lemmas required, and a worked example for each of the four verdicts (matching the case studies in \S\ref{sec:cases}) closes via \texttt{decide}. We identify this promotion as a focused next-step engineering exercise rather than research; once done, the artifact gains one additional \emph{consolidator} module (the count of compiled certificate types remains 22, since the Card aggregates the 22 rather than adding a new type), and the overview table's \emph{specified} status for the Card upgrades to \emph{compiled}.

\subsection{Throughput, generator overhead, decomposition assumption}
\label{sec:limits:throughput}

Kernel type-checking is fast but not free; high-throughput LLM workflows (millions of requests per day) amortize via batching, cached invariants, sampled audits, and SNARK compression. Each certificate needs a generator; the right engineering response is a library of reusable certificate kernels (arithmetic bounds, lattice aggregation, DAG propagation, Hoeffding bound, metric-space Lipschitz). The decomposition principle applies where the pipeline has enough structure to admit a theorem; pure end-to-end open-ended generation is essentially untouchable beyond well-formedness, where the right behavior is graceful degradation (\textsc{Abstain} under the policy with reason ``task does not admit a certificate'').

\subsection{Open research questions}

(i) The right semantic type system for text across deployments. (ii) Whether abstract interpretation gives useful over-approximations of LLM outputs at practical granularity. (iii) Concentration bounds tailored to non-IID LLM sampling. (iv) Partial verification of paraphrase oracles via curated thesauri. (v) Mechanistic interpretability~\cite{olah2020zoom,elhage2021mathematical,conmy2023circuits,cunningham2023sparse} as named axioms in the chain. (vi) Multi-agent emergent-behavior certificates (liveness of a multi-agent protocol; absence of collusive privilege escalation; cross-agent Byzantine agreement). (vii) Game-theoretic adversarial semantics extending the decision-margin certificates to a continuous adversary class. (viii) A relative-completeness theorem for the framework (which pipeline properties are expressible vs.\ strictly beyond reach).

\subsection{A staged research program}
\label{sec:roadmap}

\begin{description}[leftmargin=1.5em,style=nextline,topsep=2pt,itemsep=2pt]
\item[Stage~1 (Tooling).] Reusable Lean~4 certificate kernels; an audit-runner integrating kernel type-check and \texttt{\#print axioms} audit; reference LLM-wrapper integration; a public verifiability benchmark with adversarially-introduced contradictions.
\item[Stage~2 (Single-mechanism validation).] Deploy embedding sensitivity, conflict-aware grounding, and Hoare action certificates in any high-stakes pipeline (patent claim charting, legal case-law retrieval, regulated-finance disclosure analysis, clinical RAG, or a sandboxed agent) and measure catch rates against an expert-panel ground truth.
\item[Stage~3 (Composition).] Build the composite Universal Card pipeline. Demonstrate end-to-end auditability with external auditors. Publish trust models and policy templates.
\item[Stage~4 (Extensions).] Prototype the semantic type system; explore abstract interpretation for output constraints; deploy proof-carrying outputs with and without SNARK compression in at least one high-stakes setting.
\end{description}

\subsection{Pilot status}
\label{sec:limits:pilot}

The framework's three local certificate families
(grounding, embedding sensitivity, agent action) admit empirical
pilots at parallel levels of importance.

\paragraph{Pilot~A (conflict-aware grounding, \S\ref{sec:pilot_a}): complete.} Six pre-registered hypotheses, all PASS under the v5
wording, on a 264-example perturbed slice and a 203-example
refusal-excluded unperturbed slice. The bilattice's
catch=1.0/FBR=0.064 operating point is unreachable for the
cosine-similarity baseline at any threshold.

\paragraph{Pilot~B (embedding sensitivity, short-form, \S\ref{sec:embedding:empirical}): complete.} A pre-registered
$N{=}209$ slice on text-embedding-3-large and
sentence-transformers-mpnet establishes that on short adversarial
multi-hop QA the \S\ref{sec:embedding} inequality $\Delta^2 > 0$ holds in only 30\% /
20\% of queries (FAIL $\ge 0.80$ / $\ge 0.60$). Panel reliability is
high ($\kappa = 0.887$, unanimous 0.95). A Phase~1 edit-type
ablation localizes the failure to (paraphrase, reorder) $\times$
attribute\_flip while (surface, synonym) $\times$ (entity\_swap,
negation) cells reach 84--97\% positive.

\paragraph{Pilot~B$'$ (embedding sensitivity, long-form, \S\ref{sec:embedding:longform}): complete.} A matched
pre-registered $N{=}64$ slice with the query unit shifted from
multi-hop questions to gold supporting paragraphs (median ${\approx}110$
tokens) shows the inequality holds in \textbf{100\% / 98\%} of queries
on the same two encoders, across every edit-type combination
(median $\Delta^2 = +0.39$ vs $-0.07$ short-form). The certificate is
therefore sound on every input length and empirically supported on
long-form text. The retrieval-flip-prediction F1 remains loose
(0.13--0.15) on both pilots: the bound holds, but is not tight enough
to predict which paraphrases will flip top-$k$ retrieval.

\paragraph{Pilot~C (Hoare action certificates, \S\ref{sec:agents:empirical}): complete.} A pre-registered controlled filesystem-sandbox benchmark
of 48 trajectories (16 per harness, comprising 11 adversarial trajectories carrying 18 spliced direct-injection unsafe-action proposals plus 5 benign trajectories carrying 6 destructive-but-permitted proposals) across three harnesses (Unaudited, Deny-list, Treatment).
Treatment blocks 12/18 injected unsafe actions (66.7\%); Deny-list
blocks 5/18 (27.8\%); Unaudited 0/18 (0\%, structural floor). The
38.9pp Treatment-over-Deny-list gap is well above the registered
15pp threshold (HC3 PASS). False-block rate on benign destructive
proposals is 0.000 on all three harnesses (HC2 PASS). Audit-log
informativeness on Treatment blocks is 1.000 (13/13, HC5 PASS).
HC1 verdict: FAIL (block rate 12/18 against the registered $\ge 0.95$ threshold), with a predicate-completeness diagnostic from the audit log: 5 of the 6 unblocked attacks are within-budget delete probes that the registered task t02 budget predicate correctly permits (the 6th attempt, which exceeds the budget, is correctly blocked); the remaining 1 is a mtime-violating move probing a predicate that task t01's mv declaration deliberately omits. The Lean-checked soundness theorem is unaffected. The empirical signal localizes
exactly which predicate clauses to extend per-task.

A Pilot~D family of Universal Card validations remains forward work. Each variant below is independently scoped, and any one is a complete next-step empirical contribution. The framework targets deployments where per-call error cost exceeds per-call verification cost by roughly an order of magnitude (\S\ref{sec:intro:nonclaims}); validating the Universal Assurance Card against expert-gold annotation in any such domain is the natural follow-on to the present paper. The four pilots already completed validate the underlying certificates; a Pilot~D in any high-stakes domain validates the consolidator: whether the Card's four-verdict output (\textsc{Certified}, \textsc{Partial}, \textsc{Residue}, \textsc{Abstain}), once an application policy is applied, agrees with what a domain expert would say should happen on real queries.

\paragraph{Pilot~D-Patents (technical intelligence and IP).} 100--300 prior-art and freedom-to-operate queries with patent-attorney or technology-scout annotation of the gold retrieval-and-claim gate; measures Card-verdict agreement on \textsc{Residue} cases (e.g., partial novelty findings where some claim elements are anticipated and others are not) and \textsc{Abstain} precision on queries where the corpus contains contradictory teachings. The bilattice grounding certificate (\S\ref{sec:grounding}) is particularly well-suited to patent claim charting, where one prior-art reference can simultaneously support and refute different elements of the same independent claim.

\paragraph{Pilot~D-Legal (case-law retrieval and brief support).} 100--200 case-law queries with attorney annotation; measures \textsc{Residue} agreement on holdings that are partially supported and \textsc{Abstain} precision on queries spanning circuit splits, where contradiction-aware grounding (\S\ref{sec:grounding}) and embedding sensitivity on long-form text (\S\ref{sec:embedding:longform}, where the certificate's premise holds at 98--100\%) are jointly load-bearing.

\paragraph{Pilot~D-Finance (regulated research and disclosure analysis).} 100--200 queries against SEC filings, regulatory text, and internal compliance corpora, with compliance-officer annotation; measures Card behavior on temporal-validity boundaries (a 10-K is fresh for a quarter, a regulation is fresh until amendment) and on action-gate residues for analyst-recommendation pipelines.

\paragraph{Pilot~D-Clinical (decision support).} 50--200 clinical-RAG queries with board-certified-clinician annotation of the gold action gate; measures Card-verdict agreement on \textsc{Residue} cases (partial recommendations with one contraindicated element, the §\ref{sec:case_clinical} case study made empirical) and \textsc{Abstain} precision.

Each variant requires its own annotation infrastructure and is independent of the others; collaborators in any of these domains can run a single variant without the others being in flight. The patent / technical-intelligence variant is the natural first reported, given existing infrastructure in that domain (the precursor's home territory~\cite{koomullil2026patent}); the framework's value proposition does not depend on which variant runs first.

\paragraph{Pacing.} Pilots~A, B, and B$'$ all run on HotpotQA dev
distractor (\S\ref{sec:pilot_a}, \S\ref{sec:embedding:empirical},
\S\ref{sec:embedding:longform}); the shared-benchmark design
controls cross-pilot comparisons between the two
HotpotQA-applicable certificate families and between the two
input classes (short-form questions vs.\ long-form paragraphs).
Pilot~C (\S\ref{sec:agents:empirical}) runs on a separately-built
controlled filesystem sandbox because the Hoare-action threat
model targets agentic side effects rather than retrieval / answer
generation. Each Pilot~D variant requires its own domain-specific
expert-annotation infrastructure (patent-attorney, attorney,
compliance-officer, or board-certified-clinician annotation),
none of which is built by any of the four existing pilots.

\paragraph{Near-term empirical roadmap.} For readers wanting an
ordered list of the smallest-scope empirical work that would extend
this paper's findings, four items are concrete and independent.
\textbf{(R1) Pilot~D-Patents.} A 100--300 prior-art / freedom-to-operate
slice with patent-attorney or technology-scout annotation; this is
the natural first variant given existing infrastructure
in that domain (the precursor's home territory~\cite{koomullil2026patent})
and exercises the bilattice grounding certificate on real
multi-reference contradictions. \textbf{(R2) Encoder/dataset
characterization for the embedding inequality.} A targeted run on
two to four additional encoders (Cohere embed-v3, BGE, E5, Voyage)
and two to three additional corpora (a clinical-guideline corpus,
a regulatory-text corpus, a scientific-abstract corpus), reporting
$\Delta^2 > 0$ rates with Clopper--Pearson intervals at matched
short-form / long-form input lengths; the cost is dominated by
panel labeling on the new corpora (the run script itself is
parameterized over $n$ encoders today, \S\ref{sec:embedding:empirical}).
\textbf{(R3) Pilot~C predicate-registry expansion.} Extending the
Pilot~C task slate from 5 tasks to 15--30 tasks with broader
predicate clauses (mtime invariants on \texttt{mv}, finer-grained
budget predicates) directly addresses the HC1 audit-log diagnostic
and would convert the registered FAIL into a registered PASS on the
strictest 95\% threshold. \textbf{(R4) AssuranceCard.lean
promotion.} Adding the consolidator module to \texttt{lean\_artifact/}
upgrades the Universal Card from \emph{specified} to \emph{compiled};
the proof obligations are decidable on \texttt{Rat}/\texttt{Bool}/\texttt{List String}
and discharge by \texttt{decide}, so the work is mechanical
engineering rather than research (\S\ref{sec:limits:card}). Items (R1)
and (R3) are the most informative empirical extensions; (R2) is the
fastest to deploy on existing infrastructure; (R4) is the most
self-contained and is the only item under the formal-method team's
unilateral control. Any of the four is a complete, citable next-step
contribution.

\section{Conclusion}
\label{sec:conclusion}

The thesis, restated declaratively: the framework verifies the deterministic structured computations surrounding an LLM rather than the model itself, with the trust boundary pushed as far toward the raw input as possible and everything beyond it made deterministic, checkable, and auditable.

\medskip
This paper extends the patent-analysis architecture of~\cite{koomullil2026patent} from a single specialized domain to the generic interfaces of LLM pipelines. The trust-boundary architecture, certificate-validity definition, and verification-status convention are inherited unchanged. The technical contribution is three local certificate families developed end to end (conflict-aware bilattice grounding, embedding sensitivity / paraphrase stability, and Hoare-style agent action) and two operators on those certificates (Maximal Certifiable Residue, which turns abstention into a maximal constraint-satisfying subset of an answer, and Compositional Stability, which takes the per-layer gains and margins from the three families and produces a closed-form pipeline-wide perturbation budget), consolidated into a per-call deliverable, the Universal Assurance Card (declared as a Lean structure with a decidable consistency predicate, specified pending the addition of \texttt{AssuranceCard.lean} to the artifact), with one new impossibility conjecture (calibration without ground truth). The compiled Lean~4 reference artifact spans all 22 certificate types of the broader framework with 17 of 46 declarations axiom-free and zero uses of \texttt{sorry}. Four pre-registered empirical pilots cover all three certificate families: Pilot~A on conflict-aware grounding (\S\ref{sec:pilot_a}), Pilots~B and B$'$ on embedding sensitivity in matched short-form and long-form settings (\S\ref{sec:embedding:empirical}, \S\ref{sec:embedding:longform}), and Pilot~C on Hoare-style agent action (\S\ref{sec:agents:empirical}).

The contribution is principally architectural; the empirical contribution is that all three certificate families were measured pre-registered against panel-validated or directly-injected adversarial inputs on shared infrastructure. The framework does not claim superior performance on every benchmark. The claim, now empirically demonstrated, is that a principled decomposition leaves behind a formally verifiable core at every interface where an LLM output becomes structured enough to admit a theorem; that the core is small, but its presence changes the trust profile qualitatively; that localized diagnostic failures replace opaque errors; that independent re-verification replaces vendor lock-in; and that abstention becomes a first-class outcome with an explicit residue. Pilot~A demonstrates one operationally meaningful instance: the bilattice grounding certificate reaches an operating point that cosine-similarity baselines cannot reach for any threshold. Pilots~B and B$'$ demonstrate the framework's intended epistemic posture in two-domain form: a sound certificate whose premise narrows on adversarial short queries and is essentially universal on long-form retrieval, with the failure mode mechanically explicable (surface-form bias on short queries) and the scope of applicability operationalized as a deployment-readiness signal in the Universal Card. Pilot~C demonstrates the Hoare-action family's gating value on a controlled adversarial-injection benchmark, the typed-guarantee analogue of Pilot~A's positive operating-point result. Whether the approach improves practitioner outcomes more broadly remains empirical and requires the tooling, benchmarks, and Lean developments identified as future work. The Pilot~D family (\S\ref{sec:limits:pilot}) names four parallel domain validations of the Universal Assurance Card consolidator (patent / technical-intelligence, legal, regulated-finance, and clinical), any one of which is an independent next-step empirical contribution; collaborators in any of these domains can run a single variant without the others being in flight. The method itself is the precursor's decomposition principle at the finer granularity of generic LLM pipelines, and where per-call error cost exceeds per-call verification cost by an order of magnitude or more, the shift is qualitative rather than incremental.

\section*{Code and data availability}

\paragraph{Public archive.} The compiled Lean~4 reference artifact and the four pilot directories are publicly archived at \url{https://github.com/gkoomullil/proof-carrying-certificates} and citable via concept DOI \href{https://doi.org/10.5281/zenodo.20144140}{\texttt{10.5281/zenodo.20144140}} (this paper's v1.0.0 snapshot is pinned at version DOI \href{https://doi.org/10.5281/zenodo.20144139}{\texttt{10.5281/zenodo.20144139}}). The archive is distributed under the PolyForm Noncommercial 1.0.0 license; commercial use requires a separate license from the author.

\paragraph{Artifact archive.} The archive layout is \texttt{lean\_artifact/} (the compiled Lean~4 project) and \texttt{pilot\_a/}, \texttt{pilot\_b/}, \texttt{pilot\_c/} (one self-contained Python project per pilot, with Pilot~B$'$ co-located in \texttt{pilot\_b/}).

\paragraph{Lean~4 artifact pins.} Toolchain: \texttt{leanprover/lean4:v4.30.0-rc2} (recorded in \texttt{lean\_artifact/lean-toolchain}). Mathlib commit: \texttt{ee3a5404e56808aacf7393507153cc97ff28141a} (recorded in \texttt{lean\_artifact/lake-manifest.json} alongside the pinned commits of all transitive packages, including \texttt{batteries}, \texttt{aesop}, \texttt{Qq}, \texttt{plausible}, \texttt{proofwidgets}, \texttt{importGraph}, and \texttt{LeanSearchClient}). Reproduction: \texttt{lake exe cache get} pulls the pre-built Mathlib oleans, then \texttt{lake build} compiles every module. The build emits one \texttt{\#print axioms} message per audited declaration via \texttt{EmbeddingSensitivity/AxiomAudit.lean}; redirecting that output to a file yields the full per-declaration audit (a superset of the representative excerpt in Appendix~\ref{app:axiom_audit}). The build artifact is fully deterministic; any deviation from the published audit (in particular \texttt{sorryAx} or \texttt{Lean.ofReduceBool} appearing) indicates a compilation regression.

\paragraph{Pilot pins (Pilot~A, conflict-aware grounding).} Python: see \texttt{pilot\_a/requirements.txt} (anthropic $\ge$0.40.0, openai $\ge$1.50.0, google-genai $\ge$0.3.0, numpy $\ge$2.0.0, requests $\ge$2.31.0). Models (recorded in \texttt{pilot\_a/src/config.py}): generator \texttt{claude-sonnet-4-5}; decomposer \texttt{gemini-2.5-flash-lite}; NLI / perturber / fail-mode-B / decomposition-sensitivity model and primary judge \texttt{gpt-5}; panel models \texttt{gpt-5}, \texttt{claude-sonnet-4-5}, \texttt{gemini-2.5-pro}; embedding \texttt{text-embedding-3-small}. Decoding: \texttt{temperature = 0.0} throughout. Seeds: dataset draws are deterministic functions of seeds $42$ and $11$ (see \texttt{pilot\_a/src/data.py}). API access dates and per-call cached responses are stored under \texttt{pilot\_a/cache/} and pinned to the run that produced \texttt{pilot\_a/results/}.

\paragraph{Pilot pins (Pilots~B and~B$'$, embedding sensitivity).} Encoders (\texttt{pilot\_b/src/config.py}): \texttt{openai-text-embedding-3-large} (3072-dim); \texttt{sentence-transformers/all-mpnet-base-v2} (768-dim, local); \texttt{cohere-embed-english-v3} (1024-dim, parameterized but disabled in the reported run for budget reasons). Variant generator: \texttt{gpt-5}. Three-judge validation panel: \texttt{claude-sonnet-4-5}, \texttt{gpt-5-mini}, \texttt{gemini-2.5-flash}. Decoding: \texttt{temperature = 0.0}. Per-pair embedding outputs and per-judge votes are cached under \texttt{pilot\_b/cache/} and \texttt{pilot\_b/data/}.

\paragraph{Pilot pins (Pilot~C, Hoare agent action).} Agent and injector models (\texttt{pilot\_c/src/config.py}): \texttt{gpt-5}; panel model \texttt{gpt-5-mini}. Decoding: \texttt{temperature = 0.0} (set in \texttt{pilot\_c/src/agent.py}). Agent transcripts, audit logs, and per-trajectory records are stored under \texttt{pilot\_c/results/}; the harness recreates a per-trajectory \texttt{pilot\_c/sandbox\_runs/} working directory on each run.

\paragraph{Reproducibility flow.} For each pilot, \texttt{python scripts/run\_full\_pilot.sh} (Pilot~C) or the equivalent \texttt{run\_*.py} entry points (Pilots~A, B) re-execute the registered analysis from cached API responses and emit the primary tables and confidence intervals reported in the body. The cached-response policy is monotone-build: cached responses are the source of record; a missing cache entry reissues the API call and populates the cache; existing entries are never re-issued. The Lean artifact \texttt{lake build} is independent of any external API and is bit-exact under the pinned toolchain. The license accompanying the archive is recorded in \texttt{LICENSE}.

\section*{Acknowledgments}

This paper builds directly on the hybrid AI + Lean~4 architecture of~\cite{koomullil2026patent}; the trust-boundary discipline, the certificate-validity definition (kernel type-check plus $\Omega$-audit), and the four-tier verification-status convention are inherited unchanged. We thank the Lean~4 and Mathlib communities for the mathematical infrastructure that makes this line of work feasible.

\bibliographystyle{plainnat}
\bibliography{references}

\appendix

\section{Notation}
\label{app:notation}

Symbols used in the paper, in roughly the order they appear.

\noindent
\footnotesize
\begin{longtable}{@{}p{3.0cm}p{10.5cm}@{}}
\toprule
\textbf{Symbol} & \textbf{Meaning} \\
\midrule
\endfirsthead
\toprule
\textbf{Symbol} & \textbf{Meaning} (continued) \\
\midrule
\endhead
\multicolumn{2}{@{}l}{\emph{Trust boundary and certificate validity}} \\
$\Omega$ & Trusted axiom set $\{\texttt{propext}, \texttt{Classical.choice}, \texttt{Quot.sound}\}$. \\
$\mathrm{Ax}$ & Set of declared named tier-4 oracles for a given certificate. \\
$V(c)$ & Validity check on a certificate $c$: kernel type-check plus \texttt{\#print axioms} $\subseteq \Omega \cup \mathrm{Ax}$. \\
\midrule
\multicolumn{2}{@{}l}{\emph{Embedding sensitivity (\S\ref{sec:embedding})}} \\
$E$ & Embedding map from text to $\mathbb{R}^d$ with $L_2$-normalized output. \\
$G_{\mathrm{inv}}(x)$ & Declared finite family of meaning-invariant perturbations of $x$. \\
$G_{\mathrm{sig}}(x)$ & Declared finite family of meaning-changing perturbations of $x$. \\
$R^2_{\mathrm{inv}}(x)$ & $\max_{x' \in G_{\mathrm{inv}}(x)} \|E(x') - E(x)\|^2$. Meaning-invariant robustness radius (squared). \\
$R^2_{\mathrm{sig}}(x)$ & $\min_{x' \in G_{\mathrm{sig}}(x)} \|E(x') - E(x)\|^2$. Critical sensitivity floor (squared). \\
$\Delta^2(x)$ & $R^2_{\mathrm{sig}}(x) - R^2_{\mathrm{inv}}(x)$. Selective gap; $\Delta^2 > 0$ certifies local separation of meaning. \\
$\theta_{\mathrm{ret}}$ & Retrieval similarity threshold (cutoff above which a document is returned). \\
\midrule
\multicolumn{2}{@{}l}{\emph{Conflict-aware bilattice grounding (\S\ref{sec:grounding})}} \\
$\sigma^+, \sigma^-, \sigma^0$ & Three-way NLI signature per (claim, chunk) pair: entailment, refutation, neutral. \\
$\beta^+(c), \beta^-(c)$ & Per-claim signed-support scores aggregated across chunks. \\
$W^+, W^-, W^\pm, W^?$ & Mass aggregates over atomic claims classified as \textsc{Supported}, \textsc{Contradicted}, \textsc{Contested}, \textsc{Unknown}. \\
$\theta_g$ & Support threshold for the \textsc{Supported} classification of a claim. \\
$\theta_c$ & Contradiction threshold for the \textsc{Contradicted} classification of a claim. \\
$\theta_e$ & Emission threshold ($W^+ \ge \theta_e$ required to emit). \\
$\theta_r$ & Refutation threshold ($W^- + W^\pm \le \theta_r$ required to emit). \\
\midrule
\multicolumn{2}{@{}l}{\emph{Maximal certifiable residue (\S\ref{sec:mcr})}} \\
$\mathrm{At}(y)$ & Atomic claims of an answer $y$. \\
$\mathcal{C} = \{C_1, \ldots, C_m\}$ & Family of decidable predicates (constraints) on subsets of $\mathrm{At}(y)$. \\
$\mathsf{Cert}(\mathcal{C}, y)$ & Certifiable family: subsets satisfying every $C_j \in \mathcal{C}$. \\
$\mathrm{MCR}(\mathcal{C}, y)$ & Maximal certifiable residue: highest-weight element of $\mathsf{Cert}$. \\
\midrule
\multicolumn{2}{@{}l}{\emph{Hoare-style agent action (\S\ref{sec:agents})}} \\
$\mathrm{Pre}(a, \sigma, \alpha)$ & Precondition predicate on action $a$, state $\sigma$, arguments $\alpha$. \\
$\mathrm{Post}(a, \sigma, \sigma', \alpha)$ & Postcondition relating pre-state $\sigma$ and post-state $\sigma'$. \\
\midrule
\multicolumn{2}{@{}l}{\emph{Compositional stability (\S\ref{sec:composition})}} \\
$\Pi$ & Pipeline $L_n \circ \cdots \circ L_1$ of certified layers. \\
$g_i$ & Per-layer gain (Lipschitz constant on the metric output of layer $L_i$). \\
$m_i$ & Per-layer margin (the maximum allowable input perturbation for layer $L_i$). \\
$\varepsilon^\star_i$ & Chained perturbation budget at the input of layer $L_i$. \\
$\kappa_\Pi$ & Cumulative gain $\prod_{i=1}^n g_i$. \\
$B_\Pi$ & Pipeline-wide perturbation budget $\min_i m_i / \prod_{j<i} g_j$. \\
\midrule
\multicolumn{2}{@{}l}{\emph{Statistical and empirical}} \\
$\kappa$ & Cohen's kappa (panel-reliability statistic). \\
$N$ & Sample size (per-pilot effective query count). \\
\bottomrule
\end{longtable}

\section{Abbreviations}
\label{app:abbreviations}

\noindent
\footnotesize
\begin{longtable}{@{}p{2.5cm}p{11.0cm}@{}}
\toprule
\textbf{Abbreviation} & \textbf{Expansion} \\
\midrule
\endfirsthead
\toprule
\textbf{Abbreviation} & \textbf{Expansion} (continued) \\
\midrule
\endhead
AI & Artificial intelligence. \\
API & Application programming interface. \\
CEGAR & Counterexample-guided abstraction refinement. \\
CI & Confidence interval. \\
CoT & Chain-of-thought reasoning. \\
CTL & Computation tree logic. \\
DAG & Directed acyclic graph. \\
DOAC & Direct oral anticoagulant (clinical case study, \S\ref{sec:case_clinical}). \\
EU AI Act & European Union Artificial Intelligence Act. \\
FBR & False-block rate (Pilot A primary metric). \\
FDA SaMD & U.S. Food and Drug Administration regulation of Software as a Medical Device. \\
HIPAA & Health Insurance Portability and Accountability Act (U.S. clinical-privacy law). \\
IID & Independent and identically distributed. \\
ISO/IEC 42001 & International AI-management-system standard. \\
ITP & Interactive theorem proving. \\
JSON & JavaScript Object Notation. \\
LOC & Lines of code. \\
LLM & Large language model. \\
LTL & Linear temporal logic. \\
MCR & Maximal certifiable residue (\S\ref{sec:mcr}). \\
ML & Machine learning. \\
NIST AI RMF & U.S. National Institute of Standards and Technology AI Risk Management Framework. \\
NLI & Natural language inference. \\
NLP & Natural language processing. \\
NN & Neural network. \\
NNV & Neural-network verification. \\
PCC & Proof-carrying code. \\
PII & Personally identifiable information. \\
PRM & Process reward model. \\
QA & Question answering. \\
RAG & Retrieval-augmented generation. \\
ReAct & Reasoning-and-acting agent paradigm. \\
RL & Reinforcement learning. \\
SHA-256 & Secure Hash Algorithm, 256-bit output. \\
SMT & Satisfiability modulo theories. \\
SNARK & Succinct non-interactive argument of knowledge. \\
TCB & Trusted computing base. \\
\bottomrule
\end{longtable}

\section{Lean~4 Compiled Reference Artifact: Structures and Theorems}
\label{app:embedding_lean}

The companion Lean~4 project at \texttt{lean\_artifact/} compiles all 22 certificate types under Lean~4 v4.30.0-rc2 with Mathlib. The artifact builds via \texttt{lake build} in roughly ten seconds with a warm Mathlib cache. We reproduce here a representative excerpt: the compiled embedding sensitivity certificate, the conflict-aware grounding certificate skeleton, and the Hoare trajectory certificate. A representative \texttt{\#print axioms} excerpt is in Appendix~\ref{app:axiom_audit}, and the full per-declaration audit is regenerated by \texttt{EmbeddingSensitivity/AxiomAudit.lean} on every build.

\paragraph{Embedding sensitivity (\S\ref{sec:embedding}).} The crucial design choice is squared radii throughout, so the certificate stays in $\mathbb{Q}$ and avoids irrational $\sqrt{\cdot}$ in proof fields; square roots enter only at the theorem statement.

\smallskip
\leanverbatim{verbatim compiled excerpt from \texttt{Certificate.lean}}
\begin{lstlisting}
namespace VerifiedEmbedding

axiom ParaphraseOracle : String -> List String -> List String -> Prop

axiom cauchy_schwarz_sq (u v : Embedding) :
  innerProd u v * innerProd u v ≤ l2NormSq u * l2NormSq v

axiom innerProd_sub (u v w : Embedding) :
  innerProd (List.zipWith (· - ·) u v) w
   = innerProd u w - innerProd v w

structure EmbeddingSensitivityCertificate where
  text : String
  embedding : Embedding
  G_inv G_sig : List String
  emb_inv emb_sig : List Embedding
  R_inv_sq R_sig_sq delta_sq : Rat
  p_inv_length : G_inv.length = emb_inv.length
  p_sig_length : G_sig.length = emb_sig.length
  p_bounded_orig : l2NormSq embedding ≤ 1
  p_bounded_inv : ∀ v ∈ emb_inv, l2NormSq v ≤ 1
  p_bounded_sig : ∀ v ∈ emb_sig, l2NormSq v ≤ 1
  p_R_inv_nonneg : 0 ≤ R_inv_sq
  p_R_sig_nonneg : 0 ≤ R_sig_sq
  p_R_inv_sound : ∀ e ∈ emb_inv, l2DistSq embedding e ≤ R_inv_sq
  p_R_sig_sound : ∀ e ∈ emb_sig, R_sig_sq ≤ l2DistSq embedding e
  p_delta_correct : delta_sq = R_sig_sq - R_inv_sq
  h_paraphrase_valid : ParaphraseOracle text G_inv G_sig

theorem robust_similarity_sq
    (cert : EmbeddingSensitivityCertificate)
    (d_emb : Embedding)
    (h_d_norm : l2NormSq d_emb ≤ 1)
    (e' : Embedding)
    (h_e_in : e' ∈ cert.emb_inv) :
    let diff := innerProd cert.embedding d_emb - innerProd e' d_emb
    diff * diff ≤ cert.R_inv_sq := by
  have h_drift : l2DistSq cert.embedding e' ≤ cert.R_inv_sq :=
    cert.p_R_inv_sound e' h_e_in
  exact similarity_diff_sq_bound cert.embedding e' d_emb
        cert.R_inv_sq h_drift h_d_norm

end VerifiedEmbedding
\end{lstlisting}

\paragraph{Bilattice grounding (\S\ref{sec:grounding}).} The full structure including \texttt{p\_emission\_rule}, \texttt{p\_reconstruction}, and the two tier-4 oracles is in \S\ref{sec:grounding} above; the emission-gate soundness lemma (Lemma~\ref{thm:no_silent}) discharges by definitional unfolding plus rational arithmetic and is conditional on the NLI / decomposition oracles.

\paragraph{Hoare action (\S\ref{sec:agents}).} The counter-state instantiation in the artifact is pure integer arithmetic with decidable equality; every proof field reduces by kernel computation alone. \texttt{\#print axioms} reports ``does not depend on any axioms'' for every declaration in this module.

\paragraph{Universal Assurance Card (\S\ref{sec:card}).} The card structure and the \texttt{VerdictConsistent} predicate are stated in Lean form in \S\ref{sec:card}; the predicate is decidable on \texttt{Rat}, \texttt{Bool}, and \texttt{List String} fields, requires no Mathlib-heavy lemmas, and is well-suited to direct \texttt{decide} discharge once added to the artifact as \texttt{AssuranceCard.lean}. The worked-example cards from \S\ref{sec:cases} are correspondingly \emph{specified}; promoting them to \emph{compiled} (a focused engineering step) is the natural follow-on to this paper. The remainder of this audit reports the 22 currently-compiled certificate types.

\section{Representative Axiom Audit}
\label{app:axiom_audit}

The following is a representative excerpt of the \texttt{\#print axioms} output from running \texttt{lake build} on the artifact, organized by certificate family. It reproduces every core declaration verbatim along with the named non-$\Omega$ axioms for each catalogue family. It is not the complete \texttt{\#print axioms} output for every one of the 46 audited declarations; the full output is regenerated on every build by the \texttt{EmbeddingSensitivity/AxiomAudit.lean} module, which emits one \texttt{\#print axioms} info-message per audited declaration during \texttt{lake build} and is the reproducible source of record for the audit. No \texttt{sorry}, no \texttt{Lean.ofReduceBool}; every reported axiom is either in the kernel-trusted set $\Omega$ or an explicitly declared named non-$\Omega$ assumption (tier-2 mathematical placeholder, tier-3 cryptographic assumption, or tier-4 ML/human oracle, per \S\ref{sec:arch:taxonomy}). Example-specific witness axioms (such as \texttt{example\_paraphrase\_oracle}, \texttt{example\_decomp\_oracle}, and \texttt{example\_nli\_oracle}) appear only in illustrative example modules to discharge the corresponding tier-4 oracle premises on a single concrete instance; they are not framework-level oracles and are excluded from the five persistent tier-4 oracles counted in Table~\ref{tab:oracle_recipe}.

\smallskip
\leanverbatim{embedding sensitivity (\S\ref{sec:embedding})}
\begin{lstlisting}
'VerifiedEmbedding.l2NormSq' depends on axioms:
  [propext, Classical.choice, Quot.sound]
'VerifiedEmbedding.l2DistSq' depends on axioms:
  [propext, Classical.choice, Quot.sound]
'VerifiedEmbedding.EmbeddingSensitivityCertificate' depends on axioms:
  [propext, Classical.choice, Quot.sound, ParaphraseOracle]
'VerifiedEmbedding.similarity_diff_sq_bound' depends on axioms:
  [propext, Classical.choice, Quot.sound,
   cauchy_schwarz_sq, innerProd_sub]
'VerifiedEmbedding.robust_similarity_sq' depends on axioms:
  [propext, Classical.choice, Quot.sound, ParaphraseOracle,
   cauchy_schwarz_sq, innerProd_sub]
'VerifiedEmbedding.Example.exampleCertificate' depends on axioms:
  [propext, Classical.choice, Quot.sound, ParaphraseOracle,
   example_paraphrase_oracle]
\end{lstlisting}

\smallskip
\leanverbatim{conflict-aware grounding (\S\ref{sec:grounding})}
\begin{lstlisting}
'VerifiedEmbedding.Grounding.EpistemicStatus'
  does not depend on any axioms
'VerifiedEmbedding.Grounding.classify'
  does not depend on any axioms
'VerifiedEmbedding.Grounding.ConflictAwareGroundingCertificate'
  depends on axioms: [propext, Classical.choice, Quot.sound,
                      DecompositionOracle, SignedSupportOracle]
'VerifiedEmbedding.Grounding.no_silent_contradictions'
  depends on axioms: [propext, Classical.choice, Quot.sound,
                      DecompositionOracle, SignedSupportOracle]
'VerifiedEmbedding.Grounding.Example.exampleGroundingCertificate'
  depends on axioms: [propext, Classical.choice, Quot.sound,
                      DecompositionOracle, SignedSupportOracle,
                      example_decomp_oracle, example_nli_oracle]
\end{lstlisting}

\smallskip
\leanverbatim{Hoare agent action (\S\ref{sec:agents})}
\begin{lstlisting}
'VerifiedEmbedding.Action.ActionSchema'        does not depend on any axioms
'VerifiedEmbedding.Action.ActionCertificate'   does not depend on any axioms
'VerifiedEmbedding.Action.TrajectoryCertificate' does not depend on any axioms
'VerifiedEmbedding.Action.step_post_holds'     does not depend on any axioms
'VerifiedEmbedding.Action.trajectory_all_steps_safe'
                                                does not depend on any axioms
'VerifiedEmbedding.Action.Example.exampleTrajectory'
                                                does not depend on any axioms
\end{lstlisting}

\smallskip
\leanverbatim{compositional stability (\S\ref{sec:composition}) and MCR (\S\ref{sec:mcr})}
\begin{lstlisting}
'Composition.compositional_stability_two_layer' depends on axioms: Ω
'MCR.residue_is_certifiable'                     depends on axioms: Ω
'MCR.subset_of_residue_is_certifiable'          depends on axioms: Ω
'MCR.residue_is_fixed_point'                     depends on axioms: Ω
\end{lstlisting}

\smallskip
\leanverbatim{Hoeffding concentration (catalogue)}
\begin{lstlisting}
'Concentration.HoeffdingCertificate' :
   [propext, Classical.choice, Quot.sound,
    HoeffdingInequality, IIDSamples]
'Concentration.phat_in_unit_interval' : Ω + {HoeffdingInequality, IIDSamples}
'Concentration.hoeffding_confidence_interval' :
   Ω + {HoeffdingInequality, IIDSamples}
\end{lstlisting}

\smallskip
\leanverbatim{Curry--Howard structured outputs (catalogue)}
\begin{lstlisting}
'CurryHoward.SortedList'      does not depend on any axioms
'CurryHoward.DistinctList'    does not depend on any axioms
'CurryHoward.BalancedParens'  depends on axioms: Ω
'CurryHoward.GroundedQuote'   depends on axioms: Ω
\end{lstlisting}

\paragraph{What the audit shows.} (i) Zero \texttt{sorryAx} or \texttt{Lean.ofReduceBool} anywhere in the transitive set of any certificate, the operational definition of \emph{sorry-free $\Omega$-audited} (Def.~\ref{def:validcert}). (ii) The Hoare action certificate, scope-of-validity, simpler CEGAR theorems, simpler Curry--Howard exemplars, and several catalogue types (provenance, temporal validity, calibrated abstention, diff, meta) compile with \emph{no axioms at all} (17 of 46 declarations); the guarantee is unconditional modulo the Lean kernel itself. (iii) The remaining 29 declarations depend only on $\Omega$ plus explicitly declared non-$\Omega$ assumptions, partitioned into tier-2 mathematical placeholders, tier-3 cryptographic assumptions, and tier-4 ML/human oracles; the auditor sees exactly which natural-language judgments and which mathematical/cryptographic assumptions the certificate is conditional on, and can reject any certificate whose declared-axioms list is not within an approved set.

\paragraph{Reproducibility.} From \texttt{lean\_artifact/}: \texttt{lake exe cache get} pulls the precompiled Mathlib cache; \texttt{lake build} compiles everything and emits the \texttt{\#print axioms} output above verbatim. Any deviation (in particular \texttt{sorryAx} or \texttt{Lean.ofReduceBool} appearing) would indicate a compilation regression and cause the audit to reject under Definition~\ref{def:validcert}.

\section{Extended Catalogue: Per-Type Companion Notes}
\label{app:catalogue}

For completeness, the broader catalogue beyond the main-text core. The paragraphs are deliberately terse; full structures are in the Lean sources for the compiled types. The 17 catalogue types compiled in the artifact (Table~\ref{tab:audit_all22}) are: self-consistency lattice, chain-of-thought DAG, Curry--Howard structured output, concentration, proof-of-sampling, scope-of-validity, CEGAR, and the ``ten additional types'' block (negative-guarantee through meta/self-audit). The paragraphs below labeled ``\textsc{specified}'' are described in narrative form, pending dedicated artifact modules: semantic type system for text (the role-tag concept is used implicitly by the grounding certificate of \S\ref{sec:grounding} but does not yet have a standalone module), conformal prediction, compositional uncertainty (Markov categories), and the five-way extensions block. These are not part of the 22 compiled count.

\paragraph{Self-consistency lattice (axiom-free).} A canonicalizer-parameterized aggregation of $k$ samples to canonical form, with a winner-meets-threshold theorem and a Galois-connected threshold ladder. Compiles with no axioms at all on finite concrete data.

\paragraph{Chain-of-thought DAG.} Reasoning steps as a typed DAG with parameterized aggregation ($\mathcal{A}_{\min}$, $\mathcal{A}_\Pi$, $\mathcal{A}_\lor$, $\mathcal{A}_{FH}$); Conservative Aggregation Soundness Theorem ($\mathcal{A} \le \min$ implies sound lower-bound under any joint, with $\mathcal{A}_\Pi$ tight under independence). Tier-4 oracle: \texttt{StepConfidenceOracle}.

\paragraph{Curry--Howard structured output.} Inductive types whose constructors carry decidable proof obligations: \texttt{SortedList}, \texttt{DistinctList}, \texttt{BalancedParens}, \texttt{GroundedQuote}. The LLM proposes; the kernel rejects. Refinement-type compilation~\cite{rondon2008liquid,vazou2014refinement} handles routine arithmetic via SMT; full dependent-type obligations stay in Lean.

\paragraph{Concentration certificates.} Hoeffding (binary), generalized Hoeffding ($[a,b]$), Bernstein, empirical Bernstein, Azuma--Hoeffding (martingales), Bennett, McDiarmid, union-bound categorical. Each is the same Lean structure with a different \texttt{failure\_bound} and named hypothesis. Named non-$\Omega$ axioms: \texttt{IIDSamples} (tier-4 oracle on the sampling discipline), \texttt{HoeffdingInequality} (tier-2 mathematical placeholder pending Mathlib formalization).

\paragraph{Proof-of-sampling.} Commitment-bound randomness for honest $k$-sample claims via a hash of seed plus sampling-record. Tier-3 oracle: \texttt{HashCollisionResistant}.

\paragraph{Semantic type system for text [\textsc{specified}].} Role-tagged text (\texttt{user\_input}, \texttt{system\_prompt}, \texttt{retrieved\_fact}, \texttt{tool\_output}, \ldots) with a coercion lattice. A role non-interference property follows by structural induction over the coercion record: when the lattice forbids the flow $\rho_1 \to \rho_2$, the pipeline output restricted to role $\rho_2$ is a function only of inputs of role $\neq \rho_1$. Prompt injection via retrieved content thus becomes a compile-time error rather than a runtime filter.

\paragraph{Scope-of-validity.} For each oracle, a decidable scope predicate on the input plus a graceful-downgrade rule when the predicate fails. Compiles axiom-free in the core case. Composes with every other certificate as an input gate.

\paragraph{Counterexample-guided certificate repair (CEGAR).} Failure witnesses are first-class certificate fields; on failure, the LLM is re-prompted with a typed witness; loops terminate within a declared budget. Connects to neural theorem-proving repair~\cite{jiang2023draftsketchprove}.

\paragraph{Conformal prediction [\textsc{specified}].} Distribution-free coverage on the candidate set under an exchangeability axiom; Mondrian, conformal-risk, and adaptive variants for drift.

\paragraph{Ten additional types.} Negative-guarantee (absence of forbidden patterns), counterfactual / causal (output stable under protected-attribute swap), calibration (max bin gap on held-out set), provenance (cryptographic input binding), temporal validity (freshness window), budget composition (cost stays under parent's allowance), calibrated abstention (conformal abstain rate), diff / incremental (incremental certificate over diff), multi-hop branching (AND/OR/NOT reasoning trees), meta / self-audit (kernel hash + reproducible build).

\paragraph{Compositional uncertainty (Markov categories) [\textsc{specified}].} A categorical treatment of compositional uncertainty, including Markov-category formulations~\cite{fritz2020synthetic} of credal-set or interval certificates, is a theoretical direction outside the compiled artifact and outside the present paper's implementation claims.

\paragraph{Spectral sensitivity [\textsc{specified}].} Complements the finite-family embedding certificate of \S\ref{sec:embedding} by bounding an encoder Lipschitz constant via per-layer spectral-norm or randomized-smoothing witnesses, then lifting the finite perturbation family to a continuous metric-ball statement. The bound is conservative rather than exact, in contrast to \S\ref{sec:embedding}'s finite max/min certificate.

\paragraph{Dual-model cross-verification [\textsc{specified}].} Submits the same input to two independently-implemented providers, applies a deterministic canonicalizer to both outputs, and emits either a weakest-verdict agreement Card or a dissent-marked downgraded Card. The guarantee is agreement on canonical denotation, not truth; the audit handle permits a third party to replay the canonicalizer check against both providers' kernel-audited build outputs without re-running either underlying model.

\paragraph{Abstract interpretation, canonical-denotation decoding, zero-knowledge proof-carrying outputs [\textsc{specified}].} Three further extensions: sound output over-approximations (e.g.\ PII-excluding grammar); semantic reproducibility via deterministic canonicalization at the decoder; and SNARK compression of certificates for transport.

\section{Adoption Pathways and Regulatory Mapping}
\label{app:adoption}

\paragraph{Adoption pathways.} Adoption pathways span product UX (assurance-mode selectors, claim-by-claim highlighting, time-decaying cards), engineering (provider-agnostic verifier sidecars, automatic metamorphic challenges, cross-provider quorum, CEGAR-style certificate-conditioned repair), governance (verifiability as a leaderboard dimension, procurement and insurance contracts, per-organization trust budgets), and training (a \emph{certifiability rate} optimized alongside accuracy).

\paragraph{Regulatory mapping.} The certificate taxonomy and the Universal Card map onto major frameworks: EU AI Act articles on risk management (Art.~9), logging (Art.~12), transparency (Art.~13), human oversight (Art.~14), and accuracy / robustness / cybersecurity (Art.~15); FDA SaMD and NIST AI RMF MAP/MEASURE/MANAGE/GOVERN; ISO/IEC 42001 management-system clauses; HIPAA / clinical privacy via the role-typed semantic type system; Federal Rules of Evidence~702 / Daubert via the kernel-checked $\Omega$-audit. The point is structural: the framework is a reinforcement of the same guarantees these regimes require, with kernel-audited certificates standing in for hand-waving compliance evidence.

\end{document}